\title[Tractability and Relevance of MF-RL]{When is Mean-Field Reinforcement Learning Tractable and Relevant?}
\author{Batuhan Yardim}
\affiliation{
  \institution{ETH Zürich}
  \city{Zürich}
  \country{Switzerland}}
\email{yardima@ethz.ch}
\author{Artur Goldman}
\affiliation{
  \institution{HSE University}
  \city{Moscow}
  \country{Russia}}
\email{agoldman@hse.ru}
\author{Niao He}
\affiliation{
  \institution{ETH Zürich}
  \city{Zürich}
  \country{Switzerland}}
\email{niao.he@inf.ethz.ch}
\begin{abstract}
Mean-field reinforcement learning has become a popular theoretical framework for efficiently approximating large-scale multi-agent reinforcement learning (MARL) problems exhibiting symmetry.
However, questions remain regarding the applicability of mean-field approximations: in particular, their approximation accuracy of real-world systems and conditions under which they become computationally tractable.
We establish explicit finite-agent bounds for how well the MFG solution approximates the true $N$-player game for two popular mean-field solution concepts.
Furthermore, for the first time, we establish explicit lower bounds indicating that MFGs are poor or uninformative at approximating $N$-player games assuming only Lipschitz dynamics and rewards.
Finally, we analyze the computational complexity of solving MFGs with only Lipschitz properties and prove that they are in the class of \textsc{PPAD}-complete problems conjectured to be intractable, similar to general sum $N$ player games.
Our theoretical results underscore the limitations of MFGs and complement and justify existing work by proving difficulty in the absence of common theoretical assumptions.
\end{abstract}
\keywords{Mean-Field Games; Computational Complexity; Approximation}
\newcommand{\safemath}[2]{\newcommand{#1}{\ensuremath{#2}\xspace}}
\safemath{\bma}{\mathbf{a}}
\safemath{\bmb}{\mathbf{b}}
\safemath{\bmc}{\mathbf{c}}
\safemath{\bmd}{\mathbf{d}}
\safemath{\bme}{\mathbf{e}}
\safemath{\bmf}{\mathbf{f}}
\safemath{\bmg}{\mathbf{g}}
\safemath{\bmh}{\mathbf{h}}
\safemath{\bmi}{\mathbf{i}}
\safemath{\bmj}{\mathbf{j}}
\safemath{\bmk}{\mathbf{k}}
\safemath{\bml}{\mathbf{l}}
\safemath{\bmm}{\mathbf{m}}
\safemath{\bmn}{\mathbf{n}}
\safemath{\bmo}{\mathbf{o}}
\safemath{\bmp}{\mathbf{p}}
\safemath{\bmq}{\mathbf{q}}
\safemath{\bmr}{\mathbf{r}}
\safemath{\bms}{\mathbf{s}}
\safemath{\bmt}{\mathbf{t}}
\safemath{\bmu}{\mathbf{u}}
\safemath{\bmv}{\mathbf{v}}
\safemath{\bmw}{\mathbf{w}}
\safemath{\bmx}{\mathbf{x}}
\safemath{\bmy}{\mathbf{y}}
\safemath{\bmz}{\mathbf{z}}
\safemath{\bmzero}{\mathbf{0}}
\safemath{\bmone}{\mathbf{1}}
\safemath{\bmpi}{\pmb{\pi}}
\safemath{\bmalpha}{\pmb{\alpha}}
\bmdefine{\biad}{a}
\bmdefine{\bibd}{b}
\bmdefine{\bicd}{c}
\bmdefine{\bidd}{d}
\bmdefine{\bied}{e}
\bmdefine{\bifd}{f}
\bmdefine{\bigd}{g}
\bmdefine{\bihd}{h}
\bmdefine{\biid}{i}
\bmdefine{\bijd}{j}
\bmdefine{\bikd}{k}
\bmdefine{\bild}{l}
\bmdefine{\bimd}{m}
\bmdefine{\bind}{n}
\bmdefine{\biod}{o}
\bmdefine{\bipd}{p}
\bmdefine{\biqd}{q}
\bmdefine{\bird}{r}
\bmdefine{\bisd}{s}
\bmdefine{\bitd}{t}
\bmdefine{\biud}{u}
\bmdefine{\bivd}{v}
\bmdefine{\biwd}{w}
\bmdefine{\bixd}{x}
\bmdefine{\biyd}{y}
\bmdefine{\bizd}{z}
\bmdefine{\bixid}{\xi}
\bmdefine{\bilambdad}{\lambda}
\bmdefine{\bimud}{\mu}
\bmdefine{\binud}{\nu}
\bmdefine{\bithetad}{\theta}
\bmdefine{\biomegad}{\omega}
\bmdefine{\biphid}{\phi}
\safemath{\bmia}{\biad}
\safemath{\bmib}{\bibd}
\safemath{\bmic}{\bicd}
\safemath{\bmid}{\bidd}
\safemath{\bmie}{\bied}
\safemath{\bmif}{\bifd}
\safemath{\bmig}{\bigd}
\safemath{\bmih}{\bihd}
\safemath{\bmii}{\biid}
\safemath{\bmij}{\bijd}
\safemath{\bmik}{\bikd}
\safemath{\bmil}{\bild}
\safemath{\bmim}{\bimd}
\safemath{\bmin}{\bind}
\safemath{\bmio}{\biod}
\safemath{\bmip}{\bipd}
\safemath{\bmiq}{\biqd}
\safemath{\bmir}{\bird}
\safemath{\bmis}{\bisd}
\safemath{\bmit}{\bitd}
\safemath{\bmiu}{\biud}
\safemath{\bmiv}{\bivd}
\safemath{\bmiw}{\biwd}
\safemath{\bmix}{\bixd}
\safemath{\bmiy}{\biyd}
\safemath{\bmiz}{\bizd}
\safemath{\bmxi}{\bixid}
\safemath{\bmlambda}{\bilambdad}
\safemath{\bmmu}{\bimud}
\safemath{\bmnu}{\binud}
\safemath{\bmtheta}{\bithetad}
\safemath{\bmomega}{\biomegad}
\safemath{\bmphi}{\biphid}
\safemath{\bA}{\mathbf{A}}
\safemath{\bB}{\mathbf{B}}
\safemath{\bC}{\mathbf{C}}
\safemath{\bD}{\mathbf{D}}
\safemath{\bE}{\mathbf{E}}
\safemath{\bF}{\mathbf{F}}
\safemath{\bG}{\mathbf{G}}
\safemath{\bH}{\mathbf{H}}
\safemath{\bI}{\mathbf{I}}
\safemath{\bJ}{\mathbf{J}}
\safemath{\bK}{\mathbf{K}}
\safemath{\bL}{\mathbf{L}}
\safemath{\bM}{\mathbf{M}}
\safemath{\bN}{\mathbf{N}}
\safemath{\bO}{\mathbf{O}}
\safemath{\bP}{\mathbf{P}}
\safemath{\bQ}{\mathbf{Q}}
\safemath{\bR}{\mathbf{R}}
\safemath{\bS}{\mathbf{S}}
\safemath{\bT}{\mathbf{T}}
\safemath{\bU}{\mathbf{U}}
\safemath{\bV}{\mathbf{V}}
\safemath{\bW}{\mathbf{W}}
\safemath{\bX}{\mathbf{X}}
\safemath{\bY}{\mathbf{Y}}
\safemath{\bZ}{\mathbf{Z}}
\safemath{\bZero}{\mathbf{0}}
\safemath{\bOne}{\mathbf{1}}
\safemath{\bDelta}{\mathbf{\Delta}}
\safemath{\bLambda}{\mathbf{\UpLambda}}
\safemath{\bPhi}{\mathbf{\Upphi}}
\safemath{\bSigma}{\mathbf{\Upsigma}}
\safemath{\bOmega}{\mathbf{\Upomega}}
\safemath{\bTheta}{\mathbf{\Uptheta}}
\bmdefine{\biAd}{A}
\bmdefine{\biBd}{B}
\bmdefine{\biCd}{C}
\bmdefine{\biDd}{D}
\bmdefine{\biEd}{E}
\bmdefine{\biFd}{F}
\bmdefine{\biGd}{G}
\bmdefine{\biHd}{H}
\bmdefine{\biId}{I}
\bmdefine{\biJd}{J}
\bmdefine{\biKd}{K}
\bmdefine{\biLd}{L}
\bmdefine{\biMd}{M}
\bmdefine{\biOd}{N}
\bmdefine{\biPd}{O}
\bmdefine{\biQd}{P}
\bmdefine{\biRd}{R}
\bmdefine{\biSd}{S}
\bmdefine{\biTd}{T}
\bmdefine{\biUd}{U}
\bmdefine{\biVd}{V}
\bmdefine{\biWd}{W}
\bmdefine{\biXd}{X}
\bmdefine{\biYd}{Y}
\bmdefine{\biZd}{Z}
\bmdefine{\biDelta}{\Delta}
\bmdefine{\biLambda}{\Lambda}
\bmdefine{\biPhi}{\Phi}
\bmdefine{\biSigma}{\Sigma}
\bmdefine{\biOmega}{\Omega}
\bmdefine{\biTheta}{\Theta}
\safemath{\bimA}{\biAd}
\safemath{\bimB}{\biBd}
\safemath{\bimC}{\biCd}
\safemath{\bimD}{\biDd}
\safemath{\bimE}{\biEd}
\safemath{\bimF}{\biFd}
\safemath{\bimG}{\biGd}
\safemath{\bimH}{\biHd}
\safemath{\bimI}{\biId}
\safemath{\bimJ}{\biJd}
\safemath{\bimK}{\biKd}
\safemath{\bimL}{\biLd}
\safemath{\bimM}{\biMd}
\safemath{\bimN}{\biNd}
\safemath{\bimO}{\biOd}
\safemath{\bimP}{\biPd}
\safemath{\bimQ}{\biQd}
\safemath{\bimR}{\biRd}
\safemath{\bimS}{\biSd}
\safemath{\bimT}{\biTd}
\safemath{\bimU}{\biUd}
\safemath{\bimV}{\biVd}
\safemath{\bimW}{\biWd}
\safemath{\bimX}{\biXd}
\safemath{\bimY}{\biYd}
\safemath{\bimZ}{\biZd}
\safemath{\bimDelta}{\biDelta}
\safemath{\bimLambda}{\biLambda}
\safemath{\bimPhi}{\biPhi}
\safemath{\bimSigma}{\biSigma}
\safemath{\bimOmega}{\biOmega}
\safemath{\bimTheta}{\biTheta}
\safemath{\setA}{\mathcal{A}}
\safemath{\setB}{\mathcal{B}}
\safemath{\setC}{\mathcal{C}}
\safemath{\setD}{\mathcal{D}}
\safemath{\setE}{\mathcal{E}}
\safemath{\setF}{\mathcal{F}}
\safemath{\setG}{\mathcal{G}}
\safemath{\setH}{\mathcal{H}}
\safemath{\setI}{\mathcal{I}}
\safemath{\setJ}{\mathcal{J}}
\safemath{\setK}{\mathcal{K}}
\safemath{\setL}{\mathcal{L}}
\safemath{\setM}{\mathcal{M}}
\safemath{\setN}{\mathcal{N}}
\safemath{\setO}{\mathcal{O}}
\safemath{\setP}{\mathcal{P}}
\safemath{\setQ}{\mathcal{Q}}
\safemath{\setR}{\mathcal{R}}
\safemath{\setS}{\mathcal{S}}
\safemath{\setT}{\mathcal{T}}
\safemath{\setU}{\mathcal{U}}
\safemath{\setV}{\mathcal{V}}
\safemath{\setW}{\mathcal{W}}
\safemath{\setX}{\mathcal{X}}
\safemath{\setY}{\mathcal{Y}}
\safemath{\setZ}{\mathcal{Z}}
\safemath{\emptySet}{\varnothing}
\safemath{\colA}{\mathscr{A}}
\safemath{\colB}{\mathscr{B}}
\safemath{\colC}{\mathscr{C}}
\safemath{\colD}{\mathscr{D}}
\safemath{\colE}{\mathscr{E}}
\safemath{\colF}{\mathscr{F}}
\safemath{\colG}{\mathscr{G}}
\safemath{\colH}{\mathscr{H}}
\safemath{\colI}{\mathscr{I}}
\safemath{\colJ}{\mathscr{J}}
\safemath{\colK}{\mathscr{K}}
\safemath{\colL}{\mathscr{L}}
\safemath{\colM}{\mathscr{M}}
\safemath{\colN}{\mathscr{N}}
\safemath{\colO}{\mathscr{O}}
\safemath{\colP}{\mathscr{P}}
\safemath{\colQ}{\mathscr{Q}}
\safemath{\colR}{\mathscr{R}}
\safemath{\colS}{\mathscr{S}}
\safemath{\colT}{\mathscr{T}}
\safemath{\colU}{\mathscr{U}}
\safemath{\colV}{\mathscr{V}}
\safemath{\colW}{\mathscr{W}}
\safemath{\colX}{\mathscr{X}}
\safemath{\colY}{\mathscr{Y}}
\safemath{\colZ}{\mathscr{Z}}
\safemath{\opA}{\mathbb{A}}
\safemath{\opB}{\mathbb{B}}
\safemath{\opC}{\mathbb{C}}
\safemath{\opD}{\mathbb{D}}
\safemath{\opE}{\mathbb{E}}
\safemath{\opF}{\mathbb{F}}
\safemath{\opG}{\mathbb{G}}
\safemath{\opH}{\mathbb{H}}
\safemath{\opI}{\mathbb{I}}
\safemath{\opJ}{\mathbb{J}}
\safemath{\opK}{\mathbb{K}}
\safemath{\opL}{\mathbb{L}}
\safemath{\opM}{\mathbb{M}}
\safemath{\opN}{\mathbb{N}}
\safemath{\opO}{\mathbb{O}}
\safemath{\opP}{\mathbb{P}}
\safemath{\opQ}{\mathbb{Q}}
\safemath{\opR}{\mathbb{R}}
\safemath{\opS}{\mathbb{S}}
\safemath{\opT}{\mathbb{T}}
\safemath{\opU}{\mathbb{U}}
\safemath{\opV}{\mathbb{V}}
\safemath{\opW}{\mathbb{W}}
\safemath{\opX}{\mathbb{X}}
\safemath{\opY}{\mathbb{Y}}
\safemath{\opZ}{\mathbb{Z}}
\safemath{\opZero}{\mathbb{O}}
\safemath{\identityop}{\opI}
\safemath{\veca}{\bma}
\safemath{\vecb}{\bmb}
\safemath{\vecc}{\bmc}
\safemath{\vecd}{\bmd}
\safemath{\vece}{\bme}
\safemath{\vecf}{\bmf}
\safemath{\vecg}{\bmg}
\safemath{\vech}{\bmh}
\safemath{\veci}{\bmi}
\safemath{\vecj}{\bmj}
\safemath{\veck}{\bmk}
\safemath{\vecl}{\bml}
\safemath{\vecm}{\bmm}
\safemath{\vecn}{\bmn}
\safemath{\veco}{\bmo}
\safemath{\vecp}{\bmmp}
\safemath{\vecq}{\bmq}
\safemath{\vecr}{\bmr}
\safemath{\vecs}{\bms}
\safemath{\vect}{\bmt}
\safemath{\vecu}{\bmu}
\safemath{\vecv}{\bmv}
\safemath{\vecw}{\bmw}
\safemath{\vecx}{\bmx}
\safemath{\vecy}{\bmy}
\safemath{\vecz}{\bmz}
\safemath{\veczero}{\bmzero}
\safemath{\vecone}{\bmone}
\safemath{\vecxi}{\bmxi}
\safemath{\veclambda}{\bmlambda}
\safemath{\vecmu}{\bmmu}
\safemath{\vecnu}{\bmnu}
\safemath{\vecomega}{\bmomega}
\safemath{\vectheta}{\bmtheta}
\safemath{\vecphi}{\bmphi}
\safemath{\vecpi}{\bmpi}
\safemath{\vecalpha}{\bmalpha}
\safemath{\matA}{\bA}
\safemath{\matB}{\bB}
\safemath{\matC}{\bC}
\safemath{\matD}{\bD}
\safemath{\matE}{\bE}
\safemath{\matF}{\bF}
\safemath{\matG}{\bG}
\safemath{\matH}{\bH}
\safemath{\matI}{\bI}
\safemath{\matJ}{\bJ}
\safemath{\matK}{\bK}
\safemath{\matL}{\bL}
\safemath{\matM}{\bM}
\safemath{\matN}{\bN}
\safemath{\matO}{\bO}
\safemath{\matP}{\bP}
\safemath{\matQ}{\bQ}
\safemath{\matR}{\bR}
\safemath{\matS}{\bS}
\safemath{\matT}{\bT}
\safemath{\matU}{\bU}
\safemath{\matV}{\bV}
\safemath{\matW}{\bW}
\safemath{\matX}{\bX}
\safemath{\matY}{\bY}
\safemath{\matZ}{\bZ}
\safemath{\matzero}{\bmzero}
\safemath{\matDelta}{\bDelta}
\safemath{\matLambda}{\bLambda}
\safemath{\matPhi}{\bPhi}
\safemath{\matSigma}{\bSigma}
\safemath{\matOmega}{\bOmega}
\safemath{\matTheta}{\bTheta}
\safemath{\matidentity}{\matI}
\safemath{\matone}{\matO}
\safemath{\rnda}{A}
\safemath{\rndb}{B}
\safemath{\rndc}{C}
\safemath{\rndd}{D}
\safemath{\rnde}{E}
\safemath{\rndf}{F}
\safemath{\rndg}{G}
\safemath{\rndh}{H}
\safemath{\rndi}{I}
\safemath{\rndj}{J}
\safemath{\rndk}{K}
\safemath{\rndl}{L}
\safemath{\rndm}{M}
\safemath{\rndn}{N}
\safemath{\rndo}{O}
\safemath{\rndp}{P}
\safemath{\rndq}{Q}
\safemath{\rndr}{R}
\safemath{\rnds}{S}
\safemath{\rndt}{T}
\safemath{\rndu}{U}
\safemath{\rndv}{V}
\safemath{\rndw}{W}
\safemath{\rndx}{X}
\safemath{\rndy}{Y}
\safemath{\rndz}{Z}
\safemath{\rveca}{\bimA}
\safemath{\rvecb}{\bimB}
\safemath{\rvecc}{\bimC}
\safemath{\rvecd}{\bimD}
\safemath{\rvece}{\bimE}
\safemath{\rvecf}{\bimF}
\safemath{\rvecg}{\bimG}
\safemath{\rvech}{\bimH}
\safemath{\rveci}{\bimI}
\safemath{\rvecj}{\bimJ}
\safemath{\rveck}{\bimK}
\safemath{\rvecl}{\bimL}
\safemath{\rvecm}{\bimM}
\safemath{\rvecn}{\bimN}
\safemath{\rveco}{\bomO}
\safemath{\rvecp}{\bimP}
\safemath{\rvecq}{\bimQ}
\safemath{\rvecr}{\bimR}
\safemath{\rvecs}{\bimS}
\safemath{\rvect}{\bimT}
\safemath{\rvecu}{\bimU}
\safemath{\rvecv}{\bimV}
\safemath{\rvecw}{\bimW}
\safemath{\rvecx}{\bimX}
\safemath{\rvecy}{\bimY}
\safemath{\rvecz}{\bimZ}
\safemath{\rvecxi}{\bmxi}
\safemath{\rveclambda}{\bmlambda}
\safemath{\rvecmu}{\bmmu}
\safemath{\rvectheta}{\bmtheta}
\safemath{\rvecphi}{\bmphi}
\safemath{\rmatA}{\bimA}
\safemath{\rmatB}{\bimB}
\safemath{\rmatC}{\bimC}
\safemath{\rmatD}{\bimD}
\safemath{\rmatE}{\bimE}
\safemath{\rmatF}{\bimF}
\safemath{\rmatG}{\bimG}
\safemath{\rmatH}{\bimH}
\safemath{\rmatI}{\bimI}
\safemath{\rmatJ}{\bimJ}
\safemath{\rmatK}{\bimK}
\safemath{\rmatL}{\bimL}
\safemath{\rmatM}{\bimM}
\safemath{\rmatN}{\bimN}
\safemath{\rmatO}{\bimO}
\safemath{\rmatP}{\bimP}
\safemath{\rmatQ}{\bimQ}
\safemath{\rmatR}{\bimR}
\safemath{\rmatS}{\bimS}
\safemath{\rmatT}{\bimT}
\safemath{\rmatU}{\bimU}
\safemath{\rmatV}{\bimV}
\safemath{\rmatW}{\bimW}
\safemath{\rmatX}{\bimX}
\safemath{\rmatY}{\bimY}
\safemath{\rmatZ}{\bimZ}
\safemath{\rmatDelta}{\bimDelta}
\safemath{\rmatLambda}{\bimLambda}
\safemath{\rmatPhi}{\bimPhi}
\safemath{\rmatSigma}{\bimSigma}
\safemath{\rmatOmega}{\bimOmega}
\safemath{\rmatTheta}{\bimTheta}
\newenvironment{textbmatrix}{	\setlength{\arraycolsep}{2.5pt}%
								\big[\begin{matrix}}{\end{matrix}\big]%
								\raisebox{0.08ex}{\vphantom{M}}}
\def\be{\begin{equation}}
\def\ee{\end{equation}}
\def\een{\nonumber \end{equation}}
\def\mat{\begin{bmatrix}}
\def\emat{\end{bmatrix}}
\def\btm{\begin{textbmatrix}}
\def\etm{\end{textbmatrix}}
\def\ba#1\ea{\begin{align}#1\end{align}}
\def\bas#1\eas{\begin{align*}#1\end{align*}}
\def\bs#1\es{\begin{split}#1\end{split}} 
\def\bg#1\eg{\begin{gather}#1\end{gather}} 
\def\bi#1\ei{\begin{itemize}#1\end{itemize}}
\newcommand{\lefto}{\mathopen{}\left}
\DeclareMathOperator*{\argmin}{arg\;min}		%
\DeclareMathOperator{\Prob}{\opP}			%
\DeclareMathOperator{\Exop}{\opE}			%
\DeclareMathOperator{\Varop}{\opV\!\mathrm{ar}} %
\DeclareMathOperator{\landauO}{\mathcal{O}}
\newcommand{\Var}[1]{\ensuremath{\Varop\lefto[#1\right]}} %
\newcommand{\ind}[1]{\mathbbm{1}_{\{#1\}}}
\safemath{\dirac}{\delta}					%
\safemath{\krond}{\dirac}					%
\safemath{\upto}{\uparrow}
\safemath{\downto}{\downarrow}
\safemath{\iu}{j}							%
\safemath{\ev}{\lambda}						%
\safemath{\hilseqspace}{l^{2}}				%
\newcommand{\banachfunspace}[1]{\setL^{#1}}	%
\safemath{\hilfunspace}{\banachfunspace{2}}	%
\safemath{\SNR}{\text{\sc snr}} 				%
\safemath{\No}{N_0}							%
\safemath{\Es}{E_s}							%
\safemath{\Eb}{E_b}							%
\safemath{\EbNo}{\frac{\Eb}{\No}}
\safemath{\EsNo}{\frac{\Es}{\No}}
\DeclareMathOperator{\CHop}{\ensuremath{\opH}} %
\safemath{\tvir}{\rndh_{\CHop}}				%
\safemath{\tvtf}{\rndl_{\CHop}}				%
\safemath{\spf}{\rnds_{\CHop}}				%
\safemath{\bff}{H_{\CHop}}					%
\safemath{\ircf}{r_{h}}						%
\safemath{\tftvcf}{r_{s}}					%
\safemath{\tfcf}{r_{l}}						%
\safemath{\bfcf}{r_{H}}						%
\safemath{\tcorr}{c_h}						%
\safemath{\scf}{c_{s}}						%
\safemath{\tfcorr}{c_{l}}					%
\safemath{\fcorr}{c_{H}}						%
\safemath{\mi}{I}							%
\safemath{\capacity}{C}						%
\safemath{\normal}{\mathcal{N}}			%
\safemath{\jpg}{\mathcal{CN}}			%
\safemath{\mchain}{\leftrightarrow}		%
\safemath{\dB}{\,\mathrm{dB}}
\safemath{\dBm}{\,\mathrm{dBm}}
\safemath{\Hz}{\,\mathrm{Hz}}
\safemath{\kHz}{\,\mathrm{kHz}}
\safemath{\MHz}{\,\mathrm{MHz}}
\safemath{\GHz}{\,\mathrm{GHz}}
\safemath{\s}{\,\mathrm{s}}
\safemath{\ms}{\,\mathrm{ms}}
\safemath{\mus}{\,\mathrm{\mu s}}
\safemath{\ns}{\,\mathrm{ns}}
\safemath{\meter}{\,\mathrm{m}}
\safemath{\mm}{\,\mathrm{mm}}
\safemath{\cm}{\,\mathrm{cm}}
\safemath{\m}{\,\mathrm{m}}
\safemath{\W}{\,\mathrm{W}}
\safemath{\J}{\,\mathrm{J}}
\safemath{\K}{\,\mathrm{K}}
\safemath{\bit}{\,\mathrm{bit}}
\safemath{\define}{=}			%
\safemath{\equivalent}{\sim}
\safemath{\distas}{\sim}					%
\safemath{\sdiff}{\Delta}				%
\safemath{\reals}{\mathbb{R}}
\safemath{\positivereals}{\reals_{+}}
\safemath{\integers}{\mathbb{Z}}
\safemath{\posint}{\integers_{+}}
\safemath{\naturals}{\mathbb{N}}
\safemath{\posnaturals}{\naturals_{+}}
\safemath{\complexset}{\mathbb{C}}
\safemath{\rationals}{\mathbb{Q}}
\newcommand{\dsp}[1]{\Delta_{#1}} %
\newcommand{\psp}{\Pi} %
\newcommand{\ssp}{\cS} %
\newcommand{\asp}{\cA} %
\newcommand{\rwd}{R} %
\newcommand{\rwdpol}{\overline{R}} %
\newcommand{\dnm}{P} %
\newcommand{\dnmpol}{\overline{P}} %
\newcommand{\empdist}[1]{\widehat{\mu}_{#1}} %
\newcommand{\gpop}[2]{\Gamma_{\dnm}(#1, #2)} %
\newcommand{\gpopind}[3]{\mathbf{\Gamma}_{\dnm}^{#3}(#1, #2)} 
\newcommand{\poldif}[1]{\Delta_{#1}} %
\newcommand{\lmu}{L_\mu}
\newcommand{\ls}{L_s}
\newcommand{\la}{L_a}
\newcommand{\kmu}{K_\mu}
\newcommand{\ka}{K_a}
\newcommand{\ks}{K_s}
\newcommand{\lpopmu}{L_{pop,\mu}}
\newcommand{\R}{\mathbb{R}}
\newcommand{\eps}{\varepsilon}
\newcommand{\eqdef}{ := }
\newcommand{\setsz}[1]{|#1|}
\newcommand{\cA}{\mathcal{A}}
\newcommand{\cF}{\mathcal{F}}
\newcommand{\cL}{\mathcal{L}}
\newcommand{\cO}{\mathcal{O}}
\newcommand{\cS}{\mathcal{S}}
\newcommand{\cX}{\mathcal{X}}
\newcommand{\EEN}[1]{\mathbb{E}_N\left[#1\right]}
\newcommand{\EEinf}[1]{\mathbb{E}_\infty\left[#1\right]}
\newcommand{\EE}[1]{\mathbb{E}\left[#1\right]}
\newcommand{\EEc}[2]{\mathbb{E}\left[#1\left|#2\right.\right]}
\renewcommand{\P}{\mathbb{P}}
\newcommand{\PP}[1]{\mathbb{P}\left(#1\right)}
\newcommand{\Law}{\cL}
\newcommand{\sgauss}[1]{SG(#1)}
\newcommand{\JfinNi}[1]{J_{P, R}^{H,N,(#1)}}
\newcommand{\ExpfinNi}[1]{\mathcal{E}_{P, R}^{H,N,(#1)}}
\newcommand{\RL}{\mathcal{R}}
\newcommand{\PL}{\mathcal{P}}
\newcommand{\RSim}{\mathcal{R}^{\text{Sim}}}
\newcommand{\PSim}{\mathcal{P}^{\text{Sim}}}
\newcommand{\RLin}{\mathcal{R}^{\text{Lin}}}
\newcommand{\PLin}{\mathcal{P}^{\text{Lin}}}
\newcommand{\Vstat}{V^\gamma_{P,R}}
\newcommand{\Vfin}{V_{P, R}^H}
\newcommand{\Expfin}{\setE_{P, R}^H}
\newcommand{\Lambdaop}{\Lambda^{H}_{P}}
\newcommand{\JstatN}{J_{P, R}^{\gamma,N,(i)}}
\newcommand{\JstatNi}[1]{J_{P, R}^{\gamma,N,(#1)}}
\newcommand{\JfinN}{J_{P, R}^{H,N,(i)}}
\newcommand{\ExpfinN}{\mathcal{E}_{P, R}^{H,N,(i)}}
\newcommand{\CompFHLinear}[1]{$#1$-\textsc{FH-Linear}}
\newcommand{\CompFH}[1]{$(\varepsilon, #1)$-\textsc{FH-Nash}}
\newcommand{\CompStat}{$\varepsilon$-\textsc{StatDist}}
\newcommand{\CompTwoNash}{$2$-\textsc{Nash}}
\newcommand{\CompEOL}{\textsc{End-of-the-Line}}
\newcommand{\CompGC}{$\varepsilon$-\textsc{GCircuit}}
\newcommand{\CompGCeps}[1]{$\text{#1}$-\textsc{GCircuit}}
\newcommand{\CompFHeps}[2]{$(\text{#1}, \text{#2})$-\textsc{FH-Nash}}
\newcommand{\CompStateps}[1]{$\text{#1}$-\textsc{StatDist}}
\newcommand{\ppad}{\textsc{PPAD}}
\newcommand{\ppadcomplete}{\textsc{PPAD}-complete}
\newcommand{\ppadhard}{\textsc{PPAD}-hard}
\newcommand{\polytime}{\textsc{P}}
\newcommand{\BibTeX}{\rm B\kern-.05em{\sc i\kern-.025em b}\kern-.08em\TeX}
\newcommand{\fProb}{\omega_\epsilon}
\newcommand{\fRew}{\vecg}
\newcommand{\fRep}{\vech}
\gdef\@copyrightpermission{
	\begin{minipage}{0.3\columnwidth}
		\href{https://creativecommons.org/licenses/by/4.0/}{\includegraphics[width=0.90\textwidth]{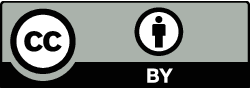}}
	\end{minipage}\hfill
	\begin{minipage}{0.7\columnwidth}
		\href{https://creativecommons.org/licenses/by/4.0/}{This work is licensed under a Creative Commons Attribution International 4.0 License.}
	\end{minipage}
	\vspace{5pt}
}
\begin{document}

\pagestyle{fancy}
\fancyhead{}

\maketitle

\section{Introduction}

Multi-agent reinforcement learning (MARL) finds numerous impactful applications in the real world \cite{shavandi2022multi, wiering2000multi,samvelyan2019starcraft, rashedi2016markov, matignon2007hysteretic, mao2022mean}.
Despite the urgent need in practice, MARL remains a fundamental challenge, especially in the setting with large numbers of agents due to the so-called ``curse of many agents'' \citep{wang2020breaking}.

Mean-field games (MFG), a theoretical framework first proposed by \citet{lasry2007mean} and \citet{huang2006large}, permits the theoretical study of such large-scale games by introducing mean-field simplification.
Under certain assumptions, the mean-field approximation leads to efficient algorithms for the analysis of a particular type of $N$-agent competitive game where there are symmetries between players and when $N$ is large. Such games appear widely in for instance auctions \cite{iyer2014mean}, and cloud resource management \cite{mao2022mean}.
For the mean-field analysis,  the game dynamics with $N$-players must be \emph{symmetric} (i.e., each player must be exposed to the same rules) and \emph{anonymous} (i.e., the effect of each player on the others should be permutation invariant).
Under this simplification, works such as \cite{perrin2020fictitious, anahtarci2022q, guo2019learning, yardim2023policy, perolat2022scaling, xie2021learning, cui2021approximately} and many others have analyzed reinforcement learning (RL) algorithms in the MFG limit $N\rightarrow\infty$ to obtain a tractable approximation of many agent games, providing learning guarantees under various structural assumptions.

Being a simplification, MFG formulations should ideally satisfy two desiderata: 
(1) they should be \emph{relevant}, i.e., they are good approximations of the original MARL problem and
(2) they should be \emph{tractable}, i.e., they are at least easier than solving the original MARL problem.
In this work, we would like to understand the extent to which MFGs satisfy these two requirements, and 
we aim to answer two natural questions that remain understudied:
\begin{itemize}
    \item \emph{When are MFGs good approximations of the finite player games, when are they not?} 
    In particular, are polynomially many agents always sufficient for mean-field approximation to be effective?
    \item \emph{Is solving  MFGs always computationally tractable, or more tractable than directly solving the $N$-player game?} 
    In particular, can MFGs be solved in polynomial or pseudo-polynomial time?
\end{itemize}

\subsection{Related Work}
Mean-field RL has been studied in various mathematical settings. 
In this work, we focus on two popular formulations in particular: stationary mean-field games (Stat-MFG, see e.g. \cite{anahtarci2022q, guo2019learning}) and finite-horizon MFG (FH-MFG, see e.g. \cite{perrin2020fictitious, perolat2022scaling}). 
In the Stat-MFG setting the objective is to find a stationary policy that is optimal with respect to its induced stationary distribution, while in the FH-MFG setting, a finite-horizon reward is considered with a time-varying policy and population distribution.

\textbf{Existing results on MFG relevance/approximation.}
The approximation properties of MFGs have been explored by several works in literature, as summarized in Table~\ref{tab:selected_works_approx}.
Finite-agent approximation bounds have been widely analyzed in the case of stochastic mean-field differential games \cite{carmona2013probabilistic, carmona2018probabilistic}, albeit in the differential setting and without explicit lower bounds.
Recent works~\cite{anahtarci2022q, cui2021approximately} have established that Stat-MFG Nash equilibria (Stat-MFG-NE) asymptotically approximate the NE of $N$-player symmetric dynamic games under continuity assumptions.
The result by \citet{saldi2018markov}, as the basis of subsequent proofs, shows asymptotic convergence for a large class of MFG variants and only requires continuity of dynamics and rewards as well as minor technical assumptions such as compactness and a form of local Lipschitz continuity. 
However, such asymptotic convergence guarantees leave the question unanswered if the MFG models are realistic in real-world games.
Many games such as traffic systems, financial markets, etc. naturally exhibit large $N$, however, if $N$ must be astronomically large for good approximation, the real-world impact of the mean-field analysis will be limited.
Recently, \cite{yardim2023stateless} provided finite-agent approximation bounds of a special class of stateless MFG, which assumes no state dynamics.
We complement existing work on approximation properties of both Stat-MFG and FH-MFG by providing explicit upper and lower bounds for approximation.

\textbf{Existing results on MFG tractability.} The tractability of solving MFGs as a proxy for MARL has been also heavily studied in the RL community under various classes of structural assumptions.
Since finding approximate Nash equilibria for normal form games is \ppadcomplete{}, a class believed to be computationally intractable \cite{daskalakis2009complexity, chen2009settling}, solving the mean-field approximation in many cases can be a tractable alternative.
We summarize recent work for computationally (or statistically) solving the two types of MFGs below, with an in-depth comparison also provided in Table~\ref{tab:selected_works_comp}.

For Stat-MFG,  under a contraction assumption RL algorithms such as Q-learning \cite{zaman2023oracle, anahtarci2022q}, policy mirror ascent \cite{yardim2023policy}, policy gradient methods \cite{guo2022general}, soft Q-learning \cite{cui2021approximately} and fictitious play \cite{xie2021learning} have been shown to solve Stat-MFG with statistical and computational efficiency.
However, all of these guarantees require the game to be heavily regularized as pointed out in \cite{cui2021approximately, yardim2023policy}, inducing a non-vanishing bias on the computed Nash.
Moreover, in some works the population evolution is also implicitly required to be contractive under all policies (see e.g. \cite{guo2019learning, yardim2023policy}), further restricting the analysis to sufficiently smooth games.
While \cite{guo2022mf} has proposed a method that guarantees convergence to MFG-NE under differentiable dynamics, the algorithm converges only when initialized sufficiently close to the solution.
To the best of our knowledge, there are neither RL algorithms that work without regularization nor evidence of difficulty in the absence of such strong assumptions: we complement the line of work by showing that unless dynamics are sufficiently smooth, Stat-MFG is both computationally intractable and a poor approximation.

A separate line of work analyzes the finite horizon problem. 
In this case, when the dynamics are population-independent and the payoffs are monotone the problem is known to be tractable.
Algorithms such as fictitious play \cite{perrin2020fictitious} and mirror descent \cite{perolat2022scaling} have been shown to converge to Nash in corresponding continuous-time equations.
Recent work has also focused on the statistical complexity of the finite-horizon problem in very general FH-MFG problems \cite{huang2023statistical}, however, the algorithm proposed is in general computationally intractable.
In terms of computational tractability and the approximation properties, our work complements these results by demonstrating that (1) when dynamics depend on the population as well an exponential approximation lower bound exists, and (2) in the absence of monotonicity, the FH-MFG is provably as difficult as solving an $N$-player game.

Finally, we note that there are several other settings and MFG solution concepts have been analyzed.
For instance, a certain class of infinite horizon MFG has been shown to be equivalent to concave utility RL, proving finite-time computational guarantees \cite{geist2021concave}.

\subsection {Our Contribution} 
In this work, we formalize and provide answers to the two aforementioned fundamental questions, first focusing on the approximation properties of MFG in Section~\ref{sec:main_relevance} and later on the computational tractability of MFG in Section~\ref{sec:main_complexity}.  
Our contributions are summarized as follows.

Firstly, we introduce explicit finite-agent approximation bounds for finite horizon and stationary MFGs (Table~\ref{tab:selected_works_approx}) in terms of exploitability in the finite agent game.
In both cases, we prove explicit upper bounds which quantify how many agents a symmetric game must have to be well-approximated by the MFG, which has been absent in the literature to the best of our knowledge. 
Our approximation results only require a minimal Lipschitz continuity assumption of the transition kernel and rewards.
For FH-MFG, we prove a $\mathcal{O}\left(\frac{(1 - L^H) H^2}{(1-L)\sqrt{N}}\right)$ upper bound for the exploitabilty where $L$ is the Lipschitz modulus of the population evolution operator: the upper bound exhibits an exponential dependence on the horizon $H$.
For the Stat-MFG we show that a $\mathcal{O}\left(\frac{(1-\gamma)^{-3}}{\sqrt{N}}\right)$ approximation bound can be established, but only if the population evolution dynamics are non-expansive.
Next, for the first time, we establish explicit lower bounds for the approximation proving the shortcomings of the upper bounds are fundamental.
For the FH-MFG, we show that unless $N \geq \Omega(2^H)$, an exploitability linear in horizon $H$ is unavoidable when deploying the MFG solution to the $N$ player game: hence in general the MFG equilibrium becomes irrelevant quickly as the problem horizon increases.
For Stat-MFG we establish an $\Omega(N^{\log_2 \gamma})$ lower bound when the population dynamics are not restricted to non-expansive population operators, showing that a large discount factor $\gamma$ also rapidly deteriorates the approximation efficiency.
Our lower bounds indicate that in the worst case, the number of agents required for the approximation can grow exponentially in the problem parameters, demonstrating the limitations of the MFG approximation.

Finally, from the computational perspective, we establish that both finite-horizon and stationary MFGs can be \ppadcomplete{} problems in general, even when restricted to certain simple subclasses (Table~\ref{tab:selected_works_comp}).
This shows that both MFG problems are in general as hard as finding a Nash equilibrium of $N$-player general sum games.
Furthermore, our results imply that unless \ppad{}=\polytime{} there are no polynomial time algorithms for solving FH-MFG and Stat-MFG, a result indicating computational intractability.

\begin{table*}[t]
  \begin{tabular}{llll} \toprule
    \textbf{Work} & \textbf{MFG type} & \textbf{Key Assumptions} & \textbf{Approximation Rate (in Exploitability)} \\ \midrule
    \citeauthor{carmona2013probabilistic}, \citeyear{carmona2013probabilistic} & Other\textsuperscript{a} & Affine drift, Lipschitz derivatives & $\mathcal{O}(N^{-1 /(d+4)})$ ($d$ dimension of state space)\\
    \citeauthor{saldi2018markov}, \citeyear{saldi2018markov} & Other\textsuperscript{b} & Continuity & $o(1)$ (asymptotic: convergence as $N\rightarrow \infty$) \\
     \citeauthor{anahtarci2022q}, \citeyear{anahtarci2022q} & Stat-MFG & Lipschitz $P,R$ + Regularized + Contractive $\Gamma_P$ & $o(1)$ (asymptotic: convergence as $N\rightarrow \infty$) \\
    \citeauthor{cui2021approximately}, \citeyear{cui2021approximately} & Stat-MFG & Continuity & $o(1)$ (asymptotic: convergence as $N\rightarrow \infty$) \\
    \citeauthor{yardim2023stateless}, \citeyear{yardim2023stateless} & Other\textsuperscript{c} & Lipschitz $P,R$ & $\mathcal{O}(\sfrac{1}{\sqrt{N}})$  \\
     \midrule
    \textbf{Theorem~\ref{theorem:upper_approx_fin}} & FH-MFG & Lipschitz $P,R$ & $\cO\left( \frac{H^2(1-L^H)}{(1-L)\sqrt{N}}\right)$, $L$ Lipschitz modulus of $\Gamma_P$ \\
    \textbf{Theorem~\ref{theorem:lower_approx_fin}} & FH-MFG & Lipschitz $P,R$ & $\Omega(H)$ unless $N \geq \Omega(2^H)$\\
    \textbf{Theorem~\ref{theorem:upper_approx_stat}}  & Stat-MFG & Lipschitz $P,R$ + Non-expansive $\Gamma_P$ & $\cO(\sfrac{(1-\gamma)^{-3}}{\sqrt{N}})$ \\
    \textbf{Theorem~\ref{theorem:lower_approx_stat}}  & Stat-MFG & Lipschitz $P,R$  & $\Omega(N^{-\log_2 \gamma^{-1}}))$ \\
    \bottomrule
  \end{tabular}
  \caption{Selected approximation results for MFG. Notes: \textsuperscript{a} stochastic differential MFG, \textsuperscript{b} infinite-horizon discounted setting with non-stationary policies, \textsuperscript{c} stateless/static MFG setting. } 
  \label{tab:selected_works_approx}
\end{table*}

\begin{table*}[t]
  \centering
  \begin{tabular}{llllll} \toprule
    \textbf{Work} & \textbf{MFG Type}  & \textbf{Key Assumptions} & \textbf{Iteration/Sample Complexity result} \\ \midrule
    \citeauthor{anahtarci2022q}, \citeyear{anahtarci2022q} & Stat-MFG  & Lipschitz $P,R$ + Regularization + Contractive $\Gamma_P$  & $\widetilde{\mathcal{O}}(\varepsilon^{-4 |\setA|})$ samples, $\mathcal{O}(\log \varepsilon^{-1})$ iterations \\
    \citeauthor{geist2021concave}, \citeyear{geist2021concave} & Other\textsuperscript{a} & Concave potential  & $\mathcal{O}(\varepsilon^{-2})$ iterations \\
    \citeauthor{perrin2020fictitious}, \citeyear{perrin2020fictitious} & FH-MFG & Monotone $R$, $\mu$-independent $P$  & $\mathcal{O}(\varepsilon^{-1})$ (continuous time analysis) \\
    \citeauthor{perolat2022scaling}, \citeyear{perolat2022scaling} & FH-MFG  & Monotone $R$, $\mu$-independent $P$ & $\mathcal{O}(\varepsilon^{-1})$ (continuous time analysis)\\
    \citeauthor{zaman2023oracle}, \citeyear{zaman2023oracle} & Stat-MFG  & Lipschitz $P,R$ + Regularization + Contractive $\Gamma_P$ & $\mathcal{O}(\varepsilon^{-4})$ samples \\
    \citeauthor{cui2021approximately}, \citeyear{cui2021approximately} & Stat-MFG  & Lipschitz $P,R$ + Regularization & $\mathcal{O}(\log \varepsilon^{-1})$ iterations\\
    \citeauthor{yardim2023stateless}, \citeyear{yardim2023stateless} & Other\textsuperscript{b}  & Monotone and Lipschitz $R$ & $\mathcal{O}(\varepsilon^{-2})$ samples ($N$-player) \\
    \citeauthor{yardim2023policy}, \citeyear{yardim2023policy} & Stat-MFG  & Lipschitz $P,R$ + Regularization + Contractive $\Gamma_P$ & $\mathcal{O}(\varepsilon^{-2})$ samples ($N$-player) \\
    \midrule
    \textbf{Theorem~\ref{theorem:compstat_ppad}} & Stat-MFG & Lipschitz $P,R$  & \ppadcomplete{} \\
    \textbf{Theorem~\ref{theorem:compfh_ppad}} & FH-MFG  & Lipschitz $P,R$ + $\mu$-independent $P$  & \ppadcomplete{} \\
    \textbf{Theorem~\ref{theorem:compfh_linear_ppad}} & FH-MFG  & Linear $P,R$ + $\mu$-independent $P$  & \ppadcomplete{} \\
    \bottomrule
  \end{tabular}
  \caption{Selected results for computing MFG-NE from literature. 
  In the assumptions column, contractive $\Gamma_P$ indicates that for all $\pi\in\Pi$, $\Gamma_P(\cdot, \pi)$ is a contraction, and regularization indicates that a non-vanishing bias is present.
  Notes: \textsuperscript{a} infinite-horizon, population dependence through the discounted state distribution. \textsuperscript{b} stateless/static MFG.}
  \label{tab:selected_works_comp}
\end{table*}

\section{Mean-Field Games: Definitions, Solution Concepts}

\paragraph{Notation.}
Throughout this work, we assume $\setS, \setA$ are finite sets.
For a finite set $\setX$, $\Delta_\setX$ denotes the set of probability distributions on $\setX$.
The norm used will not fundamentally matter for our results, we choose to equip $\Delta_\setS, \Delta_\setA$ with the norm $\| \cdot \|_1$.
We define the set of Markov policies $\Pi := \{ \pi:\setS \rightarrow \Delta_\setA\}$, $\Pi_H := \{ \{ \pi_h\}_{h=0}^{H-1} : \pi_h \in \Pi, \forall h\}$ and $\Pi_H^N := \{ \{ \pi_h^i\}_{h=0, i = 0}^{H-1, N} : \pi_h^i \in \Pi, \forall h\}$. 
For policies $\pi,\pi'\in\psp$ denote $\|\pi-\pi'\|_1=\sup_{s\in\ssp}\|\pi(\cdot|s)-\pi'(\cdot|s)\|_1$. 
We denote $d(x,y) \eqdef \ind{x\neq y}$ for $x,y$ in $\setA$ or $\setS$.
For $\vecpi\in\Pi^N, \pi' \in \Pi$, we define $( \pi', \vecpi^{ -i}) \in\Pi^N$ as the policy profile where the $i$-th policy has been replaced by $\pi'$.
Likewise, for $\vecpi\in\Pi^N_H, \vecpi' \in \Pi_H$, we denote by $( \vecpi', \vecpi^{ -i}) \in\Pi^N_H$ the policy profile where the $i$-th player's policy has been replaced by $\vecpi'$.
For any $N \in \mathbb{N}_{\geq 0}$, $[N] := \{1, \ldots, N \}$.

MFGs introduce a dependence on the population distribution over states of the rewards and dynamics.
We will strictly consider Lipschitz continuous rewards and dynamics, which is a common assumption in literature \cite{guo2019learning, anahtarci2022q, yardim2023policy, xie2021learning}, formalized below.

\begin{definition}[Lipschitz dynamics, rewards]
    \label{def:dynamics_rewards}
    For some $L \geq 0$, we define the set of $L$-Lipschitz reward functions and state transition dynamics as
    \begin{align*}
        \RL_L := \Big\{ R: \setS \times \setA \times \Delta_\setS \rightarrow &[0,1] \, : \, |R(s,a, \mu) - R(s,a, \mu')| \\
            &\leq L \| \mu - \mu'\|_1, \forall s,a,\mu, \mu' \Big\}, \\
        \PL_L := \Big\{ P : \setS \times \setA \times \Delta_\setS \rightarrow &\Delta_\setS\, : \, \| P(s,a,\mu) - P(s,a,\mu')\|_1 \\
            &\leq L \| \mu - \mu' \|_1, \forall s,a,\mu, \mu' \Big\}.
    \end{align*}
    Moreover, we define the set of Lipschitz rewards and dynamics as $\RL := \bigcup_{L \geq 0} \RL_L, \quad \PL := \bigcup_{L \geq 0} \PL_L$ respectively.
\end{definition}

We note that there are interesting MFGs with non-Lipschitz dynamics and rewards, however, even the existence of Nash is not guaranteed in this case.
Lipschitz continuity is a minimal assumption under which solutions to MFG always exist, and as our aim is to prove lower bounds and difficulty we will adopt this assumption.
Solving MFG with non-Lipschitz dynamics is more challenging than Lipschitz continuous MFG (the latter being a subset of the former), hence our difficulty results will apply.

\paragraph{Operators.}
We will define the useful population operators $\Gamma_P: \Delta_\setS \times \Pi \rightarrow \Delta_\setS$, $\Gamma_P^H: \Delta_\setS \times \Pi \rightarrow \Delta_\setS$, and
$\Lambda^{H}_{P}: \Delta_\setS \times \Pi_{H} \rightarrow \Delta_\setS^{H}$ as 
\begin{align*}
    \Gamma_P(\mu, \pi)  &:= \sum_{s \in \setS, a \in \setA} \mu(s) \pi(a|s) P(\cdot|s,a,\mu), \\
    \Gamma^H_P(\mu, \pi) &:= \underbrace{\Gamma_P(\dots \Gamma_P(\Gamma_P(\mu, \pi), \pi) \dots), \pi)}_{H\text{ times}},\\
    \Lambdaop(\mu_0, \vecpi) &:= \big\{\underbrace{\Gamma_P(\dots \Gamma_P(\Gamma_P(\mu_0, \pi_0), \pi_1) \dots, \pi_{h-1})}_{h \text{ times}} \big\}_{h=0}^{H-1} 
\end{align*}
for all $n\in \mathbb{N}_{> 0}, \pi \in \Pi, \vecpi = \{ \pi_h \}_{h=0}^{H-1} \in \Pi_H, P\in \PL, \mu_0 \in \Delta_\setS$.

Finally, we will need the following Lipschitz continuity result for the $\Gamma_P$ operator.
\begin{lemma}{\cite[Lemma~3.2]{yardim2023policy}}
    Let $P \in \PL_{K_\mu}$ for $K_\mu >0$ and
    {\small
    \begin{align*}
        \ks \eqdef \sup_{\substack{s,s'\\ a,\mu}}
        \left\|
        \dnm(s,a,\mu)-\dnm(s',a,\mu)
        \right\|_1, 
        \ka \eqdef \sup_{\substack{a,a'\\ s,\mu}}
        \left\|
        \dnm(s,a,\mu)-\dnm(s,a',\mu)
        \right\|_1.
    \end{align*}}
    Then it holds for all $\mu, \mu' \in \Delta_\setS, \pi,\pi' \in \Pi$ that:
    \label{lemma:lipschitz_gpop}
    \[
    \|\gpop{\mu}{\pi}
    -
    \gpop{\mu'}{\pi'}\|_1
    \leq 
    \lpopmu \|\mu-\mu'\|_1
    +
    \frac{\ka}{2}\|\pi-\pi'\|_1,
    \]
    where $\lpopmu \eqdef (\kmu+\frac{\ks}{2}+\frac{\ka}{2})$ for all $\pi,\pi' \in\psp$, $\mu,\mu' \in\dsp{\ssp}$.
\end{lemma}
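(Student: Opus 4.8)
The plan is to prove the bound by a triangle-inequality decomposition that isolates the change in the population distribution $\mu$ from the change in the policy $\pi$, powered by one elementary fact about signed measures. The central observation I would establish first is the following: for any finite index set $\setX$, any signed measure $\nu$ on $\setX$ with $\sum_{x}\nu(x)=0$, and any family of probability distributions $\{g_x\}_{x\in\setX}\subseteq\Delta_\setS$, one has
\[
\Big\| \sum_{x} \nu(x)\, g_x \Big\|_1 \;\leq\; \tfrac{1}{2}\,\|\nu\|_1 \, \sup_{x,x'} \|g_x - g_{x'}\|_1 .
\]
This follows from the Jordan decomposition $\nu = \nu^+ - \nu^-$: both parts carry total mass $m = \tfrac{1}{2}\|\nu\|_1$, so after normalizing $p = \nu^+/m$ and $q = \nu^-/m$ we can write $\sum_x \nu(x) g_x = m \sum_{x,x'} p(x) q(x') (g_x - g_{x'})$, and the triangle inequality applied to this double convex combination gives the claim. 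The factor $\tfrac12$ (rather than the trivial $1$) is precisely what the zero-mean hypothesis buys us, and it is essential for matching the constants in $\lpopmu$.

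Next I would split $\gpop{\mu}{\pi} - \gpop{\mu'}{\pi'} = \big(\gpop{\mu}{\pi} - \gpop{\mu'}{\pi}\big) + \big(\gpop{\mu'}{\pi} - \gpop{\mu'}{\pi'}\big)$ and bound the two terms separately. For the first (pure $\mu$-change), I would further write the summand as $\mu(s)P(\cdot|s,a,\mu) - \mu'(s)P(\cdot|s,a,\mu') = \mu'(s)\big[P(\cdot|s,a,\mu)-P(\cdot|s,a,\mu')\big] + (\mu(s)-\mu'(s))P(\cdot|s,a,\mu)$. The first piece is controlled by the kernel Lipschitz property $P\in\PL_{\kmu}$ together with the normalization $\sum_{s,a}\pi(a|s)\mu'(s)=1$, contributing $\kmu\|\mu-\mu'\|_1$. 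The second piece is exactly of the form above, with $\nu(s,a)=\pi(a|s)(\mu(s)-\mu'(s))$ (zero total mass, and $\|\nu\|_1=\|\mu-\mu'\|_1$) and $g_{s,a}=P(\cdot|s,a,\mu)$; since $\sup_{(s,a),(s',a')}\|P(\cdot|s,a,\mu)-P(\cdot|s',a',\mu)\|_1 \le \ks+\ka$ by the triangle inequality and the definitions of $\ks,\ka$, this piece contributes $(\tfrac{\ks}{2}+\tfrac{\ka}{2})\|\mu-\mu'\|_1$. Together these yield the coefficient $\lpopmu = \kmu+\tfrac{\ks}{2}+\tfrac{\ka}{2}$.

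For the second term (pure $\pi$-change), the key simplification is that fixing the state makes the signed action-measure cancel \emph{per state}: setting $\nu_s(a)=\mu'(s)(\pi(a|s)-\pi'(a|s))$ gives $\sum_a \nu_s(a)=0$, so the auxiliary inequality applies state-by-state to $g_a = P(\cdot|s,a,\mu')$ with action-diameter $\sup_{a,a'}\|P(\cdot|s,a,\mu')-P(\cdot|s,a',\mu')\|_1 \le \ka$. Summing over $s$ and using $\sum_s \mu'(s)=1$ together with $\|\pi-\pi'\|_1 = \sup_s\|\pi(\cdot|s)-\pi'(\cdot|s)\|_1$ yields the clean bound $\tfrac{\ka}{2}\|\pi-\pi'\|_1$.

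I expect the main subtlety to lie in the first term. Unlike the policy term, the signed measure $\nu(s,a)=\pi(a|s)(\mu(s)-\mu'(s))$ has zero total mass but nonzero per-action marginals, so one cannot localize the argument to a fixed action and is instead forced to bound the full $(s,a)$-diameter of the kernel. This is exactly why the coefficient on $\|\mu-\mu'\|_1$ picks up the extra $\tfrac{\ka}{2}$ beyond $\kmu+\tfrac{\ks}{2}$; keeping track of which cancellations are available (per-state for $\pi$, only globally for $\mu$) is the part of the argument that must be handled with care.
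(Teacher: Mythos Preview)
The paper does not supply its own proof of this lemma: it is quoted verbatim as \cite[Lemma~3.2]{yardim2023policy} and used as a black box, so there is no in-paper argument to compare against. That said, your proof is correct and self-contained. The auxiliary inequality for zero-mean signed measures (via the Jordan decomposition and the double convex combination) is exactly the right tool to extract the factors of $\tfrac12$ in front of $\ks$ and $\ka$; your identification of which cancellations are available---per state for the policy perturbation, only globally for the population perturbation---correctly explains why the $\mu$-coefficient is $\kmu+\tfrac{\ks}{2}+\tfrac{\ka}{2}$ while the $\pi$-coefficient is just $\tfrac{\ka}{2}$. One small check worth making explicit is that $\|\nu\|_1=\|\mu-\mu'\|_1$ for $\nu(s,a)=\pi(a|s)(\mu(s)-\mu'(s))$, which holds because $\sum_a\pi(a|s)=1$; you state this but it is the step a careless reader might skip.
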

In particular, in our settings, Lemma~\ref{lemma:lipschitz_gpop} indicates that $\Gamma_P$ is always Lipschitz continuous if $P \in \setP$, a property which will become significant for approximation analysis.

We will be interested in two classes of MFG solution concepts that lead to different analyses: infinite horizon stationary MFG Nash equilibrium (Stat-MFG-NE) and finite horizon MFG Nash equilibrium (FH-MFG-NE).
The first problem widely studied in literature is the stationary MFG equilibrium problem, see for instance \cite{anahtarci2022q, yardim2023policy, guo2019learning, guo2022general,xie2021learning}.
We formalize this solution concept below.

\begin{definition}[Stat-MFG]
A stationary MFG (Stat-MFG) is defined by the tuple $(\setS, \setA, P, R, \gamma)$ for Lipschitz dynamics and rewards $P \in \PL, R \in \RL$, discount factor $\gamma \in (0,1)$.
For any $(\mu, \pi) \in \Delta_\setS \times \Pi$, we define the $\gamma$-discounted infinite horizon expected reward as
    \begin{align*}
        \Vstat (\mu, \pi) := \Exop \left[ \sum_{t=0}^\infty \gamma^t R(s_t, a_t, \mu) \middle| \substack{s_0 \sim \mu, \quad a_t \sim \pi(s_t) \\ s_{t+1} \sim P(s_t, a_t, \mu) }\right].
    \end{align*}
A policy-population pair $(\mu^*, \pi^*) \in \Delta_\setS \times \Pi$ is called a Stat-MFG Nash equilibrium if the two conditions hold:
    \begin{align*}
        \textit{Stability: } \quad &\mu^* = \Gamma_P(\mu^*, \pi^*), \notag \\
        \textit{Optimality: } \quad &\Vstat (\mu^*, \pi^*) = \max_{\pi \in \Pi} \Vstat(\mu^*, \pi). 
        \tag{Stat-MFG-NE}
    \end{align*}
\end{definition}

The second MFG concept that we will consider has a finite time horizon, and is also common in literature \cite{perolat2015approximate, perrin2020fictitious, lauriere2022scalable, huang2023statistical}.
In this case, the population distribution is permitted to vary over time, and the objective is to find an optimal non-stationary policy with respect to the population distribution it induces.
We formalize this problem and the corresponding solution concept below.

\begin{definition}[FH-MFG]
\label{def:fh_mfg}
A finite horizon MFG problem (FH-MFG) is determined by the tuple $(\setS, \setA, H, P, R, \mu_0)$ where $H \in \mathbb{Z}_{> 0}$, $P\in \PL, R\in \RL, \mu_0 \in \Delta_\setS$.
For $\vecpi = \{\pi_h\}_{h=0}^{H} \in \Pi _ H, \vecmu = \{ \mu_h \}_{h=0}^{H-1} \in \Delta_\setS^{H}$, define the expected reward and exploitability as
    \begin{align*}
        \Vfin \left( \vecmu, \vecpi \right) & := \Exop \left[ \sum_{h=0}^{H-1} R(s_h, a_h, \mu_h) \middle| \substack{s_0 \sim \mu_0 , \quad a_h \sim \pi_h(s_h)\\ s_{h+1} \sim P(s_h, a_h, \mu_h)} \right] , \\
       \Expfin (\vecpi) &:= \max_{\vecpi' \in \Pi^H} \Vfin( \Lambdaop (\mu_0, \vecpi), \vecpi') - \Vfin ( \Lambdaop (\mu_0, \vecpi), \vecpi ).
    \end{align*}
    Then, the FH-MFG Nash equilibrium is defined as:
    \begin{align}
        \textit{Policy } &\vecpi^* = \{ \pi^*_h\}_{h=0}^{H-1} \in \Pi _ H \text{ such that } \notag \\
        & \Expfin(\{\pi^*_h\}_{h=0}^{H-1}) = 0. \tag{FH-MFG-NE}
    \end{align}
\end{definition}

\section{Approximation Properties of MFG} \label{sec:main_relevance}

As established in literature, the reason the FH-MFG and Stat-MFG problems are studied is the fact that they can approximate the NE of certain symmetric games with $N$ players, establishing the main relevance of the formulations in the real world.
Such results are summarized in Table~\ref{tab:selected_works_approx}.

In this section, we study how efficient this convergence is and also related lower bounds.
For these purposes, we first define the corresponding \emph{finite-player} game of each mean-field game problem: to avoid confusion, we call these games \emph{symmetric anonymous dynamic games} (SAG).
Afterwards, for each solution concept, we will first establish (1) an upper bound on the approximation error (i.e. the exploitability) due to the mean-field, and (2) a lower bound demonstrating the worst-case rate.
We will present the main outlines of proofs, and postpone computation-intensive derivations to the supplementary material of the paper.

\subsection{Approximation Analysis of FH-MFG}\label{sec:main_relevance_fh}

Firstly, we define the finite-player game that is approximately solved by the FH-MFG-NE.

\begin{definition}[$N$-FH-SAG]
    \label{def:n_fh_mfg}
    An $N$-player finite horizon SAG ($N$-FH-SAG) is determined by the tuple $(N, \setS, \setA, H, P, R, \mu_0)$ such that $N\in \mathbb{Z}_{> 0}, H \in \mathbb{Z}_{> 0}$, $P\in \PL, R\in \RL, \mu_0 \in \Delta_\setS$.
For any $\vecpi = \{\pi_h^i\}_{h=0,\ldots,H-1, i \in [N]} \in \Pi_H^N$, we define the expected mean reward and exploitability of player $i$ as
    \begin{align*}
        \JfinN \left( \vecpi \right) & := \Exop \left[ \sum_{h=0}^{H-1} R(s_h^i, a_h^i, \widehat{\mu}_h) \middle| \substack{\forall j : s_0^j \sim \mu_0 , \quad a_h ^j \sim \pi_h^j(s_h^j)\\ s_{h+1}^j \sim P(s_h^j, a_h^j, \widehat{\mu}_h), \widehat{\mu}_h := \frac{1}{N}\sum_j \vece_{s^j_h}} \right] , \\
        \ExpfinN(\vecpi) &:= \max_{\vecpi' \in \Pi^H} \JfinN ( \vecpi', \vecpi^{ -i}) - \JfinN ( \vecpi ) .
    \end{align*}
    Then, the $N$-FH-SAG Nash equilibrium is defined as:
    \begin{align}
        \textit{$N$-tuple of policies } &\{ \pi^{(i),*}_h\}_{h=0}^{H-1} \in \Pi^N _ H \text{ such that } \notag \\
        \forall i: \ExpfinN &(\{\pi^*_h\}_{h=0}^{H-1}) = 0. \tag{$N$-FH-SAG-NE}
    \end{align}
    If instead $\ExpfinN(\vecpi) \leq \delta$ for all $i$, then $\vecpi$ is called a $\delta$-$N$-FH-SAG Nash equilibrium.
\end{definition}

The above definition corresponds to a real-world problem as the function $\JfinN$ expresses the expected total payoff of each player: hence a $\delta$-$N$-MFG-NE is a Nash equilibrium of a concrete $N$-player game in the traditional game theoretical sense.
Also, note that now in the definition transition probabilities and rewards depend on $\widehat{\mu}_h$ which is the $\cF(\{s_h^i\}_i)=\cF_h$-measurable random vector of the empirical state distribution at time $h$ of all agents.

Firstly, we provide a positive result well-known in literature: the $N$-FH-SAG is approximately solved by the FH-MFG-NE policy.
Unlike some past works, we establish an explicit rate of convergence in terms of $N$ and problem parameters.

\begin{theorem}[Approximation of $N$-FH-SAG]\label{theorem:upper_approx_fin}
    Let $(\setS, \setA, H, P, R, \mu_0)$ be a FH-MFG with $P \in \setP, R\in\setR$ and with a FH-MFG-NE $\vecpi^* \in \Pi_H$, and for any $N \in \mathbb{N}_{>0}$ let $\vecpi_N^* := (\underbrace{\vecpi^*,\ldots, \vecpi^*}_{\text{$N$ times}}) \in \Pi_H^N$.
    Let $L > 0$ be the Lipschitz constant of $\Gamma_P$ in $\mu$, and let $\setG_N := (N, \setS, \setA, H, P, R, \mu_0)$ be the corresponding $N$-player game.
    Then:
    \begin{enumerate}
        \item If $L = 1$, then for all $i\in [N]$, $\ExpfinN(\vecpi^*_N) \leq \landauO( \frac{H^3}{\sqrt{N}})$, that is, $\vecpi^*_N$ is a $\cO( \frac{H^3}{\sqrt{N}})$-NE of $\setG_N $. 
        \item If $L \neq 1$, then for all $i\in [N]$, $\ExpfinN(\vecpi^*_N) \leq \landauO\left( \frac{H^2 (1 - L^H)}{(1 - L) \sqrt{N}}\right)$, that is, $\vecpi^*_N$ is a $\landauO\left(\frac{H^2 (1 - L^H)}{(1 - L) \sqrt{N}}\right)$-NE of $\setG_N $. 
    \end{enumerate}
\end{theorem}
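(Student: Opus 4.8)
The plan is to bound the $N$-player exploitability by transferring both the conforming value and every deviation value to the mean field, where the FH-MFG-NE optimality makes the gap vanish. Write $\vecmu^* = \Lambdaop(\mu_0, \vecpi^*) = \{\mu^*_h\}_{h=0}^{H-1}$ for the mean-field population trajectory, so that $\mu^*_{h+1} = \gpop{\mu^*_h}{\pi^*_h}$ with $\mu^*_0 = \mu_0$. By definition $\ExpfinN(\vecpi^*_N) = \max_{\vecpi'} \JfinN(\vecpi', \vecpi^{*,-i}_N) - \JfinN(\vecpi^*_N)$, where $(\vecpi', \vecpi^{*,-i}_N)$ is the profile with agent $i$ at $\vecpi'$ and all others at $\vecpi^*$. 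I would establish two approximation estimates at the target rate $\cE$: (a) $|\JfinN(\vecpi^*_N) - \Vfin(\vecmu^*, \vecpi^*)| \le \cE$ and (b) $\sup_{\vecpi'} |\JfinN(\vecpi', \vecpi^{*,-i}_N) - \Vfin(\vecmu^*, \vecpi')| \le \cE$. Granting these, the equilibrium condition $\Expfin(\vecpi^*) = 0$ gives $\Vfin(\vecmu^*, \vecpi') \le \Vfin(\vecmu^*, \vecpi^*)$ for every $\vecpi'$, and chaining (b), this optimality, and (a) yields $\ExpfinN(\vecpi^*_N) \le 2\cE$.

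The heart of both estimates is the concentration bound $\E \| \widehat\mu_h - \mu^*_h \|_1 \le \epsilon_h$ on the empirical state distribution $\widehat\mu_h = \tfrac1N \sum_j \vece_{s^j_h}$. I would derive it by a one-step martingale decomposition: conditioning on $\cF_h$ makes the next states $\{s^j_{h+1}\}_j$ independent, and when all agents follow $\vecpi^*$ the conditional mean is exactly $\E[\widehat\mu_{h+1} \mid \cF_h] = \gpop{\widehat\mu_h}{\pi^*_h}$. Splitting $\| \widehat\mu_{h+1} - \mu^*_{h+1} \|_1$ into the fluctuation $\| \widehat\mu_{h+1} - \gpop{\widehat\mu_h}{\pi^*_h} \|_1$ and the drift $\| \gpop{\widehat\mu_h}{\pi^*_h} - \gpop{\mu^*_h}{\pi^*_h} \|_1$, the fluctuation has conditional expectation $\mathcal{O}(|\setS| / \sqrt N)$ from a variance bound on an average of $N$ independent indicators, while the drift is at most $L \| \widehat\mu_h - \mu^*_h \|_1$ by Lemma~\ref{lemma:lipschitz_gpop}. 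Hence $\epsilon_{h+1} \le L \epsilon_h + \mathcal{O}(1/\sqrt N)$ with $\epsilon_0 = \mathcal{O}(1/\sqrt N)$ from sampling $\mu_0$, giving $\epsilon_h = \mathcal{O}\big( \tfrac{1 - L^{h+1}}{(1-L)\sqrt N} \big)$ for $L \ne 1$ and $\mathcal{O}(h/\sqrt N)$ for $L = 1$. In the deviation setting of (b), agent $i$'s distinct policy shifts the conditional mean by only an additive $\mathcal{O}(1/N)$ term, since that agent contributes mass $1/N$ to $\widehat\mu_h$; this is absorbed into the $\mathcal{O}(1/\sqrt N)$ slack, so the same $\epsilon_h$ holds.

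To convert the distribution error into a value error, I would couple agent $i$'s trajectory in the $N$-player game (whose transitions use $\widehat\mu_h$) with a mean-field trajectory $\bar s_h$ under the same policy (whose transitions use $\mu^*_h$), started from the same initial state. A step-wise maximal coupling gives $\P(s^i_{h+1} \ne \bar s_{h+1}) \le \P(s^i_h \ne \bar s_h) + \kmu\, \E \| \widehat\mu_h - \mu^*_h \|_1$, so the disagreement probability satisfies $p_h \le \kmu \sum_{k<h} \epsilon_k = \mathcal{O}\big( \tfrac{H(1-L^H)}{(1-L)\sqrt N} \big)$. Since rewards lie in $[0,1]$ and are Lipschitz in the population argument, the per-stage reward gap is at most $p_h + L_R\, \E \| \widehat\mu_h - \mu^*_h \|_1$; summing over the $H$ stages makes $\sum_h p_h = \mathcal{O}\big( \tfrac{H^2 (1-L^H)}{(1-L)\sqrt N} \big)$ the dominant term, which is precisely the claimed rate (and $\mathcal{O}(H^3/\sqrt N)$ when $L = 1$). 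As this bound depends on the deviation $\vecpi'$ only through the policy-independent quantities $\epsilon_h$, $\kmu$, and $L_R$, it holds uniformly over $\vecpi'$, so (b) follows by taking the supremum.

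The main obstacle is the concentration step, because the $N$ agents are \emph{not} independent: their transitions are all coupled through the shared empirical measure $\widehat\mu_h$, so no i.i.d.\ law of large numbers applies directly. The martingale decomposition is exactly what circumvents this, with conditional independence given $\cF_h$ restoring a clean variance bound and Lemma~\ref{lemma:lipschitz_gpop} controlling how errors propagate with factor $L$ per step. The remaining delicacy is bookkeeping: correctly tracking how the one-step $\mathcal{O}(1/\sqrt N)$ fluctuations compound first through the Lipschitz recursion and then through the reward horizon is what produces the sharp $\tfrac{H^2 (1-L^H)}{1-L}$ dependence, rather than a cruder bound that would be exponential in $H$ for $L > 1$.
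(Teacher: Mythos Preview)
Your proposal is correct and follows essentially the same three-step architecture as the paper: (i) an inductive concentration bound on $\E\|\widehat\mu_h - \mu^*_h\|_1$ via conditional independence given $\cF_h$ plus the Lipschitz constant $L$ of $\Gamma_P$, (ii) a transfer of this population error into a value error, and (iii) chaining with the FH-MFG-NE optimality. The only noteworthy difference is in the transfer step for the all-conform case (a): the paper exploits permutation symmetry of the $N$ agents to rewrite $\JfinN(\vecpi^*_N)$ directly as $\E\big[\sum_h \sum_s \widehat\mu_h(s)\,\rwdpol(s,\pi^*_h(s),\widehat\mu_h)\big]$, which compares to the mean-field value without any coupling; you instead run the coupling argument uniformly for both (a) and (b). Both routes give the same rate, and your coupling formulation is exactly equivalent to the total-variation recursion $\|\Prob[s_h^1=\cdot]-\Prob[s_h=\cdot]\|_1 \le \|\Prob[s_{h-1}^1=\cdot]-\Prob[s_{h-1}=\cdot]\|_1 + K_\mu\,\E\|\widehat\mu_{h-1}-\mu^*_{h-1}\|_1$ that the paper uses in its Step~3. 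One small remark: your closing sentence suggests the sharp bound avoids exponential growth in $H$ when $L>1$, but in fact $\tfrac{H^2(1-L^H)}{1-L}=\tfrac{H^2(L^H-1)}{L-1}$ \emph{is} exponential in $H$ for $L>1$; this is unavoidable and is precisely what the paper's matching lower bound (Theorem~\ref{theorem:lower_approx_fin}) confirms.
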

\begin{proof}\emph{(sketch)}
    Certain aspects of our proof will mirror the techniques introduced by \cite{saldi2018markov}, although we establish an explicit bound.
    We first bound the expected empirical population deviation given by $\Exop[\| \widehat{\mu}_h - \mu^{\vecpi}_h\|_1] = \mathcal{O} \left( \frac{L^h}{\sqrt{N}}\right)$ with an inductive concentration argument: at each step $h+1$, given past states $\widehat{\mu}_h$, the empirical distribution $\widehat{\mu}_h$ is a sum of $N$ independent identically distributed sub-Gaussian random variables. 
    Next, by utilizing the Lipschitz property of rewards and bounding deviation from the theoretical rewards the result follows in two computational steps: 
    (1) we show that $\left|\JfinNi{1}(\vecpi,\dots,\vecpi) - \Vfin(\Lambdaop(\mu_0, \vecpi), \vecpi)
    \right| \leq \mathcal{O}(\sfrac{1}{\sqrt{N}})$, and similarly
    (2) we show that for any policy sequence $\vecpi' \in \Pi_h$, we have $\left|\JfinNi{1}(\vecpi',\vecpi,\dots,\vecpi) - \Vfin(\Lambdaop(\mu_0,\vecpi), \vecpi')
    \right| \leq \mathcal{O}(\sfrac{1}{\sqrt{N}})$.
    The result follows by definition of exploitability, with explicit constants shown in the appendix.
\end{proof}

$\Gamma_P$ in Theorem~\ref{theorem:upper_approx_fin} is always $L$-Lipschitz in $\mu$ for some $L$ by Lemma~\ref{lemma:lipschitz_gpop}.
When $L>1$, the upper bound $\mathcal{O}\left(\sfrac{(1 + L^H)H^2}{\sqrt{N}}\right)$ has an exponential dependence on the Lipschitz constant of the operator $\Gamma_P$. 
However, for games with longer horizons, the upper bound might require an unrealistic amount of agents $N$ to guarantee a good approximation due to the exponential dependency.
Next, we establish a worst-case result demonstrating that this is not avoidable without additional assumptions.

\newcommand{\sleft}{s_{\text{Left}}}
\newcommand{\sright}{s_{\text{Right}}}
\newcommand{\slA}{s_{\text{LA}}}
\newcommand{\slB}{s_{\text{LB}}}
\newcommand{\srA}{s_{\text{RA}}}
\newcommand{\srB}{s_{\text{RB}}}

\newcommand{\actA}{a_{\text{A}}}
\newcommand{\actB}{a_{\text{B}}}

\begin{figure}[h!]
    \centering
    \begin{tikzpicture}[node distance={10mm}, thick, main/.style = {draw, circle}] 
        \node[main] (1) {$\sleft$}; 
        \node[main, draw=none] [right of=1] (9) {};
        \node[main, draw=none] [right of=9] (12) {};
        \node[main, draw=none] [right of=12] (7) {};
        \node[main, draw=none] [right of=7] (10) {};
        \node[main, draw=none] [right of=10] (11) {}; 
        \node[main] (2) [above of=7] {$\slB$}; 
        \node[main] (3) [above of=2] {$\slA$}; 
        \node[main] (4) [below of=7] {$\srA$}; 
        \node[main] (5) [below of=4] {$\srB$}; 
        \node[main] (6) [right of=11]  {$\sright$}; 
        \draw[->] (1) to [out=45,in=180,looseness=1.1] node[above,sloped] {$\ind{a=\actB}$} (2); 
        \draw[->] (1) to [out=90,in=180,looseness=1.1] node[near end, above,sloped] {$\ind{a=\actA}$} (3);
        \draw[->] (6) to [out=235,in=0,looseness=1.1] node[above,sloped] {$\ind{a=\actA}$} (4); 
        \draw[->] (6) to [out=270,in=0,looseness=1.1] node[near end, above,sloped] {$\ind{a=\actB}$} (5);
        \draw[->, orange] (5) to [out=310,in=290,looseness=1.1] (6);
        \draw[->, teal] (5) to [out=230,in=250,looseness=1.1] (1);
        \draw[->, orange] (4) to [out=70,in=200,looseness=1.1] (6);
        \draw[->, teal] (4) to [out=110,in=340,looseness=1.1] (1);
        \draw[->, orange] (2) to [out=310,in=160,looseness=1.1]  (6);
        \draw[->, teal] (2) to [out=230,in=20,looseness=1.1] (1);
        \draw[->, orange] (3) to [out=70,in=70,looseness=1.1] (6);
        \draw[->, teal] (3) to [out=110,in=110,looseness=1.1] (1);
    \end{tikzpicture} 
    \caption{Visualization of the counterexample. All orange edges have probability $\omega_\eps(\mu(\srA)+\mu(\srB))$, green edges have probability $\omega_\eps(\mu(\slA)+\mu(\slB))$ independent of action taken. Edges with probability $0$ are not drawn.}
    \label{fig:MFG}
\end{figure}
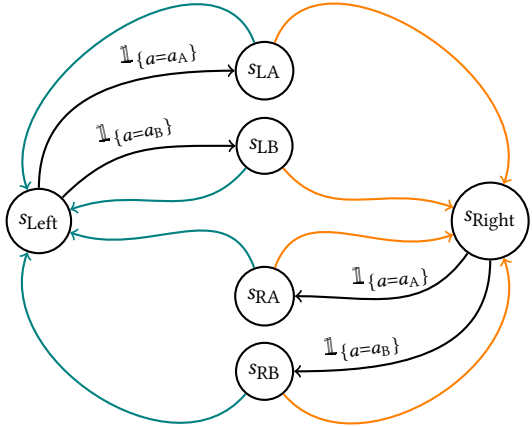

\begin{theorem}[Approximation lower bound for $N$-FH-SAG]\label{theorem:lower_approx_fin}
    There exists $\setS, \setA$ and $P \in \PL_8, R \in \RL_2, \mu_0 \in \Delta_\setS$ such that the following hold:
    \begin{enumerate}
        \item For each $H > 0$, the FH-MFG defined by $(\setS, \setA, H, P, R, \mu_0)$ has a \emph{unique} solution $\vecpi^*_H$  (up to modifications on zero-probability sets),
        \item For any $H, h > 0$, in the $N$-FH-SAG it holds that $\Exop_{H}[ \| \widehat{\mu}_h - \Lambdaop(\mu_0, \vecpi^*_H)_h\|_1] \geq \Omega \left(\min\{\ 1, \frac{2^H}{\sqrt{N}} \}\right)$.
        \item For any $H, N > 0$ either $N \geq \Omega(2^H)$, or for each player $i\in [N]$ it holds that $\ExpfinN (\vecpi^*_H, \ldots, \vecpi^*_H) \geq \Omega(H) $.
                
    \end{enumerate}
\end{theorem}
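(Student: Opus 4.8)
The plan is to analyze the explicit instance sketched in Figure~\ref{fig:MFG}, whose defining feature is that the left/right split of the population evolves under a map that is both \emph{policy-independent} and \emph{unstable}. Concretely, I would take $\mu_0(\sleft)=\mu_0(\sright)=\tfrac12$ and let $\fProb$ be the clipped affine map with $\fProb(\tfrac12)=\tfrac12$ and slope $4$ near $\tfrac12$. Along the two-step cycle $\{\sleft,\sright\}\to\{\slA,\slB,\srA,\srB\}\to\{\sleft,\sright\}$, the aggregate left-mass $p_L=\mu(\slA)+\mu(\slB)$ maps approximately by $x\mapsto\tfrac12+4(x-\tfrac12)$, so a deviation of the left-mass from $\tfrac12$ is multiplied by $4$ every two steps, i.e. by $2^{H}$ over horizon $H$ (an effective per-step gain $>1$, consistent with the exponential dependence in Theorem~\ref{theorem:upper_approx_fin}). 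Since the output at the intermediate states depends on $\mu$ only through $p_L$ and $p_R=\mu(\srA)+\mu(\srB)$ and $\fProb$ has slope $4$, a direct computation gives $P\in\PL_8$, and letting $R$ depend on $\mu$ only through these aggregates with small slope gives $R\in\RL_2$. The key structural point is that from $\sleft$ both actions push mass into $\{\slA,\slB\}$ and the subsequent transition ignores the action, so the left/right dynamics do not depend on the policy at all; the reward is used only to strictly separate the actions and thereby pin the policy down.

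For part~(1), I would observe that with the symmetric $\mu_0$ the unique population trajectory consistent with \emph{any} policy is the constant balanced one, $\Lambdaop(\mu_0,\vecpi)_h(\sleft)=\tfrac12$, because $\fProb(\tfrac12)=\tfrac12$ and the split is policy-independent. Against this fixed trajectory the FH-MFG collapses to an ordinary finite-horizon MDP with reward $R(\cdot,\cdot,\mu^*_h)$ and transitions $P(\cdot,\cdot,\mu^*_h)$; choosing $R$ so that exactly one action is strictly optimal at each decision-relevant state (e.g. a strict preference making $\actA$ optimal at $\sleft$) makes this MDP have a unique optimal policy, which is then the unique FH-MFG-NE up to zero-probability modifications.

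Part~(2) is the technical heart and the step I expect to be the main obstacle. Writing $X_h:=\widehat{\mu}_h(\sleft)-\tfrac12$ for the empirical deviation, the finite-agent recursion is: conditionally on $\cF_h$, the next empirical masses are averages of $N$ independent transitions, so $\Exop[\,\cdot\mid\cF_h]$ applies the mean-field map, whose linearization at the balanced point has gain $4$ per cycle, while the conditional fluctuation is of order $1/\sqrt{N}$. The difficulty is that $\Exop[X_h]=0$ by the left/right symmetry, so one cannot track a signed mean; instead I would track the second moment, showing $\Exop[X_{h+2}^2]\ge 16\,\Exop[X_h^2]$ while $|X_h|$ stays in the unclipped regime, which yields $\Exop[X_h^2]\ge\Omega(4^{h}/N)$ up to the saturation time. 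Converting this into the stated bound $\Exop[\|\widehat{\mu}_h-\Lambdaop(\mu_0,\vecpi^*_H)_h\|_1]\ge\Omega(\min\{1,2^{h}/\sqrt{N}\})$ requires an anti-concentration step: I would bound the fourth moment (again exploiting conditional independence of the $N$ agents, via a Berry--Esseen/CLT-type estimate) and apply Paley--Zygmund to pass from $\Exop[X_h^2]$ to $\Exop[|X_h|]$, handling the clipping by noting that once the process exits the linear regime its deviation is already $\Omega(1)$. Making the multiplicative amplification rigorous in the presence of this added noise and the saturation is the crux.

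Finally, for part~(3) I would convert an $\Omega(1)$ population deviation into $\Omega(1)$ exploitability per step. A single deviator changes $\widehat{\mu}_h$ only by $O(1/N)$, so it effectively faces the same random trajectory $\widehat{\mu}_h$, which by part~(2) is $\Omega(1)$ away from the mean field for every $h\ge\tfrac12\log_2 N$. A Markov best response may depend only on the player's own state, but because the transitions are population-dependent there is common randomness across agents, so the own state is correlated with the global deviation $X_h$ with correlation $\Omega(1)$ when $|X_h|=\Omega(1)$; equivalently $\Exop[\widehat{\mu}_h\mid s_h^i=s]$ differs from the balanced $\mu^*_h$ by $\Omega(1)$. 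The fixed MFG policy, optimized against $\mu^*_h$ irrespective of own state, ignores this signal, whereas the response $\pi'(s)\in\argmax_a \Exop[R(s,a,\widehat{\mu}_h)\mid s_h^i=s]$ exploits it for a gain of $\Omega(1)$ at each such step. Summing over the $H-\Theta(\log N)$ steps with $\Omega(1)$ deviation yields $\ExpfinN(\vecpi^*_H,\ldots,\vecpi^*_H)\ge\Omega(H)$, unless $\tfrac12\log_2 N=\Omega(H)$, i.e. $N\ge\Omega(2^{H})$, which is exactly the claimed dichotomy.
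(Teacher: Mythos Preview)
Your plan correctly identifies the core structural features of the construction: the policy-independent, expansive left/right dynamics driven by $\fProb$, the $\Theta(1/\sqrt{N})$ initial fluctuation that is amplified geometrically, and the fact that the own state carries an $\Omega(1)$ signal about $\widehat{\mu}_h$ once concentration occurs. However, there is a genuine tension between your designs for Parts~(1) and~(3) that you do not resolve, and the paper handles it by a different mechanism.

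You secure uniqueness in Part~(1) by making the FH-MFG-NE \emph{deterministic} via a strict action preference at $\sleft,\sright$. But in Part~(3) you then need the NE action at state $s$ to be $\Omega(1)$-suboptimal against the concentrated empirical distribution. Since $\pi^*_h$ already conditions on $s$, this requires $\argmax_a R(s,a,\cdot)$ to \emph{flip} between $\mu^*_h$ and the concentrated $\widehat{\mu}_h$ while $R\in\RL_2$ stays Lipschitz; you give no mechanism for this, and a generic small strict preference will persist under bounded perturbations of $\mu$, leaving the deviator with no gain. The paper instead makes the NE \emph{uniformly random} at $\sleft,\sright$: a small anti-coordination term $\alpha\,\fRep(\mu(\slA),\mu(\slB))$ rewards the less-populated action-state, so the only best response to the induced flow is to randomize equally, and uniqueness comes from this matching-pennies argument rather than a strict preference. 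After concentration, the $\fRew$ component makes one of $\slA,\slB$ worth $\approx 1$ and the other $\approx 0$; the randomizing NE collects $\approx\tfrac12$ per odd step while a deterministic Markov deviator (play $\actA$ at $\sleft$) collects $\approx 1$. This is precisely how $\Omega(1)$ per-step exploitability arises without any action-flip, and it is the key design idea your proposal is missing.

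On the technical side, your second-moment plus Paley--Zygmund route for Part~(2) is a plausible alternative, but the paper's argument is more direct: a binomial anti-concentration lemma gives $|X_0-\tfrac12|\geq\tfrac{1}{2\sqrt{N}}$ with constant probability, and then Hoeffding shows the process climbs nested level sets $\setG_k=\{x:|x-\tfrac12|\geq 5^k/(2\sqrt{N})\}$ with probability at least $1-2e^{-c\,25^k}$, avoiding fourth-moment bookkeeping and the clipping issue altogether. For Part~(3) the paper also replaces your ``$H-\Theta(\log N)$ good steps'' heuristic by an explicit hitting-time bound $\Exop[\tau]=O(\log N)$ for absorption of $X_m$ in $\{0,1\}$, and then compares the cumulative rewards of $\vecpi^*$ and the explicit deviator directly.
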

\begin{proof}\emph{(sketch)}
    We provide the basic idea of the proof and leave the cumbersome computations to the appendix.
    The proof is constructive: we construct an explicit FH-MFG where the statements hold, depicted in Figure~\ref{fig:MFG}.
    The FH-MFG will have 6 states and two actions defined as sets $\setS = \{ \sleft, \sright, \slA, \slB,\srA, \srB \}$ and $
    \setA = \{\actA, \actB \}.$    
    We define the initial state distribution with $\vecmu_0(\sleft) = \vecmu_0(\sright) = \sfrac{1}{2}$.
    The colored state transition probabilities are given by the function:
    \begin{align*}
        \fProb(x) &= \begin{cases}
            1, \quad x > \sfrac{1}{2} + \epsilon \\
            0, \quad x < \sfrac{1}{2} - \epsilon \\
            \frac{1}{2} + \frac{x - \sfrac{1}{2}}{2\epsilon}, \quad x \in [\sfrac{1}{2} - \epsilon, \sfrac{1}{2} + \epsilon]
        \end{cases}.
    \end{align*}
    The uniform policy over all actions $\vecpi^*$ at all states will be the unique FH-MFG-NE for all $H$, and the mean-field population distribution for all even $h$ will be $\mu_h^*(\sleft) = \mu^*_h(\sright) = \sfrac{1}{2}$.
    However, for finite $N$, using an anti-concentration bound on the binomial, we can show that with probability at least $\sfrac{1}{10}$, $\| \mu_0^* - \widehat{\mu}_h \|_1 \geq \sfrac{1}{\sqrt{N}}$.
    Using the fact that $\omega_\epsilon$ is $(2\epsilon)^{-1}$-expansive in the interval $[\sfrac{1}{2} - \epsilon, \sfrac{1}{2} + \epsilon]$, we can then show that the empirical population distribution exponentially diverges from the mean-field, that is $\Exop[\| \mu_{2h}^* - \widehat{\mu}_{2h} \|_1] \geq \Omega(\sfrac{5^h}{\sqrt{N}})$ until time $K:= \log_5 \sqrt{N}$.
    Moreover, with a series of concentration bounds, it can be shown that within an expected number of $\landauO(\log N)$ steps, all agents will converge to either $\sleft$ or $\sright$ during even rounds.
    Only the colored transitions are defined to have non-zero rewards, whose definition (provided in the supplementary) guarantees that the exploitability suffered scales linearly with $H$ after $N$ agents concentrate on the same state in even steps.
\end{proof}

This result shows that without further assumptions, the FH-MFG solution might suffer from exponential exploitability in $H$ in the $N$-player game.
In such cases, to avoid the concrete $N$-player game from deviating from the mean-field behavior too fast, either $H$ must be small or $P$ must be sufficiently smooth in $\mu$. 
We note that the typical assumption in the finite-horizon setting that $P \in \PL_0$ (see e.g. \cite{perrin2020fictitious, geist2021concave}) avoids this lower bound since in this case $\Gamma_P(\cdot, \pi)$ is simply multiplication by a stochastic matrix which is always non-expansive ($L=1$).
We also note at the expense of simplicity a stronger counter-example inducing exploitability $\Omega(H)$ unless $N \geq \Omega((L-\epsilon)^H)$ for all $\epsilon>0$ can be constructed, where $P\in\PL_L$.

\paragraph{A remark.}
The proof of Theorem~\ref{theorem:lower_approx_fin} in fact suggests that for finite $N$ and large horizon $H$, there exists a time-homogenous policy $\widebar{\pi}^\star \in \Pi$ different than the FH-MFG solution such that for $\widebar{\vecpi}^{\star}_H := \{ \widebar{\pi}^\star\}_{h=0}^{H-1}\in\Pi_H$, the time-averaged exploitability of $\widebar{\vecpi}^{\star}_H$ is small: $\forall i\in[N]:H^{-1}\ExpfinN (\widebar{\vecpi}^{\star}_H , \ldots, \widebar{\vecpi}^{\star}_H ) \leq \landauO(H^{-1} \log_2 N).$

\subsection{Approximation Analysis of Stat-MFG}\label{sec:main_relevance_stat}

Similarly, we introduce the $N$-player game corresponding to the Stat-MFG solution concept.

\begin{definition}[$N$-Stat-SAG]
    An $N$-player stationary SAG ($N$-Stat-SAG) problem is defined by the tuple $(N, \setS, \setA, P, R, \gamma)$ for Lipschitz dynamics and rewards $P \in \PL, R \in \RL$, discount factor $\gamma \in (0,1)$.
For any $(\mu, \vecpi) \in \Delta_\setS \times \Pi^N $, the $N$-player $\gamma$-discounted infinite horizon expected reward is defined as:
    \begin{align*}
        \JstatN (\mu, \vecpi) := \Exop \left[ \sum_{t=0}^\infty \gamma^t R(s_t^i, a_t^i, \widehat{\mu}_t) \middle| \substack{ a_t^j \sim \pi^j(s_t^j), \widehat{\mu}_t := \frac{\sum_j \vece_{ \text{\tiny $s^j_h$}}}{N} \\ s_0^j \sim \mu, s_{t+1}^i \sim P(s_t^i, a_t^i, \widehat{\mu}_t) }\right].
    \end{align*}
    A policy profile-population pair $(\mu^*, \vecpi^*) \in \Delta_\setS \times \Pi^N$ is called an $N$-Stat-SAG Nash equilibrium if:
    \begin{align*}
        &\JstatN (\mu^*, \vecpi^*) = \max_{\pi \in \Pi} \JstatN(\mu^*, (\pi, \vecpi^{*, -i})). 
        \tag{$N$-Stat-SAG-NE}
    \end{align*}
    If instead $\JstatN (\mu^*, \vecpi^*) \geq \max_{\pi \in \Pi} \JstatN(\mu^*, (\pi, \vecpi^{*, -i})) - \delta$, then we call $\mu^*, \pi^*$ a $\delta$-$N$-Stat-SAG Nash equilibrium.

\end{definition}

\begin{theorem}[Approximation of $N$-Stat-SAG]\label{theorem:upper_approx_stat}
Let $(\setS, \setA, H, P, R, \gamma)$ be a Stat-MFG and $(\mu^*, \pi^*) \in \Delta_\setS \times \Pi$ be a corresponding Stat-MFG-NE.
Furthermore, assume that $\Gamma_P(\cdot, \pi)$ is non-expansive in the $\ell_1$ norm for any $\pi$, that is, $\| \Gamma_P(\mu, \pi) - \Gamma_P(\mu', \pi) \| _ 1 \leq \| \mu - \mu'\|_1.$
Then, $(\mu^*, \vecpi^*) \in \Delta_\setS \times \Pi^N$ is a $\landauO\left(\frac{1}{\sqrt{N}}\right)$ Nash equilibrium for the $N$-player game where $\vecpi^*_N := (\pi^*, \ldots, \pi^*)$, that is, for all $i$,
\begin{align*}
    \JstatN (\mu^*, \vecpi^*_N) \geq \max_{\pi \in \Pi} \JstatN (\mu^*, (\pi, \vecpi^{*, -i}_N)) - \landauO\left(\frac{(1-\gamma)^{-3} }{\sqrt{N}}\right).
\end{align*}
\end{theorem}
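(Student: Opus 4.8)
The plan is to sandwich the $N$-player value of player $i$ between the mean-field value plus/minus an $\landauO((1-\gamma)^{-3}/\sqrt N)$ error, and then to cancel the mean-field terms using the two defining properties of the Stat-MFG-NE. Concretely, write $\delta_t := \Exop[\|\widehat{\mu}_t - \mu^*\|_1]$ for the process in which players $j \neq i$ use $\pi^*$, player $i$ uses an arbitrary stationary policy $\pi$, and $s_0^j \sim \mu^*$ for all $j$ (the choice $\pi = \pi^*$ recovers the symmetric profile $\vecpi^*_N$). I would prove two lemmas: (i) a \emph{population concentration} bound $\delta_t = \landauO((t+1)/\sqrt N)$, uniform over the deviator's choice of $\pi$; and (ii) a \emph{value transfer} bound $|\JstatN(\mu^*, (\pi, \vecpi^{*,-i}_N)) - \Vstat(\mu^*, \pi)| \leq (L_R + \tfrac{\gamma L_P}{1-\gamma})\sum_{t\geq 0}\gamma^t \delta_t$, valid for every stationary $\pi$, where $L_R, L_P$ are the Lipschitz moduli of $R,P$. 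Granting these and using $\sum_{t\geq 0}\gamma^t(t+1) = (1-\gamma)^{-2}$, the two lemmas together give $|\JstatN(\mu^*,(\pi,\vecpi^{*,-i}_N)) - \Vstat(\mu^*,\pi)| = \landauO((1-\gamma)^{-3}/\sqrt N)$ uniformly in $\pi$. Applying this with $\pi = \pi^*$ to lower bound $\JstatN(\mu^*,\vecpi^*_N)$ and with the maximizing $\pi$ to upper bound $\max_\pi \JstatN(\mu^*,(\pi,\vecpi^{*,-i}_N))$, and then invoking the \emph{optimality} condition $\max_\pi \Vstat(\mu^*,\pi) = \Vstat(\mu^*,\pi^*)$ to cancel the mean-field values, yields exactly the claimed inequality.

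For Lemma (i) I would argue inductively. Conditioned on $\cF_t$ (the states at time $t$) the next states $s_{t+1}^j$ are independent, so when all players use $\pi^*$ the empirical measure $\widehat{\mu}_{t+1}$ is an average of $N$ independent bounded (hence sub-Gaussian) random vectors with conditional mean $\Exop[\widehat{\mu}_{t+1}\mid\cF_t] = \Gamma_P(\widehat{\mu}_t, \pi^*)$; a coordinatewise variance estimate then gives $\Exop[\|\widehat{\mu}_{t+1} - \Gamma_P(\widehat{\mu}_t,\pi^*)\|_1\mid\cF_t] = \landauO(1/\sqrt N)$. Decomposing $\widehat{\mu}_{t+1} - \mu^* = (\widehat{\mu}_{t+1} - \Gamma_P(\widehat{\mu}_t,\pi^*)) + (\Gamma_P(\widehat{\mu}_t,\pi^*) - \Gamma_P(\mu^*,\pi^*))$, where the second grouping uses the \emph{stability} identity $\mu^* = \Gamma_P(\mu^*,\pi^*)$, and bounding the second term by the non-expansiveness hypothesis gives the recursion $\delta_{t+1} \leq \landauO(1/\sqrt N) + \delta_t$, hence $\delta_t = \landauO((t+1)/\sqrt N)$. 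When player $i$ deviates to $\pi$, I would instead track the empirical measure $\widehat{\mu}_t^{-i}$ of the other $N-1$ players: dropping one agent perturbs the empirical measure by $\landauO(1/N)$, so since $P \in \PL$ is Lipschitz the same recursion holds for $\widehat{\mu}_t^{-i}$ up to an additive $\landauO(1/N)$ term, and $\|\widehat{\mu}_t - \mu^*\|_1 \leq \|\widehat{\mu}_t^{-i} - \mu^*\|_1 + \landauO(1/N)$, so the bound $\delta_t = \landauO((t+1)/\sqrt N)$ survives the deviation.

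Lemma (ii) is the crux. Let $V$ be the mean-field value of $\pi$ against the fixed population $\mu^*$, so that $\Vstat(\mu^*,\pi) = \sum_s \mu^*(s) V(s)$ and $V$ obeys the Bellman equation with kernel $P(\cdot\mid s,a,\mu^*)$ and reward $R(s,a,\mu^*)$. I would apply the telescoping identity $\sum_{t\geq0}\gamma^t R_t = V(s_0^i) + \sum_{t\geq 0}\gamma^t(R_t + \gamma V(s_{t+1}^i) - V(s_t^i))$ to player $i$'s $N$-player trajectory with $R_t := R(s_t^i,a_t^i,\widehat{\mu}_t)$, take expectations, and observe $\Exop[V(s_0^i)] = \Vstat(\mu^*,\pi)$ since $s_0^i\sim\mu^*$. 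Conditioning each bracketed term on $\cF_t$ and subtracting the Bellman equation for $V$ collapses the per-step term to $\sum_a \pi(a\mid s_t^i)[(R(s_t^i,a,\widehat{\mu}_t) - R(s_t^i,a,\mu^*)) + \gamma\sum_{s'}(P(s'\mid s_t^i,a,\widehat{\mu}_t) - P(s'\mid s_t^i,a,\mu^*))V(s')]$, which by the Lipschitz properties of $R,P$ and $\|V\|_\infty \leq (1-\gamma)^{-1}$ has absolute value at most $(L_R + \tfrac{\gamma L_P}{1-\gamma})\|\widehat{\mu}_t - \mu^*\|_1$; summing against $\gamma^t$ and taking expectations gives the stated bound. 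The main obstacle is exactly this step: because non-expansiveness yields only a linearly growing error $\delta_t = \landauO(t/\sqrt N)$ rather than a uniform one, the dynamics-perturbation contribution must first be carried through the value function (one factor $(1-\gamma)^{-1}$ from $\|V\|_\infty$) and then discounted against the linearly growing $\delta_t$ (a factor $(1-\gamma)^{-2}$ from $\sum_t \gamma^t t$), which is precisely what forces the $(1-\gamma)^{-3}$ rate; controlling the deviating player's $\landauO(1/N)$ back-reaction on the population in Lemma (i) is a secondary but necessary technicality.
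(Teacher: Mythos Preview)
Your proposal is correct and follows the same overall three–piece architecture as the paper (population concentration, value transfer, then cancel via the Stat-MFG-NE conditions), but your Lemma~(ii) takes a genuinely different route from the paper's. The paper does \emph{not} introduce the mean-field value function $V$ and telescope via the Bellman equation. Instead, for the deviation case it tracks the total-variation gap between the \emph{law} of player~$i$'s state in the $N$-player process and in the mean-field process, obtaining the recursion $\|\Prob[s^1_{t+1}=\cdot]-\Prob[s_{t+1}=\cdot]\|_1 \le \|\Prob[s^1_{t}=\cdot]-\Prob[s_{t}=\cdot]\|_1 + K_\mu\,\Exop[\|\widehat\mu_t-\mu^*\|_1]$, which unrolled gives a double sum $\sum_t \gamma^t\sum_{t'<t}\delta_{t'}$ and hence the $(1-\gamma)^{-3}$ rate. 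Your telescoping argument replaces this two-index accumulation by a single sum with a larger per-step coefficient $(L_R+\tfrac{\gamma L_P}{1-\gamma})$, so the three powers of $(1-\gamma)^{-1}$ arise more transparently (one from $\|V\|_\infty$, two from $\sum_t\gamma^t t$). A secondary difference is the handling of the deviator's back-reaction in Lemma~(i): you pass through $\widehat\mu_t^{-i}$ and pay an $\landauO(1/N)$ correction, whereas the paper keeps $\widehat\mu_t$ and absorbs the deviation into a $\tfrac{K_a}{2N}\Delta_\pi$ term with $\Delta_\pi\le 2/N$. Both are fine. What your approach buys is a single value-transfer lemma valid for arbitrary $\pi$ (so Step~2 and Step~3 of the paper collapse into one), at the cost of importing the Bellman equation; what the paper's approach buys is direct reuse of the finite-horizon law-tracking machinery already developed for Theorem~\ref{theorem:upper_approx_fin}.
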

\begin{proof}\emph{(sketch)}
    Let $(\mu^*, \pi^*)$ be a Stat-MFG-NE.
    The proof method is very similar to the FH-MFG case: we first bound the expected deviation from the stable distribution $\mu^*$ given by $\Exop[\| \widehat{\mu} - \mu^*\|_1]$.
    The truncated expected rewards can be controlled using similar arguments to the FH-MFG case, and an application of the dominated convergence theorem yields the exploitability for the infinite horizon discounted setting.
\end{proof}

We also establish an approximation lower bound for the $N$-Stat-SAG.
In this case, the question is if the non-expansive $\Gamma_P$ assumption is necessary for the optimal $\mathcal{O}(\sfrac{1}{\sqrt{N}})$ rate.
The below results affirm this: in for Stat-MFG-NE with expansive $\Gamma_P$, we suffer from an exploitability of $\omega(\sfrac{1}{\sqrt{N}})$ in the $N$-agent case.

\begin{theorem}[Lower bound for $N$-Stat-SAG]\label{theorem:lower_approx_stat}
    For any $N \in \mathbb{N}_{> 0}, \gamma \in (\sfrac{1}{\sqrt{2}}, 1)$ there exists $\setS, \setA$ with $|\setS|=6, |\setA|=2$ and $P \in \PL_{7}, R \in \RL_3$ such that:
    \begin{enumerate}
        \item The Stat-MFG $(\setS, \setA, P, R, \gamma)$ has a \emph{unique} NE $\mu^*, \pi^*$,
        \item For any $N$ and $\vecpi_N^* := (\pi^*, \ldots, \pi^*) \in \Pi^N$, it holds that $\JstatN (\vecpi_N^*) \leq \max_\pi \JstatN (\pi, \vecpi_N^{*,-i}) - \Omega(N^{-\log_2 \gamma^{-1}})$.
    \end{enumerate}
\end{theorem}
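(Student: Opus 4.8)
The plan is to give an explicit construction that mirrors the finite-horizon counterexample of Theorem~\ref{theorem:lower_approx_fin} (Figure~\ref{fig:MFG}), adapted to the discounted stationary setting. I would reuse the six states $\{\sleft,\sright,\slA,\slB,\srA,\srB\}$ and two actions $\{\actA,\actB\}$: from $\sleft$ (resp.\ $\sright$) the action picks the intermediate state $\slA/\slB$ (resp.\ $\srA/\srB$), and from each intermediate state the agent returns to $\sleft$ or $\sright$ with probabilities shaped by a piecewise-linear, $\Theta(1)$-Lipschitz function of the population imbalance, analogous to $\omega_\epsilon$. The essential new requirement is that the stationary concept needs a \emph{genuine} fixed point $\mu^*=\Gamma_P(\mu^*,\pi^*)$ rather than an oscillating profile, so I would re-center and clamp this shaping function so that (i) the symmetric distribution $\mu^*$ (equal mass on the two sides) is a fixed point under the symmetric policy $\pi^*$, and (ii) the linearization of $\Gamma_P(\cdot,\pi^*)$ at $\mu^*$ has a single expanding mode of magnitude exactly $\sqrt{2}$, i.e.\ any population imbalance is amplified by $\sqrt{2}$ per step. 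The reward $R$ would be supported on the intermediate states and depend on the imbalance, tuned so that $\pi^*$ is strictly optimal at $\mu^*$ while, once the population concentrates on one side, the best response earns an $\Omega(1)$ per-step advantage.

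For part~(1) I would first check that $(\mu^*,\pi^*)$ is an equilibrium --- stability by the choice of fixed point, and optimality from the reward design, using that an agent's action only affects its reward one step ahead since the return transitions are population-driven rather than action-driven --- and then rule out all other equilibria. The delicate point is excluding asymmetric fixed points created by clamping the shaping function; I would arrange the intermediate-state transitions so that the induced scalar map on the imbalance is strictly expansive across the entire relevant range, leaving $\mu^*$ as the unique fixed point compatible with an optimal policy, which gives uniqueness up to modifications on zero-probability sets.

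For part~(2) the core is an anti-concentration argument in the $N$-player game. Initializing all $N$ agents at $\mu^*$, an anti-concentration bound for the binomial yields an initial imbalance $\Exop[\|\widehat\mu_0-\mu^*\|_1]=\Omega(1/\sqrt{N})$. Propagating this through the expansive dynamics, I would show inductively that $\Exop[\|\widehat\mu_t-\mu^*\|_1]=\Omega(\min\{1,(\sqrt{2})^t/\sqrt{N}\})$, so the imbalance saturates to $\Omega(1)$ at time $\tau\approx\log_2 N$. After saturation the population is concentrated on a single side with constant probability; since the deviating agent's own state is then correlated with the majority side (and it perturbs $\widehat\mu$ by only $O(1/N)$), a single fixed policy can encode the locally optimal action at $\sleft$ and at $\sright$ simultaneously and thus collect the $\Omega(1)$ per-step advantage irrespective of the drift direction. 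Summing the discounted gaps from time $\tau$ onward gives exploitability of order $\sum_{t\geq\tau}\gamma^t\cdot\Omega(1)=\Omega(\gamma^\tau)=\Omega(\gamma^{\log_2 N})=\Omega(N^{\log_2\gamma})=\Omega(N^{-\log_2\gamma^{-1}})$; the growth-phase terms contribute at the same order precisely when $\gamma\sqrt{2}>1$, i.e.\ $\gamma>1/\sqrt{2}$, which is exactly the stated hypothesis and the regime in which this rate is slower than $1/\sqrt{N}$.

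The step I expect to be the main obstacle is making this geometric divergence rigorous in the stationary, time-homogeneous setting: the imbalance must be tracked through a random process with a clamped nonlinearity, so I would need a careful coupling/induction that controls both the expected imbalance and the probability of concentration, together with a uniform lower bound on the post-concentration per-step exploitability, before assembling the discounted sum. Ensuring simultaneously that $P$ stays Lipschitz with the claimed modulus, that every transition kernel remains a valid distribution, and that $\mu^*$ is the \emph{unique} equilibrium is the most error-prone part of the construction.
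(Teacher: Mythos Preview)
Your quantitative skeleton is correct and matches the paper: binomial anti-concentration gives an initial imbalance $\Omega(1/\sqrt{N})$, an expansive population map amplifies it geometrically until saturation at time $\tau\asymp\log_2 N$, and summing the discounted per-step advantage yields $\Omega(\gamma^{\tau})=\Omega(N^{\log_2\gamma})$. The condition $\gamma>1/\sqrt{2}$ is exactly the threshold at which the growth-phase geometric series has ratio $>1$, and you identify this correctly.

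The gap is in the uniqueness mechanism. If you keep the finite-horizon ``attract to the majority'' dynamics, then besides the symmetric $\mu^*$ the all-left and all-right configurations are \emph{also} fixed points of $\Gamma_P(\cdot,\pi)$, for every $\pi$ (the clamping you mention makes this unavoidable). With the natural symmetric reward design (the $\fRew,\fRep$ terms you are implicitly reusing), the uniform policy is still a best response at those extreme distributions, so each of them is a genuine Stat-MFG-NE. Your proposed fallback, ``rule out the extremes as incompatible with an optimal policy,'' therefore does not go through without a substantial redesign of $R$; and any such redesign must still keep $\pi^*$ strictly optimal at $\mu^*$ and preserve the $\Omega(1)$ post-saturation gap, which is a tight set of constraints.

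The paper resolves this by flipping the sign of the coupling: from the intermediate states, agents are sent toward the \emph{minority} side, i.e.\ $P(\sright\,|\,\cdot,\cdot,\mu)=\omega_\epsilon(Q_L(\mu))$ and $P(\sleft\,|\,\cdot,\cdot,\mu)=\omega_\epsilon(Q_R(\mu))$ where $Q_L,Q_R$ renormalize the intermediate-state masses. This makes the induced scalar map on the imbalance expanding \emph{and sign-reversing}; the only fixed point is then the symmetric one, so uniqueness is immediate. In the $N$-player analysis one simply tracks the imbalance with alternating sign (the paper defines $X_m$ as the left fraction when $m$ is odd and the right fraction when $m$ is even), after which $N_0 X_{m+1}\sim\Bin(N_0,\omega_\epsilon(X_m))$ exactly as in the finite-horizon proof, and your level-set argument goes through unchanged with expansion $2$ per two time steps.

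One smaller point you glossed over: since agents are initialized at $\mu^*$, which places mass on all six states, at $t=0$ only a random $\widebar N\approx N/2$ agents sit on $\{\sleft,\sright\}$. The paper handles this by first conditioning on $\widebar N$ (concentrated near $N/2$ by Hoeffding), then running the anti-concentration and level-set argument with $N_0=\widebar N$ in place of $N$. This is routine but needs to be said.
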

\begin{proof}\emph{(sketch)}
    The counter-example will be similar to the case in the FH-MFG, with minor modifications to make the Stat-MFG-NE unique.
    Intuitively, due to the same anti-concentration bound as before for $T = \log_2 \sqrt{N}$, at times $t = 0, 2, 4, \ldots, T-1$ the population deviation from $\mu^*$ can be lower bounded by $\Exop[ \| \empdist{t} - \mu^*\|_1] \geq \Omega(\frac{2^t}{\sqrt{N}})$.
    By the design of reward functions, this yields an exploitability of 
    \begin{align*}
        \Omega\left(\frac{1 + 2\gamma^2 + \ldots + (2\gamma^2)^{T-1}}{\sqrt{N}} \right) = \Omega\left(N^{-\log_2 \gamma^{-1}} \right).
    \end{align*}
    The proof is postponed to the supplementary material.
\end{proof}

The result above shows that unless the relevant $\Gamma_P$ operator is contracting in some potential, in general, the exploitability of the Stat-MFG-NE in the $N$-player game might be very large unless the effective horizon $(1-\gamma)^{-1}$ is small.
Hence, in these cases, the mean-field Nash equilibrium might be uninformative regarding the true NE of the $N$ player game. 
In the case of Stat-MFG, our lower bound is even stronger in the sense that the exploitability no longer decreases with $\mathcal{O}(\sfrac{1}{\sqrt{N}})$ for large $\gamma$.
For a sufficiently long effective horizon $(1-\gamma)^{-1}$ and large enough Lipschitz constant $L$, the rate in terms of $N$ can be arbitrarily slow.
Furthermore, if we take the ergodic limit $\gamma \rightarrow 1$, we will observe a non-vanishing exploitability $\Omega(1)$ for \emph{all} finite $N$.

\section{Computational Tractability of MFG}\label{sec:main_complexity}

The next fundamental question for mean-field reinforcement learning will be whether it is always computationally easier than finding an equilibrium of a $N$-player general sum normal form game.
We focus on the computational aspect of solving mean-field games in this section, and not statistical uncertainty: we assume we have full knowledge of the MFG dynamics.
We will show that unless additional assumptions are introduced (as typically done in the form of contractivity or monotonicity), solving MFG can in general be as hard as finding $N$-player general sum Nash.

We will prove that the problems are \ppadcomplete{}, where \ppad{} is a class of computational problems studied in the seminal work by \citet{papadimitriou1994complexity}, containing the complete problem of finding $N$-player Nash equilibrium in general sum normal form games and finding the fixed point of continuous maps \cite{daskalakis2009complexity,chen2009settling}.
The class \ppad{} is conjectured to contain difficult problems with no polynomial time algorithms \cite{beame1995relative, goldberg2011survey}, hence our results can be seen as a proof of difficulty.
Our results are significant since they imply that the MFG problems studied in literature are in the same complexity class as general-sum $N$-player normal form games or $N$-player Markov games \cite{daskalakis2023complexity}.
Once again, several computation-intensive aspects of our proofs will be postponed to the supplementary material.

Due to a technical detail, we will prove the complexity results for a subset of possible reward and transition probability functions.
We formalize this subset of possible rewards and dynamics as ``simple'' rewards/dynamics and also linear rewards, defined below.

\begin{definition}[Simple/Linear Dynamics and Rewards]
$R\in\RL$ and $P\in\PL$ are said to be \emph{simple} if for any $s,s'\in \setS, a \in \setA$, $P(s'|s,a, \mu)$ and $R(s,a,\mu)$ are functions of $\mu$ that are expressible as finite combinations of arithmetic operations $+, -, \times, \frac{\cdot}{\cdot}$ and functions $\max\{\cdot, \cdot\}, \min\{\cdot, \cdot\}$ of coordinates of $\mu$.
They are called \emph{linear} if $P(s'|s,a,\mu)$ and $R(s,a,\mu)$ are linear functions of $\mu$ for all $s,a,s'$.
The set of simple rewards and dynamics are denoted by $\RSim$ and $\PSim$ respectively, and the set of linear rewards and transitions are denoted $\RLin,\PLin$ respectively.
\end{definition}

\textbf{A note on simple functions.}
We define simple functions as above as in general there is no known efficient encoding of a Lipschitz continuous function as a sequence of bits.
This is significant since a Turing machine accepts a finite sequence of bits as input.
To solve this issue, we prove a slightly stronger hardness result that even games where $P(s'|s,a,\mu), R(s,a,\mu)$ are Lipschitz functions with strong structure are \ppadcomplete{}.
Since we are proving hardness, other larger classes of $P,R$ including $\PSim, \RSim$ will have similar intractability.
See also arithmetic circuits with $\max,\min$ gates \cite{daskalakis2011continuous} for a similar idea.

\subsection{The Complexity Class \ppad{}}

The \ppad{} class is defined by the complete problem \textsc{End-of-The-Line} \cite{daskalakis2009complexity}, whose formal definition we defer to the appendix as it is not used in our proofs.

\begin{definition}[\ppad{}, \ppadhard{}, \ppadcomplete{}]
    The class \ppad{} is defined as all search problems that can be reduced to \textsc{End-of-The-Line} in polynomial time.
    If \textsc{End-of-The-Line} can be reduced to a search problem $\setS$ in polynomial time, then $\setS$ is called \ppadhard{}.
    A search problem $\setS$ is called \ppadcomplete{} if it is both a member of PPAD and it is \ppadhard{}.
\end{definition}

While \textsc{End-of-the-Line} defines the problem class \ppad{}, it is hard to construct direct reductions to it.
We will instead use two problems that are known to be \ppadcomplete{} (and hence can be equivalently used to define \ppad{}): solving generalized circuits and finding a NE for an $N$-player general sum game.

\begin{definition}[Generalized Circuits \cite{rubinstein2015inapproximability, daskalakis2023complexity}]
    A generalized circuit $\setC = (\setV, \setG)$ is a finite set of nodes $\setV$ and gates $\setG$.
    Each gate $G \in \setG$ is characterized by the tuple $G(\theta|v_1, v_2|v)$ where $G \in \{ G_\leftarrow, G_{\times, +}, \setG_{<}\}$, $\theta \in \mathbb{R}^\star$ is a parameter (possibly of length 0), $v_1, v_2 \in V \cup \{ \perp \}$ are the input nodes (with $\perp$ indicating an empty input) and $v \in V$ it the output node of the gate.
    The collection of gates $\setG$ satisfies the property that if $G_1(\theta|v_1, v_2| v), G_2(\theta'|v'_1, v'_2| v') \in G$ are distinct gates, then $v\neq v'$. 
\end{definition}

Such circuits define a set of constraints on values assigned to each gate, and finding such an assignment will be the associated computational problem for such a circuit desription.
We formally define the \CompGC{} problem to this end.
\CompGC{} is a standard complete problem for the class \ppad{}, and we will work with it for our reductions.
We will use the shorthand notation $x = y \pm \varepsilon$ to indicate that $x \in [y - \varepsilon, y+\varepsilon]$ for $x,y\in\mathbb{R}$.

\begin{definition}[\CompGC{} \cite{rubinstein2015inapproximability}]
    Given a generalized circuit $\setC = (\setV, \setG)$, a function $p:V \rightarrow [0,1]$ is called an $\varepsilon$-satisfying assignment if:
    \begin{itemize}
        \item For every gate $G\in\setG$ of the form $G_{\leftarrow}(\zeta||v)$ for $\zeta\in{0,1}$, it holds that $p(v) = \zeta \pm \varepsilon$,
        \item For every gate $G\in\setG$ of the form $G_{\times,+}(\alpha, \beta | v_1, v_2 | v)$ for $\alpha, \beta \in [-1, 1]$, it holds that 
            \begin{align*}
                p(v) \in [\max\{\min\{0, \alpha p(v_1) + \beta p(v_2)\}\}] \pm \varepsilon,
            \end{align*}
        \item For every gate $G\in\setG$ of the form $G_{<}(|v_1, v_1|v)$ it holds that
            \begin{align*}
                p(v) =  \begin{cases}
                            &1 \pm \varepsilon, \quad p(v_1) \leq p(v_2) - \varepsilon, \\
                            &0 \pm \varepsilon, \quad p(v_1) \geq p(v_2) + \varepsilon.
                        \end{cases}
            \end{align*}
    \end{itemize}
    The $\varepsilon$-\textsc{GCircuit} problem is defined as follows:
    \begin{align*}
        \textit{Given generalized circuit } \setC, \textit{ find an } \varepsilon\textit{-satisfying assignment of $\setC$.}
    \end{align*}
\end{definition}

$\varepsilon$-\textsc{GCircuit} is one of the prototypical hard instances of \ppad{} problems as the result below suggests.

\begin{theorem}\cite{rubinstein2015inapproximability}
    There exists $\varepsilon > 0$ such that \CompGC{} is \ppadcomplete{}.
\end{theorem}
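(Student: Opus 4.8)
The plan is to recognize that this is precisely the main hardness theorem of \citet{rubinstein2015inapproximability}, so rather than reproving it I would organize the argument into the two standard directions required for \ppadcomplete{}ness: membership in \ppad{}, and \ppadhard{}ness for a \emph{constant} $\varepsilon$.

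For membership, I would construct a continuous self-map $f : [0,1]^{|\setV|} \to [0,1]^{|\setV|}$ whose (approximate) fixed points are exactly the $\varepsilon$-satisfying assignments. Each gate acts on its output coordinate through a continuous operation: $G_\leftarrow$ pushes toward a constant, $G_{\times,+}$ applies a clipped affine combination of input coordinates, and $G_<$ is handled by a Lipschitz interpolation of the threshold that is free to take any value on the ``gap'' interval where the inputs lie within $\varepsilon$ of each other. Composing these coordinate-wise updates yields a continuous map, so Brouwer guarantees a fixed point, and any sufficiently good approximate fixed point satisfies every gate to within $\varepsilon$. Since computing approximate Brouwer fixed points reduces to \CompEOL{}, this places \CompGC{} in \ppad{}.

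For \ppadhard{}ness I would reduce from a problem already known to be \ppad{}-complete. The classical constructions of Daskalakis--Goldberg--Papadimitriou and Chen--Deng--Teng yield hardness only when $\varepsilon$ is inverse-polynomial or inverse-exponential in the circuit size, because error accumulates additively along chains of gadgets. The genuinely difficult step, and the substantive contribution of the cited work, is amplifying this to a \emph{constant} gap: this requires a robustification layer that keeps per-gate slack from compounding, effectively embedding a noise-tolerant computation inside the generalized circuit. I expect this constant-gap amplification to be the main obstacle, since membership and the folklore inverse-exponential hardness are routine while the averaging/error-correcting machinery is exactly where the depth of the result lies; for our purposes we invoke the theorem as a black box from \citet{rubinstein2015inapproximability}.
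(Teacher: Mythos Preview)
Your proposal is correct and matches the paper's treatment: the paper states this theorem purely as a citation to \cite{rubinstein2015inapproximability} with no proof or sketch, using it as a black box exactly as you conclude. Your additional discussion of why membership is routine and why the constant-$\varepsilon$ hardness is the nontrivial contribution is accurate context, but goes beyond what the paper itself provides.
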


In other words, \CompGC{} is representative of the most difficult problem in \ppad{} which suggests intractability.
The $\varepsilon$-\textsc{GCircuit} computational problem will be used in our proofs by reducing an arbitrary generalized circuit into solving a particular MFG.

We will also use the general sum 2-player Nash computation problem, which is the standard problem of finding an approximate Nash equilibrium of a general sum bimatrix game.

\begin{definition}[\CompTwoNash{}]
    Given $\varepsilon > 0$, $K_1, K_2 \in \mathbb{N}_{>0}$, payoff matrices $A, B \in [0,1]^{K_1, K_2}$, find an approximate Nash equilibrium $(\sigma_1, \sigma_2) \in \Delta_{K_1} \times \Delta_{K_2}$ such that
    \begin{align*}
        \max_{\sigma \in \Delta_{K_1}} \sum_{i \in [K_1]} \sum_{j \in [K_2]} A_{i, j} \sigma(i) \sigma_2(j) -  \sum_{i \in [K_1]} \sum_{j \in [K_2]} A_{i, j} \sigma_1(i) \sigma_2(j) &\leq \varepsilon \\
        \max_{\sigma \in \Delta_{K_2}} \sum_{i \in [K_2]} \sum_{a \in [K_2]} B_{i,j} \sigma_1(i) \sigma(j) - \sum_{i \in [K_1]} \sum_{j \in [K_2]} B_{i, j} \sigma_1(i) \sigma_2(j) &\leq \varepsilon
    \end{align*}
\end{definition}

The following is the well-known result that even the $2$-Nash general sum problem is \ppadcomplete{}.
In fact, any $N$-player general sum normal form game is \ppadcomplete{}.

\begin{theorem}
    \cite{chen2009settling} 
    \CompTwoNash{} is \ppadcomplete{}.
\end{theorem}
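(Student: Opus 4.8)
The plan is to establish the two halves of \ppadcomplete{}ness separately: membership of \CompTwoNash{} in \ppad{}, which is the routine direction, and \ppadhard{}ness, which carries essentially all the difficulty. For membership I would use the fact that the Nash equilibria of the bimatrix game $(A,B)$ are exactly the fixed points of a continuous self-map of the product of simplices $\Delta_{K_1} \times \Delta_{K_2}$. Concretely, Nash's map sends a profile $(\sigma_1, \sigma_2)$ to the profile obtained by nudging each player's mixture toward the pure strategies with positive regret against the opponent and renormalizing; this map is polynomial-time computable from $A, B$, and one checks that an $O(\varepsilon^2)$-approximate fixed point yields an $\varepsilon$-approximate equilibrium in the sense of the \CompTwoNash{} definition. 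Since locating an approximate Brouwer fixed point of a polynomial-time-computable map on a triangulated simplex reduces to \CompEOL{} by a Sperner-type coloring argument, this places \CompTwoNash{} in \ppad{}.

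For the hard direction I would reduce \CompGC{} (which is \ppadcomplete{} by the cited theorem of \citet{rubinstein2015inapproximability}) to \CompTwoNash{}. The idea is to build a single bimatrix ``gadget game'' whose approximate equilibria are in bijection with $\varepsilon'$-satisfying assignments of a given generalized circuit $\setC = (\setV, \setG)$. I would introduce, for each node $v \in \setV$, a dedicated pair of pure strategies (one per player) so that the equilibrium probability mass placed on them encodes the value $p(v) \in [0,1]$, and I would force the two players to agree on these encodings via a matching-pennies-style \emph{copy} gadget, which in equilibrium pins $\sigma_1$ and $\sigma_2$ to equal value-vectors. Each gate type is then realized by its own block of strategies with payoffs designed so that best-response incentives enforce the corresponding constraint up to error: $G_\leftarrow$ clamps a coordinate to a constant, $G_<$ implements the thresholded comparison, and $G_{\times,+}$ enforces the affine-then-clamp relation $p(v) = \max\{\min\{1, \alpha p(v_1) + \beta p(v_2)\}, 0\} \pm \varepsilon'$ by comparing the ``target'' value against the realized output and penalizing deviation.

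The hard part will be the arithmetic gate under the bilinearity constraint: a bimatrix payoff is only bilinear in $(\sigma_1, \sigma_2)$, so each player's payoff is \emph{linear} in the opponent's mixture, and one cannot directly form a product of two encoded circuit values the way a genuinely $N$-player (multilinear) payoff would allow. Overcoming this is exactly the innovation of \citet{chen2009settling}: the affine combination $\alpha p(v_1) + \beta p(v_2)$ is linear and can be read off the opponent's distribution, so only addition and scaling (plus clamping by $\max, \min$ gadgets) need to be implemented, and multiplication is avoided at the gate level. The second delicate point is error control: each gadget is only approximately satisfied, and these $O(\varepsilon')$ errors could in principle compound across the composition of gates. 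I would therefore argue that the per-gate slack and the copy-gadget enforcement can be tuned so that a polynomially small $\varepsilon$ in the resulting game guarantees an $\varepsilon'$-satisfying assignment of $\setC$, keeping the total propagated error bounded. Verifying this error-propagation bound, together with the faithful translation of the $G_<$ and clamping behavior into payoff incentives, is the technical crux of the reduction.
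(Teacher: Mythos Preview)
This theorem is stated in the paper purely as a citation to \cite{chen2009settling}; the paper provides no proof or proof sketch of its own. It is invoked as a known black-box result from the literature, used later to establish hardness of \CompFHLinear{2} by reduction \emph{from} \CompTwoNash{}. There is therefore nothing in the paper to compare your attempt against.

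On the substance of your sketch: the outline is broadly faithful to the actual Chen--Deng--Teng argument. Membership in \ppad{} via Nash's continuous map and a Sperner/\CompEOL{} reduction is standard, and hardness via per-node strategy blocks with copy, constant, comparison, and affine-then-clamp gadgets is indeed the structure of their proof. Your observation that bilinearity forces one to avoid genuine multiplication gates and work only with affine combinations plus clamping is exactly the point. One small caveat: you propose reducing from Rubinstein's constant-$\varepsilon$ \CompGC{}, but \CompTwoNash{} as defined here takes $\varepsilon$ as part of the input, so what is actually needed (and what \cite{chen2009settling} establishes) is hardness for inverse-polynomial $\varepsilon$; the constant-$\varepsilon$ strengthening is neither required nor historically available at the time of that result.
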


\subsection{Complexity of Stat-MFG}\label{sec:complexity_main_stat}

Next, we provide our difficulty results for the Stat-MFG problem.
Notably, for Stat-MFG, the stability subproblem of finding a stable distribution for a fixed policy $\pi$ itself is \ppadhard{}.
Even without considering the optimality conditions, finding a stable distribution in general for a fixed policy is intractable, unless additional assumptions are introduced (e.g. $\Gamma_P$ is contractive or non-expansive).
We define the computational problem below and state the results.

\begin{definition}[\CompStat{}]
    Given finite state-action sets $\setS,\setA$, simple dynamics $P \in \PSim$ and policy $\pi$, find $\mu^* \in \Delta_\setS$ such that $\| \Gamma_P(\mu^*, \pi) - \mu^*\|_\infty \leq \frac{\varepsilon}{|\setS|}$.
\end{definition}

The computational problem as described above is to find an approximate fixed point of $\Gamma_P(\cdot, \pi)$ which corresponds to an approximately stable distribution of policy $\pi$.
We show that \CompStat{} is \ppadcomplete{} for some fixed constant $\varepsilon$.

\begin{theorem}[\CompStat{} is \ppadcomplete{}]\label{theorem:compstat_ppad}
For some $\varepsilon > 0$, the problem \CompStat{} is \ppadcomplete{}. 
\end{theorem}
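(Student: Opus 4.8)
The plan is to prove both membership in \ppad{} and \ppadhard{}ness. \emph{Membership} is the easier direction: for a fixed policy $\pi$ the map $\mu \mapsto \Gamma_P(\mu,\pi)$ is a self-map of the simplex $\Delta_\setS$, it is Lipschitz continuous by Lemma~\ref{lemma:lipschitz_gpop}, and for $P \in \PSim$ it is polynomial-time computable (a bounded sequence of arithmetic and $\max/\min$ operations on the coordinates of $\mu$). Finding $\mu$ with $\|\Gamma_P(\mu,\pi) - \mu\|_\infty \le \varepsilon/\card{\setS}$ is therefore an instance of the approximate Brouwer fixed-point problem for a polynomially computable Lipschitz self-map of a convex body; after the routine identification of $\Delta_\setS$ with a hypercube, this is the classical problem shown to lie in \ppad{} \cite{papadimitriou1994complexity, daskalakis2009complexity}. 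Hence \CompStat{} $\in$ \ppad{}.

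The substance is \ppadhard{}ness, which I would establish by a reduction from \CompGC{} at the constant $\varepsilon$ for which it is \ppadhard{}. Given a generalized circuit $\setC = (\setV, \setG)$, I first rewrite each gate as a target function: to every node $v$ I associate a \emph{simple} function $F_v(p)$ of the node values $p \in [0,1]^{\setV}$ whose fixed points are exactly the $\varepsilon$-satisfying assignments. Concretely, $F_v \equiv \zeta$ for $G_\leftarrow(\zeta||v)$; $F_v(p) = \max\{0, \min\{1, \alpha p(v_1) + \beta p(v_2)\}\}$ for $G_{\times,+}$; and $F_v(p) = \max\{0,\min\{1, \tfrac12 + \tfrac{p(v_2)-p(v_1)}{2\varepsilon}\}\}$ for the comparison gate $G_<$, a Lipschitz ramp that outputs (approximately) $1$ when $p(v_1)\le p(v_2)-\varepsilon$ and $0$ when $p(v_1)\ge p(v_2)+\varepsilon$. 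Each of these is built only from affine maps, $\max$, and $\min$, hence simple.

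The key is a gadget realizing $F$ as a stationary-distribution equation. I would introduce two states $v^+, v^-$ per node, so $\setS = \{v^+,v^- : v\in\setV\}$ with $\card\setS = 2\card\setV$, take a single (trivial) action folded into $\pi$, and read node values by the division-free rule $p(v) \eqdef \min\{1, \card\setV \cdot \mu(v^+)\}$, which keeps $P$ well-defined on all of $\Delta_\setS$. The transition is a mixture: with probability $1-\lambda$ it keeps each pair closed, sending mass from both $v^+$ and $v^-$ to $v^+$ with probability $F_v(p)$ and to $v^-$ with probability $1-F_v(p)$; with probability $\lambda$ it jumps to the uniform distribution over all $2\card\setV$ states. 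Writing $c_v \eqdef \mu(v^+) + \mu(v^-)$, the stationarity conditions give $\mu(v^+) = (1-\lambda)\,c_v F_v(p) + \tfrac{\lambda}{2\card\setV}$ and, on summing each pair, $c_v = 1/\card\setV$ for every $v$. Substituting back yields $p(v) = (1-\lambda) F_v(p) + \tfrac{\lambda}{2}$, so the exact stationary distribution encodes an $\mathcal{O}(\lambda)$-approximate fixed point of $F$. An $\varepsilon/\card\setS$-accurate \CompStat{} solution perturbs each coordinate by at most $\varepsilon/(2\card\setV)$, which rescales by $\card\setV$ into an additional $\mathcal{O}(\varepsilon)$ error on $p$; choosing $\lambda$ and the \CompStat{} constant small relative to the \CompGC{} threshold, the recovered $p$ is an $\varepsilon$-satisfying assignment. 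One verifies $P \in \PSim \subseteq \PL$ since every row is a simple, Lipschitz, genuine probability distribution.

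The main obstacle — and the reason for the $\lambda$-mixing term — is that mass conservation would otherwise leave the stationary distribution badly under-determined: with each pair exactly closed the chain is reducible, the budgets $c_v$ are arbitrary conserved quantities, and a solver could return $c_v = 0$, rendering $p(v)$ meaningless. The uniform-jump term simultaneously makes the chain irreducible (so the fixed point is essentially unique and the instance well-posed), pins every budget to the common value $1/\card\setV$, and keeps all coordinates bounded away from zero, at the cost of a controlled $\mathcal{O}(\lambda)$ bias. The remaining work is bookkeeping: tracking the two error sources — the $\ell_\infty$ tolerance $\varepsilon/\card\setS$ and the mixing bias $\lambda$ — through the approximate identities for $c_v$ and $\mu(v^+)$ so that their combined effect on $p$ stays below the fixed \CompGC{} constant, which is what ultimately pins down the $\varepsilon$ in the statement.
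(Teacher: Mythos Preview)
Your proposal is correct and reaches the same conclusion, but the hardness gadget is genuinely different from the paper's.

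Both arguments reduce from \CompGC{} and both read node values by rescaling a stationary mass, but the architectures differ. The paper uses a hub-and-spoke construction: one base state $s_{\text{base}}$ plus a single state $s_v$ per node, with every $s_v$ returning deterministically to $s_{\text{base}}$ and $s_{\text{base}}$ sending mass $\theta F_v(\mu)/\max\{B,\mu(s_{\text{base}})\}$ to each $s_v$; the hub automatically guarantees irreducibility and the $\max\{B,\cdot\}$ denominator cancels $\mu(s_{\text{base}})$ in the stationarity equation so that $\mu^*(s_v)=\theta F_v$ exactly. Your construction is local: one pair $(v^+,v^-)$ per node with mass redistributed inside the pair according to $F_v$, plus a uniform $\lambda$-jump to pin the pair budgets $c_v=1/\card\setV$ and restore irreducibility. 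The paper's hub avoids any bias parameter but needs the division trick; your paired design is division-free and more symmetric, at the cost of the $\lambda$-bias.

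One point to tighten in your bookkeeping: in the approximate regime the budget equation $\lambda c_v = \lambda/\card\setV$ only holds up to the \CompStat{} tolerance, so $\card\setV c_v = 1 \pm O(\varepsilon/\lambda)$, not $\pm O(\varepsilon)$. The resulting error on $p(v)$ is $O(\lambda) + O(\varepsilon/\lambda)$ rather than $O(\lambda)+O(\varepsilon)$ as you sketch. This does not break anything --- fix $\lambda$ to a small constant fraction of the \CompGC{} threshold first, then choose the \CompStat{} constant small enough --- but the $1/\lambda$ amplification should appear explicitly when you track the two error sources through the $c_v$ and $\mu(v^+)$ identities.
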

\begin{proof}\emph{(sketch)}
\newcommand{\sbase}{s_\text{base}}
    The reduction from \CompStat{} to a fixed point problem (or the Sperner problem \cite{daskalakis2009complexity}) is straightforward, showing \CompStat{} is in \ppad{}.
    The main challenge of the proof is showing \CompStat{} is simultaneously \ppadhard{}.
    This is achieved by showing any \CompGC{} problem can be reduced to a \CompStat{} for some $\varepsilon'$.
    For simplicity, we reduce \CompGC{} to finding the stable distribution of a transition kernel $P(s'|s,\mu)$.
    Given a generalized circuit $\setC = (\setV, \setG)$, we construct a Stat-MFG that has one base state $\sbase$, one additional state $s_v$ for each $v\in\setV$ that is the output of a gate.
    Let $\theta := \frac{1}{8V}, B:= \frac{1}{4}$.
    Also define the function $u_\alpha(x) := \max\{ 0, \min \{ \alpha, x\}\}$ for any $\alpha \in [0,1]$.
    We present the construction and defer the analysis to the appendix: any gate of the form $G_{\leftarrow}(\zeta||v)$, we will add one state $s_v$ such that $P(\sbase|s_v, \mu) = 1$, $P(s_v | \sbase, \mu) = \frac{\zeta \theta}{\max\{B, \,\mu(\sbase)\}}$.
    For any weighted addition gate $G_{\times,+}(\alpha, \beta | v_1, v_2 | v)$, we add a state $s_v$ such that $P(\sbase|s_v, \mu) = 1$ and $P(s_v | \sbase, \mu) =  \frac{u_\theta(\alpha \mu(v_1) + \beta \mu(v_2))}{\max\{B, \,\mu(\sbase)\}}$.
    Finally, for each comparison gate $G_{<}(|v_1, v_1|v)$, also add a state $s_v$ and define the transition probabilities: 
    \begin{align*}
        P(s_v | \sbase, \mu) &= \frac{ \theta p_{\varepsilon/8} (\theta^{-1} \mu(s_1), \theta^{-1} \mu(s_2))}{\max\{B, \,\mu(\sbase)\}} , \\ 
        P(s_v | s_v, \mu) &= 0, \quad P(\sbase | s_v, \mu) = 1,
\end{align*}
    where $p_\varepsilon (x, y) := u_1 \left( \frac{1}{2} +  \varepsilon^{-1} (x - y) \right)$.
    Once all gates are added, the construction is completed by defining $P(\sbase|\sbase,\mu) = 1 - \sum_{s' \in \setS} P(s'|\sbase,\mu)$.
    Simple computation verifies that for any \emph{exact} stationary distribution $\mu^*$ of the above $P$, an exact assignment the the generalized circuit can be read by the map $v \rightarrow u_1(\frac{\mu^*(s_v)}{\theta})$.
\end{proof}

As a corollary, there is no polynomial time algorithm for \CompStat{} unless \ppad{}=\polytime{}, which is conjectured to be not the case.

\begin{corollary}
    There exists a $\varepsilon > 0$ such that there exists no polynomial time algorithm for \CompStat{}, unless \polytime{} = \ppad{}.
\end{corollary}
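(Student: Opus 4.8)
The plan is to derive the corollary directly from the \ppadcompleteness established in Theorem~\ref{theorem:compstat_ppad}, via the standard argument that a polynomial-time algorithm for any \ppadhard{} problem collapses the entire class into \polytime{}. First I would fix the constant $\varepsilon > 0$ furnished by Theorem~\ref{theorem:compstat_ppad}, for which \CompStat{} is simultaneously a member of \ppad{} and \ppadhard{}; the corollary will be proved for exactly this $\varepsilon$.

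Next, I would argue by contraposition. Suppose there exists a polynomial-time algorithm $\mathcal{A}$ solving \CompStat{} for this $\varepsilon$. Because \CompStat{} is \ppadhard{}, every search problem $\Pi \in \ppad{}$ admits a polynomial-time reduction to \CompStat{}: there are polynomial-time computable maps sending an instance of $\Pi$ to an instance of \CompStat{} and sending a solution of the latter back to a solution of the former. Composing this reduction with $\mathcal{A}$ yields an algorithm for $\Pi$ that runs in polynomial time, since the reduced instance has size polynomial in the original and the composition of polynomials is again a polynomial. As $\Pi \in \ppad{}$ was arbitrary, this gives $\ppad{} \subseteq \polytime{}$.

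Finally, I would note the trivial reverse inclusion $\polytime{} \subseteq \ppad{}$: any search problem solvable in polynomial time reduces to \textsc{End-of-the-Line} by simply computing the answer directly and outputting it, so it belongs to \ppad{} by definition. Combining the two inclusions yields $\polytime{} = \ppad{}$, which contradicts the standing conjecture that the two classes differ. This establishes the contrapositive and hence the corollary.

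I expect no genuine mathematical obstacle here, as the statement is an immediate consequence of \ppadcompleteness; the only points requiring care are bookkeeping about the meaning of the equality $\polytime{} = \ppad{}$ for \emph{search}-problem classes (rather than decision classes) and verifying that the reduction output's instance size is polynomial, so that chaining it with $\mathcal{A}$ preserves polynomial running time. Both are routine once the definitions of \ppad{}, \ppadhard{}, and \ppadcomplete{} given earlier are invoked.
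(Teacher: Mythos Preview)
Your proposal is correct and matches the paper's approach: the corollary is stated in the paper without proof, as it is the standard and immediate consequence of the \ppadcomplete{}ness established in Theorem~\ref{theorem:compstat_ppad}. Your contrapositive argument via the definition of \ppadhard{}ness is exactly the routine justification one would supply if spelling it out.
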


Most notably, these results show that the stable distribution oracle of \cite{cui2021approximately} might be intractable to compute in general, and the shared assumption that $\Gamma_P(\cdot, \pi)$ is contractive in some norm found in many works \cite{xie2021learning, anahtarci2022q, yardim2023policy} might not be trivial to remove without sacrificing tractability.

\subsection{Complexity of FH-MFG}\label{sec:complexity_main_fh}

We will show that finding an $\varepsilon$ solution to the finite horizon problem is also \ppadcomplete{}, in particular even if we restrict our attention to the case when $H=2$ and the transition probabilities $P$ do not depend on $\mu$.
We formalize the structured computational FH-MFG problem.

\begin{definition}[\CompFH{H}]
    Given simple reward function $R\in\setR^{\text{Sim}}$, transition matrix $P(s'|s,a)$, and initial distribution $\mu_0\in\Delta_\setS$, find a time dependent policy $\{\pi_h\}_{h=0}^{H-1}$ such that $\Expfin (\{\pi_h\}_{h=0}^{H-1}) \leq \sfrac{\varepsilon}{|\setS|}$.
\end{definition}

Our result in the case of the finite horizon MFG problem is that even in the case of $H=2$, the problem is \ppadcomplete{}.

\begin{theorem}[\CompFH{2} is \ppadcomplete{}]\label{theorem:compfh_ppad}
There exists an $\varepsilon > 0$ such that the problem \CompFH{2} is \ppadcomplete{}.
\end{theorem}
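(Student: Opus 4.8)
The plan is to establish both membership and hardness. Membership in \ppad{} is the easier direction: the FH-MFG equilibrium with $H=2$ reduces to finding a fixed point of a continuous map on a compact convex set (the best-response-against-induced-population map), which is a Brouwer-type problem and hence lies in \ppad{} via the standard Sperner/\CompEOL{} reduction referenced in \cite{daskalakis2009complexity}. The substantive work is \ppadhardness{}, which I would prove by reducing \CompGC{} to \CompFH{2}. Crucially, because $P$ is $\mu$-independent, the population evolution is forced: $\mu_0$ is fixed by the instance and $\mu_1 = \Gamma_P(\mu_0,\pi_0)$ is determined by the first-stage policy $\pi_0$ alone, independent of any strategic reward considerations. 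This means the only equilibrium coupling happens through the reward's dependence on $\mu_1$, and I can use $\pi_0$ purely as a device to \emph{encode} the gate-value vector into $\mu_1$, while using the stage-$1$ optimality condition to \emph{enforce} the gate constraints.

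Concretely, given a generalized circuit $\setC = (\setV,\setG)$, I would build a state space with a base/source structure so that a single free action at stage $0$ lets each coordinate $\mu_1(s_v)$ be tuned to represent the value $p(v)$ of node $v$ (after rescaling by a small $\theta$, exactly as in the \CompStat{} construction sketched in Theorem~\ref{theorem:compstat_ppad}). The key difference from the stationary case is that here I do not have a self-consistency fixed-point built into the dynamics; instead I must force self-consistency through the reward $R(s,a,\mu_1)$ at the second stage. The idea is to design, for each gate $G(\theta|v_1,v_2|v)$, a reward gadget at stage $1$ such that a near-optimal second-stage response from the agents starting in state $s_v$ is only achievable when $\mu_1(s_v)$ already equals the gate's prescribed output $g(\mu_1(v_1),\mu_1(v_2))$ up to $\eps$; deviation from the gate relation creates an exploitable gap. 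Simultaneously, the stage-$0$ best-response must be consistent with the $\mu_1$ that the stage-$1$ rewards reward, closing the loop so that any $\sfrac{\eps}{|\setS|}$-exploitability policy induces an $\eps'$-satisfying assignment of $\setC$ read off via $v \mapsto u_1(\mu_1(s_v)/\theta)$.

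The main obstacle I anticipate is arranging the reward gadgets so that the \emph{exploitability} (a max over deviating policies, i.e. a worst-case best-response gap) — rather than a fixed-point equation as in \CompStat{} — precisely certifies the gate constraints, while keeping $R$ Lipschitz and \emph{simple} (built from $+,-,\times,\div,\max,\min$) and keeping $P$ entirely $\mu$-independent. Unlike the stationary reduction where the stability condition $\mu^* = \Gamma_P(\mu^*,\pi^*)$ does the encoding for free, here both the encoding (via $\pi_0$) and the constraint enforcement (via stage-$1$ optimality) must be threaded through a single coupled equilibrium, and I must verify that no spurious low-exploitability policies exist that fail to encode a valid assignment. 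A secondary technical point is the $\eps$-to-$\eps'$ accounting: the comparison gate $G_{<}$ needs a piecewise-linear reward gadget (using the $p_\eps$ construction) whose Lipschitz constant blows up like $\eps^{-1}$, so I would fix a constant $\eps$ from \cite{rubinstein2015inapproximability} first and then certify that the resulting $R \in \RSim$ has bounded Lipschitz modulus and that the exploitability threshold $\sfrac{\eps'}{|\setS|}$ composes correctly across the two stages.
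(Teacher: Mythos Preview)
Your overall approach is correct and matches the paper's reduction from \CompGC{}. Two clarifications will sharpen the mechanism and remove an unnecessary worry.

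First, the constraint enforcement runs through \emph{stage-0} optimality, not stage-1. In a horizon-2 game the stage-1 action is terminal and cannot influence any population; in the paper's gadget the stage-1 reward is in fact action-independent. The paper associates to each $v \in \setV$ a triple $\{s_{v,\text{base}}, s_{v,1}, s_{v,0}\}$ with $\mu_0(s_{v,\text{base}}) = 1/V$ and two actions $a_1, a_0$ that deterministically send $s_{v,\text{base}} \to s_{v,1}$ or $s_{v,0}$ respectively. The assignment is $p(v) = \pi_0^*(a_1 \mid s_{v,\text{base}}) = V\mu_1^*(s_{v,1})$, which matches your $\mu_1(s_v)/\theta$ reading with $\theta = 1/V$. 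The equilibrium force is that an agent at $s_{v,\text{base}}$ anticipates the stage-1 reward $R(s_{v,1},\mu_1^*)$ versus $R(s_{v,0},\mu_1^*)$ and picks the larger; low exploitability then means these two are nearly equal (or the policy is at a corner), and the gadget rewards are chosen so that this indifference is exactly the gate relation. Your phrasing ``a near-optimal second-stage response from the agents starting in state $s_v$'' suggests the wrong lever.

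Second, your concern about the comparison gate needing a $p_\varepsilon$-style construction with $\varepsilon^{-1}$ Lipschitz blowup does not arise here. That was needed in \CompStat{} because the stability condition is an equality and the threshold must be baked into $P$. In the FH setting the best-response gap already supplies the thresholding: the paper takes
\[
R(s_{v,1},\mu) = u_1\bigl(V\mu(s_{v_2,1}) - V\mu(s_{v_1,1})\bigr), \qquad
R(s_{v,0},\mu) = u_1\bigl(V\mu(s_{v_1,1}) - V\mu(s_{v_2,1})\bigr),
\]
with Lipschitz constant $O(V)$ in $\mu$, independent of $\varepsilon$. If $p(v_1) \geq p(v_2) + \varepsilon$ then $R(s_{v,0},\mu_1^*) \geq \varepsilon$ while $R(s_{v,1},\mu_1^*) = 0$, so an $\varepsilon^2/|\setS|$-exploitability bound forces $p(v) \leq \varepsilon$. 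The reduction therefore goes from $(\varepsilon^2,2)$-\textsc{FH-Nash} to $\varepsilon$-\textsc{GCircuit} with no $\varepsilon^{-1}$ factor in the rewards.
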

\begin{proof}\emph{(sketch)}
    Once again, showing \CompFH{2} is in \ppad{} is simple: it follows from the fact that a FH-MFG-NE is a fixed point of an easy-to-compute function (see e.g. \cite{huang2023statistical}).
    To show that \CompFH{2} is also \ppadhard{}, for an arbitrary generalized circuit $\setC = (\setV, \setG)$ we construct a FH-MFG whose $\delta$-NE will be $\delta'$-satisfying assignments for $\setC$ for some $\delta$'.
\end{proof}

\begin{corollary}
    There exists a $\varepsilon > 0$ such that there exists no polynomial time algorithm for \CompFH{2}, unless \polytime = \ppad.
\end{corollary}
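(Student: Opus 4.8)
The plan is to obtain the corollary as an immediate consequence of the hardness half of Theorem~\ref{theorem:compfh_ppad}; no new construction is needed, since that theorem already shows \CompFH{2} is \ppadcomplete{}, and in particular \ppadhard{}. Let $\varepsilon > 0$ be the constant witnessing \ppadhard{}ness in Theorem~\ref{theorem:compfh_ppad}. I would argue by contradiction: suppose there were a polynomial-time algorithm $\mathcal{A}$ that, given a simple FH-MFG instance, returns a policy with exploitability at most $\sfrac{\varepsilon}{|\setS|}$, i.e.\ that solves \CompFH{2}.

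By \ppadhard{}ness there is a polynomial-time reduction from the complete problem \CompEOL to \CompFH{2}: it maps each \CompEOL instance $I$ to a simple FH-MFG instance, together with a polynomial-time decoding map sending any $\sfrac{\varepsilon}{|\setS|}$-approximate FH-MFG-NE back to a solution of $I$. Composing this reduction, the assumed solver $\mathcal{A}$, and the decoding map yields a polynomial-time algorithm for \CompEOL, since a composition of constantly many polynomial-time procedures is again polynomial time. Because every problem in \ppad{} reduces to \CompEOL in polynomial time by the definition of the class, chaining such a reduction with the obtained \CompEOL solver gives a polynomial-time algorithm for it; thus every search problem in \ppad{} would be solvable in polynomial time. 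Together with the trivial fact that every polynomial-time solvable search problem lies in \ppad{}, this gives \polytime{} = \ppad{}, contradicting the hypothesis and proving the corollary.

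I do not anticipate a genuine obstacle here: all of the mathematical content is carried by Theorem~\ref{theorem:compfh_ppad}, and what remains is the standard closure of polynomial-time reductions under composition. The only points deserving minor care are (i) that decoding an approximate, rather than exact, FH-MFG-NE into an \CompEOL witness is itself polynomial time, which is part of the reduction underlying Theorem~\ref{theorem:compfh_ppad}, and (ii) reading \polytime{} = \ppad{} as the collapse asserting that every search problem in \ppad{} can be solved in polynomial time, not as an equality of decision classes.
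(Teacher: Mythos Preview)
Your proposal is correct and matches the paper's approach: the paper states the corollary without proof, treating it as the standard consequence of \ppad-hardness from Theorem~\ref{theorem:compfh_ppad}, which is exactly the argument you spell out. One tiny quibble: the sentence about ``every polynomial-time solvable search problem lies in \ppad{}'' is not needed (and not literally true, since \ppad{} sits inside TFNP); as you note in (ii), the intended reading of ``\polytime{} = \ppad{}'' is simply that every \ppad{} problem admits a polynomial-time algorithm, and your contradiction already yields that direction.
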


These results for the FH-MFG show that the (weak) monotonicity assumption present in works such as \cite{perrin2020fictitious, perolat2022scaling} might also be necessary, as in the absence of any structural assumptions the problems are provably difficult.

Finally, we also show that even if $R(s,a,\mu)$ is a linear function of $\mu$ for all $s,a$ (that is, $R\in\RLin$), the intractability holds, although not for fixed $\varepsilon$.
We define the linear computational problem below.

\begin{definition}[\CompFHLinear{H}]\label{theorem:compfh_linear_ppad}
    Given $\varepsilon > 0$, linear reward function $R\in\RLin$, transition matrix $P(s'|s,a)$, find a time dependent policy $\{\pi_h\}_{h=0}^{H-1}$ such that $\Expfin (\{\pi_h\}_{h=0}^{H-1}) \leq \varepsilon.$
\end{definition}

\begin{theorem}[\CompFHLinear{2} is \ppadcomplete{}]
The problem \CompFHLinear{2}  is \ppadcomplete{}.
\end{theorem}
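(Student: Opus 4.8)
The plan is to establish membership in \ppad{} exactly as in Theorem~\ref{theorem:compfh_ppad}, and then to prove \ppadhard{}ness by reducing from \CompTwoNash{} rather than \CompGC{}. For membership I would note that a FH-MFG-NE is a fixed point of the (easy-to-compute) best-response-to-induced-population map; since linear rewards and $\mu$-independent transitions are in particular simple, this map is computable by arithmetic circuits and the associated approximate fixed-point problem lies in \ppad{} for any accuracy $\varepsilon$ supplied in the input. The conceptual point driving the reduction is that linearity of $R$ in $\mu$ confines the achievable payoff structure to something bilinear, which matches bimatrix games but cannot emulate the $\max/\min$ gates used in the \CompFH{2} reduction. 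This is exactly why hardness should hold only for $\varepsilon$ supplied in the input and not for a fixed constant, mirroring the fact that \CompTwoNash{} is \ppadcomplete{} for input-dependent (inverse-polynomial) $\varepsilon$.

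For the reduction, given a bimatrix instance $(A,B)\in[0,1]^{K_1\times K_2}\times[0,1]^{K_1\times K_2}$, I would build a horizon-$2$ game as follows. Let $\setA$ have $\max\{K_1,K_2\}$ actions and let $\setS$ contain two initial states $s_0^{(1)}, s_0^{(2)}$, one strategy state $s_i^{(1)}$ per row $i\in[K_1]$, one strategy state $s_j^{(2)}$ per column $j\in[K_2]$, and a sink. Set $\mu_0(s_0^{(1)})=\mu_0(s_0^{(2)})=\tfrac12$. The ($\mu$-independent) dynamics send $s_0^{(1)}$ under action $i$ deterministically to $s_i^{(1)}$ and $s_0^{(2)}$ under action $j$ to $s_j^{(2)}$ (surplus actions and all strategy states flow to the sink). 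Rewards vanish at step $0$ and at step $1$ are action-independent and linear in $\mu$:
\[
R(s_i^{(1)},\cdot,\mu)=2\!\!\sum_{j\in[K_2]}\!\!A_{i,j}\,\mu(s_j^{(2)}),\qquad R(s_j^{(2)},\cdot,\mu)=2\!\!\sum_{i\in[K_1]}\!\!B_{i,j}\,\mu(s_i^{(1)}),
\]
all other rewards being zero. Since the total mass on either block of strategy states is at most $\tfrac12$, these rewards stay in $[0,1]$, so $R\in\RLin$ and $P\in\PLin$ as required.

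The correspondence I would then verify is that for a population policy $\vecpi=\{\pi_h\}$ the induced distribution satisfies $\Lambdaop(\mu_0,\vecpi)_1(s_i^{(1)})=\tfrac12\sigma_1(i)$ and $\Lambdaop(\mu_0,\vecpi)_1(s_j^{(2)})=\tfrac12\sigma_2(j)$ for $\sigma_1:=\pi_0(\cdot|s_0^{(1)})$ and $\sigma_2:=\pi_0(\cdot|s_0^{(2)})$; the factor of $2$ then makes the step-$1$ reward at $s_i^{(1)}$ exactly the bimatrix payoff $\sum_j A_{i,j}\sigma_2(j)$, and similarly for columns. Because transitions are deterministic and step-$1$ actions are payoff-irrelevant, the only meaningful deviation is the step-$0$ action, and a single agent (starting in each block with probability $\tfrac12$) best-responds by concentrating on an optimal pure row/column against the fixed $\sigma_2,\sigma_1$. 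This yields the clean identity $\Expfin(\vecpi)=\tfrac12\,\mathrm{Reg}_A(\sigma_1,\sigma_2)+\tfrac12\,\mathrm{Reg}_B(\sigma_1,\sigma_2)$, where the two terms are the players' Nash regrets. As both regrets are nonnegative, $\Expfin(\vecpi)\le\varepsilon/2$ forces each regret below $\varepsilon$, so $(\sigma_1,\sigma_2)$ is an $\varepsilon$-Nash, while conversely an $\varepsilon$-Nash gives exploitability at most $\varepsilon$. Running a \CompFHLinear{2} solver at accuracy $\varepsilon/2$ therefore solves \CompTwoNash{} at accuracy $\varepsilon$ in polynomial time, establishing \ppadhard{}ness.

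I expect the main obstacle to be the exploitability decomposition: one must argue carefully that the FH-MFG deviator, who is free to re-randomize at both time steps and over both player roles simultaneously, gains exactly the additive sum of the two single-player regrets and nothing more. Pinning this down requires checking that the sink and surplus-action gadget never provide a profitable deviation and that the step-$1$ action genuinely cannot help, after which the remaining accounting of the $\tfrac12$ and $2$ factors is routine.
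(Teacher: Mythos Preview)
Your reduction from \CompTwoNash{} is exactly the paper's approach, and when $K_1=K_2$ your clean identity $\Expfin(\vecpi)=\tfrac12\mathrm{Reg}_A+\tfrac12\mathrm{Reg}_B$ is correct and in fact slightly tidier than the paper's bookkeeping.

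There is, however, a real gap in the $K_1\neq K_2$ case, and your diagnosis of it is inverted. You plan to ``check that the sink and surplus-action gadget never provide a profitable \emph{deviation}''; that is true (the sink pays $0$ while strategy states pay $\ge 0$) but it is not the obstruction. The problem is the \emph{population} may sit on surplus actions. Concretely, take $K_1=1$, $K_2=2$, $A=(0,0)$, $B=(0,1)$: the unique bimatrix Nash is $(\delta_1,\delta_2)$. In your game the population policy that sends all mass from $s_0^{(1)}$ to the sink and plays $\pi_0(\cdot\mid s_0^{(2)})=\delta_1$ induces $\mu_1(s_1^{(1)})=0$, so every step-$1$ reward is $0$ and $\Expfin=0$ exactly. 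Yet the extracted $\sigma_1$ is undefined (all mass on the surplus action), and forcing $\sigma_1=\delta_1$ with the extracted $\sigma_2=\delta_1$ gives player~2 regret $1$. So your reduction can fail on an exact FH-MFG-NE.

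The paper's construction closes this with a constant offset: it uses $R(s_i^{(1)},\cdot,\mu)=\tfrac12+\tfrac12\sum_jA_{i,j}\mu(s_j^{(2)})$ (still linear, still in $[0,1]$), making every real strategy state worth at least $\tfrac12$ while the surplus destination pays $0$. Any $\varepsilon$-NE then satisfies $\sum_{i\le K_1}\pi_0(i\mid s_0^{(1)})\ge 1-4\varepsilon$, after which restricting and renormalising $\pi_0(\cdot\mid s_0^{(1)})$ to $[K_1]$ yields an $O(\varepsilon)$-Nash. With this one-line change your argument goes through; the ``routine accounting'' you anticipate is otherwise correct.
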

\begin{proof}\emph{(sketch)}
    \newcommand{\sbaseA}{s_{\text{base}}^1}
    \newcommand{\sbaseB}{s_{\text{base}}^2}
    In this case, we provide a reduction from \CompTwoNash{}.
    For a given \CompTwoNash{} instance $K_1, K_2 \in \mathbb{N}_{>0}$ with payoff matrices $A, B \in [0,1]^{K_1, K_2}$, we construct an FH-MFG with one initial state for each player and one additional state for each strategy of each of the players, resulting in a FH-MFG with $K_1 + K_2 + 2$ states, $\setS := \{\sbaseA, \sbaseB, s^1_1,\ldots, s^1_{K_1}, s^2_{1}, \ldots, s^2_{K_2}\}$.
    We set $\mu_0(\sbaseA) = \mu_0(\sbaseB) = \sfrac{1}{2}$.
    The action set will consist of $\max \{K_1, K_2 \}$ actions.
    In the first round, an agent starting from $\sbaseA$ will be transitioned to one of states $s^1_1,\ldots, s^1_{K_1}$ depending on the action picked receiving zero reward, and likewise and agent starting from $\sbaseB$ will transition to one of states $s^2_1,\ldots, s^2_{K_2}$.
    In the second round, the agent will receive a population-dependent reward regardless of the action player, which is equal to the expected utility of an action (a linear function).
    We postpone the cumbersome details relating to error analysis and dealing with the case $K_1 \neq K_2$ to the appendix.
\end{proof}

We emphasize that for \CompFHLinear{2} the accuracy $\varepsilon$ is also an input of the problem: hence the existence of a pseudo-polynomial time algorithm is not ruled out.

\section{Discussion and Conclusion}

We provided novel results on when mean-field RL is relevant for real-world applications and when it is tractable from a computational perspective.
Our results differ from existing work by provably characterizing cases where MFGs might have practical shortcomings.
From the approximation perspective, we show clear conditions and lower bounds on when the MFGs efficiently approximate real-world games.
Computationally, we show that even simple MFGs can be as hard as solving $N$-player general sum games.

We emphasize that our results do not discard MFGs, but rather identify potential bottlenecks (and conditions to overcome these) when using mean-field RL to compute a good approximate NE.

\begin{acks}
This project is supported by Swiss National Science Foundation (SNSF) under the framework of NCCR Automation and SNSF Starting Grant. A.Goldman was supported by the ETH Student Summer Research Fellowship.
\end{acks}

\bibliographystyle{ACM-Reference-Format} 
\bibliography{references}

\newpage
\appendix

\section{MFG Approximation Results}

\subsection{Preliminaries}\label{section:concentr_and_anticoncentr}

To establish explicit upper bounds on the approximation rate, we will use standard concentration tools.
\begin{definition}[Sub-Gaussian]
    \label{def:sg_rv}
    Random variable $\xi$ is called sub-Gaussian with variance proxy $\sigma^2$ if  $\forall\lambda\in\R: \quad
    \EE{
    e^{\lambda(\xi-\EE{\xi})}
    }
    \leq 
    e^{\frac{\lambda^2\sigma^2}{2}}$.
    In this case, we write $\xi\in\sgauss{\sigma^2}$.
\end{definition}

It is easy to show that if $\xi\in\sgauss{\sigma^2}$, then $\alpha \xi\in\sgauss{\alpha^2\sigma^2}$ for any constant $\alpha \in \mathbb{R}$.
Furthermore, if $\xi_1, \dots, \xi_n$ are independent random variables with $\xi_i\in\sgauss{\sigma_i^2}$, then $\sum_{i}\xi_i\in\sgauss{\sum_i \sigma_i^2}$.
Finally, if $\xi$ is almost surely bounded in $[a,b]$, then $\xi_i\in\sgauss{\sfrac{(b-a)^2}{4}}$.
We also state the well-known Hoeffding concentration bound and a corollary, Lemma~\ref{lemma:sgevbound}.

\begin{lemma}[Hoeffding inequality \cite{mcdiarmid1989method}]
\label{lemma:hoefd_ineq}
Let $\xi\in\sgauss{\sigma^2}$. 
Then for any $t>0$ it holds that
$\PP{
|\xi-\EE{\xi}|\geq t
}\leq 
2e^{-\frac{t^2}{2\sigma^2}}$.
\end{lemma}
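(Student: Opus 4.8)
The plan is to apply the standard Chernoff (exponential Markov) argument, exploiting the fact that the sub-Gaussian moment generating function bound in Definition~\ref{def:sg_rv} holds for \emph{every} $\lambda \in \R$, which is precisely what delivers a two-sided tail estimate with no additional hypotheses.

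First I would center the variable, writing $Y := \xi - \EE{\xi}$, so that the defining inequality reads $\EE{e^{\lambda Y}} \leq e^{\lambda^2 \sigma^2 / 2}$ for all $\lambda \in \R$. For the upper tail, fix any $\lambda > 0$ and apply Markov's inequality to the nonnegative random variable $e^{\lambda Y}$: since the events $\{Y \geq t\}$ and $\{ e^{\lambda Y} \geq e^{\lambda t}\}$ coincide, I obtain $\PP{Y \geq t} \leq e^{-\lambda t}\,\EE{e^{\lambda Y}} \leq e^{-\lambda t + \lambda^2 \sigma^2 / 2}$. The next step is to optimize over the free parameter: minimizing the exponent $-\lambda t + \lambda^2 \sigma^2/2$ over $\lambda > 0$ gives the minimizer $\lambda^\star = t/\sigma^2$ (admissible precisely because $t > 0$), and substituting back yields $\PP{Y \geq t} \leq e^{-t^2/(2\sigma^2)}$.

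For the lower tail I would observe that $-Y$ is itself sub-Gaussian with the same variance proxy: substituting $-\lambda$ for $\lambda$ in the definition (legitimate since it holds for all real $\lambda$) gives $\EE{e^{\lambda(-Y)}} \leq e^{\lambda^2\sigma^2/2}$. Re-running the identical Chernoff argument then produces $\PP{Y \leq -t} = \PP{-Y \geq t} \leq e^{-t^2/(2\sigma^2)}$. A final union bound over these two symmetric events gives $\PP{|Y| \geq t} \leq \PP{Y \geq t} + \PP{Y \leq -t} \leq 2 e^{-t^2/(2\sigma^2)}$, which is exactly the claimed bound after recalling $Y = \xi - \EE{\xi}$.

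I do not anticipate a genuine obstacle, as this is a textbook argument; the only points requiring minor care are that the sub-Gaussian hypothesis must be invoked for both signs of $\lambda$ in order to control both tails, and that the optimizer $\lambda^\star = t/\sigma^2$ indeed lies in the admissible range $\lambda > 0$, which holds exactly because $t > 0$. The factor $2$ in the final bound is an immediate consequence of the union bound over the two symmetric tails, and the constant $\frac{1}{2\sigma^2}$ in the exponent is the one obtained from the optimization.
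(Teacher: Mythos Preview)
Your proof is correct and is the standard Chernoff argument. The paper does not prove this lemma at all; it simply states it with a citation to \cite{mcdiarmid1989method} as a known result, so there is no approach to compare against.
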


\begin{lemma}
\label{lemma:sgevbound}
    Let $\xi\in\sgauss{\sigma^2}$. Then
\begin{align*}
    \EE{
    |\xi-\EE{\xi}|
    }\leq\sqrt{2\pi\sigma^2}, \quad
    \EE{
    (\xi-\EE{\xi})^2
    }\leq 4\sigma^2
\end{align*}
\end{lemma}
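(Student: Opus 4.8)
The plan is to derive both moment bounds from the two-sided tail estimate of Lemma~\ref{lemma:hoefd_ineq} together with the layer-cake (tail-integral) representation of the expectation of a nonnegative random variable. First I would center the variable by setting $Y := \xi - \EE{\xi}$, so that $\EE{Y} = 0$ and, directly from Lemma~\ref{lemma:hoefd_ineq}, $\PP{|Y| \geq t} \leq 2 e^{-t^2/(2\sigma^2)}$ holds for every $t > 0$. Since $|\xi - \EE{\xi}| = |Y|$ and $(\xi - \EE{\xi})^2 = Y^2$, it suffices to bound $\EE{|Y|}$ and $\EE{Y^2}$.

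For the first moment, I would write $\EE{|Y|} = \int_0^\infty \PP{|Y| \geq t}\, dt$, substitute the tail bound, and evaluate the Gaussian integral $\int_0^\infty e^{-t^2/(2\sigma^2)}\, dt = \tfrac{1}{2}\sqrt{2\pi\sigma^2}$; the factor of $2$ from the two-sided bound then produces exactly $\EE{|Y|} \leq \sqrt{2\pi\sigma^2}$. For the second moment, I would use $\EE{Y^2} = \int_0^\infty \PP{|Y| \geq \sqrt{t}}\, dt$ (equivalently $\int_0^\infty 2t\, \PP{|Y| \geq t}\, dt$), bound the integrand by $2 e^{-t/(2\sigma^2)}$, and evaluate $\int_0^\infty e^{-t/(2\sigma^2)}\, dt = 2\sigma^2$, giving $\EE{Y^2} \leq 4\sigma^2$.

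There is no genuine obstacle here: both estimates reduce to evaluating one Gaussian and one exponential integral once the tail bound is in place. The only points requiring mild care are tracking the constant $2$ coming from the two-sided tail estimate through each integral, and performing the change of variables $t \mapsto \sqrt{t}$ correctly in the layer-cake formula for $Y^2$.
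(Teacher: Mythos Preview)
Your proposal is correct and follows essentially the same approach as the paper: both apply the layer-cake formula $\EE{|Y|} = \int_0^\infty \PP{|Y|\geq t}\,dt$ and $\EE{Y^2} = \int_0^\infty \PP{|Y|\geq \sqrt{t}}\,dt$, substitute the Hoeffding tail bound from Lemma~\ref{lemma:hoefd_ineq}, and evaluate the resulting Gaussian and exponential integrals to obtain $\sqrt{2\pi\sigma^2}$ and $4\sigma^2$ respectively.
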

\begin{proof}
    \begin{align*}
        \EE{
        |\xi-\EE{\xi}|
        }
        &=
        \int_0^\infty 
        \P(|\xi-\EE{\xi}|\geq t) dt \\
        &
        \stackrel{(I)}{\leq}
        2\int_0^\infty 
        e^{-\frac{t^2}{2\sigma^2}} dt
        =
        \sqrt{2\pi\sigma^2}
    \end{align*}
    Inequality $(I)$ is true due to \Cref{lemma:hoefd_ineq}. 
    Likewise,
    \begin{align*}
        \EE{
        (\xi-\EE{\xi})^2
        }
        &=
        \int_0^\infty 
        \P((\xi-\EE{\xi})^2\geq t) dt\\
        &=
        \int_0^\infty 
        \P(|\xi-\EE{\xi}|\geq \sqrt{h}) dt\\
        &\stackrel{(II)}{\leq}
        2\int_0^\infty 
        e^{-\frac{h}{2\sigma^2}} dt
        =
        4\sigma^2
    \end{align*}
\end{proof}

Establishing lower bounds for the mean-field approximation of the $N$-player game will be more challenging as it will require different tools.
To establish lower bounds, we will need to use the following anti-concentration result for the binomial distribution.

\begin{lemma}[Anti-concentration for binomial]
\label{lemma:anticoncentration}
Let $N \in \mathbb{N}_{> 0}$ and $X \sim \operatorname{Binom}(N,p)$ be drawn from a binomial distribution for some $p\in [\sfrac{1}{2}, 1]$.
Then, $\Prob\left[X\geq\frac{N}{2} + \frac{ \sqrt{N}}{2}\right] \geq \frac{1}{20}.$
\end{lemma}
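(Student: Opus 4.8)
The plan is to reduce the claim to the symmetric case $p=\tfrac12$ and then invoke a quantitative central limit theorem. First I would note that the threshold $k := \tfrac{N}{2} + \tfrac{\sqrt N}{2}$ does not depend on $p$, and that $\Prob[\operatorname{Binom}(N,p) \geq k]$ is non-decreasing in $p$. This follows from the standard monotone coupling: writing $X = \sum_{i=1}^N \ind{U_i \leq p}$ for i.i.d.\ uniform $U_i$ on $[0,1]$, increasing $p$ increases every indicator pointwise, hence increases $X$ pointwise, hence increases $\Prob[X \geq k]$. Since $p \in [\tfrac12,1]$, it therefore suffices to prove the bound for $p=\tfrac12$, i.e.\ $\Prob[\operatorname{Binom}(N,\tfrac12) \geq \tfrac{N}{2} + \tfrac{\sqrt N}{2}] \geq \tfrac{1}{20}$.

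For $p=\tfrac12$ I would write $X - \tfrac{N}{2} = \sum_{i=1}^N \xi_i$ with $\xi_i \in \{-\tfrac12, \tfrac12\}$ i.i.d.\ symmetric, so that $\EE{\xi_i}=0$, $\EE{\xi_i^2}=\tfrac14$, and $\EE{|\xi_i|^3}=\tfrac18$. The sum then has standard deviation $\sigma_N = \tfrac{\sqrt N}{2}$, so the event in question is \emph{exactly} $\{X - \tfrac N2 \geq \sigma_N\}$, one standard deviation above the mean. I would apply the Berry--Esseen theorem to the normalized sum $Z_N := (X-\tfrac N2)/\sigma_N$: since $\EE{|\xi_i|^3}/(\EE{\xi_i^2})^{3/2} = (\tfrac18)/(\tfrac14)^{3/2} = 1$ and an absolute Berry--Esseen constant $C_0 \leq \tfrac12$ is available, one gets $\sup_x |\Prob[Z_N \leq x] - \Phi(x)| \leq C_0/\sqrt N$ for the standard normal CDF $\Phi$. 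Hence
\begin{align*}
\Prob\left[ X \geq \tfrac{N}{2} + \tfrac{\sqrt N}{2}\right] = \Prob[Z_N \geq 1] \geq \Phi(-1) - \frac{C_0}{\sqrt N} \geq 0.158 - \frac{1}{2\sqrt N},
\end{align*}
which exceeds $\tfrac{1}{20}$ as soon as $\sqrt N \geq 5$, i.e.\ $N \geq 25$.

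It remains to handle the finitely many cases $1 \leq N \leq 24$, for which I would verify the inequality by direct evaluation of the symmetric binomial upper tail. In every such case $\Prob[\operatorname{Binom}(N,\tfrac12) \geq \tfrac N2 + \tfrac{\sqrt N}{2}]$ is comfortably above $\tfrac1{20}$ (for instance it equals $\tfrac12$ at $N=1$, $\tfrac18$ at $N=3$, and $\tfrac{7}{64}\approx 0.109$ at $N=6$, and it never dips below roughly $0.1$ over this range), which closes the argument together with the monotonicity reduction.

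\textbf{Main obstacle.} The crux is that the deviation is \emph{exactly} one standard deviation. Elementary moment inequalities available in this paper (Chebyshev via Lemma~\ref{lemma:sgevbound}, or Cauchy--Schwarz / Paley--Zygmund arguments) only control tails strictly beyond one standard deviation and collapse to the trivial bound at precisely $\sigma_N$, so they cannot produce a positive lower bound here; indeed any such estimate yields only an upper bound on the tail. A genuinely distributional tool is therefore required -- a quantitative CLT such as Berry--Esseen as above, or, if a self-contained argument is preferred, an explicit Stirling estimate of the central binomial coefficients summed over an $\Theta(\sqrt N)$-length window -- and in either case the small-$N$ regime, where the CLT error term is not yet negligible, must be settled by a finite check.
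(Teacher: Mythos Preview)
Your proposal is correct and reaches the same conclusion, but both the reduction to $p=\tfrac12$ and the tail estimate differ from the paper's argument.

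For the reduction, you use the standard monotone coupling $X=\sum_i \ind{U_i\le p}$ to show $\Prob[\operatorname{Binom}(N,p)\ge k]$ is non-decreasing in $p$; this is cleaner and more direct than the paper's route, which first splits on whether $k_0<\lceil Np\rceil$ (invoking a median bound for the binomial) and otherwise uses monotonicity of $\bar p\mapsto \bar p^{k}(1-\bar p)^{N-k}$ on $[0,p]$ term by term. For the tail at $p=\tfrac12$, you appeal to Berry--Esseen with the known constant $C_0\le\tfrac12$, reducing the problem to $\Phi(-1)-C_0/\sqrt N$ and a finite check for $N\le 24$. The paper instead exploits symmetry to bound the central mass $\Prob[|X-\tfrac N2|<\tfrac{\sqrt N}{2}]$ directly, using Stirling's inequality to show $\binom{N}{\lfloor N/2\rfloor}\le \tfrac{e}{\pi}\tfrac{2^N}{\sqrt N}$ and summing at most $\sqrt N$ such terms; this gives the bound $e/\pi\le 9/10$ on the central mass for $N\ge 16$ (even) and odd $N\ge 16$, with small $N$ handled by hand. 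Your approach is more conceptual and shorter to state but imports a non-trivial external constant; the paper's Stirling argument is fully self-contained and elementary, at the cost of slightly more bookkeeping for the even/odd cases.
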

\begin{proof}
    For $k_0 := \left\lceil \frac{N}{2} + \frac{ \sqrt{N}}{2} \right\rceil$, we will lower bound $\sum_{k=k_0}^{N} \binom{N}{k} p^{k} (1-p)^{N-k}$ when $N$ is large enough.
    If $k_0 < \lceil Np \rceil$, then the probability in the statement above is bounded below trivially by $\sfrac{1}{2}$ since $\lfloor Np \rfloor$ lower bounds the median of the binomial \cite{kaas1980mean}.
    Otherwise, if $k_0 \geq \lceil Np \rceil$, then the function $\widebar{p} \rightarrow \widebar{p}^{k} (1-\widebar{p})^{N-k}$ is increasing in $\widebar{p}$ in the interval $[0, p]$. 
    As $\sfrac{1}{2} \in [0,p]$, it is then sufficient to assume $p = \sfrac{1}{2}$, and to upper bound $\Prob\left[ \frac{N}{2} - \frac{ \sqrt{N}}{2} < X < \frac{N}{2} + \frac{ \sqrt{N}}{2}\right]$ by $\sfrac{9}{10}$ as the binomial probability mass is symmetric around $\frac{N}{2}$ when $p=\sfrac{1}{2}$.

    First assuming $N$ is even, we obtain by monotonicity $\binom{N}{k} \leq\binom{N}{\sfrac{N}{2}} $.
    Using the Stirling bound $\sqrt{2 \pi} k^{k+\frac{1}{2}} \mathrm{e}^{-k} \leq k ! \leq \mathrm{e} k^{k+\frac{1}{2}} \mathrm{e}^{-k}$, we further upper bound $\binom{N}{\sfrac{N}{2}} \leq \frac{e}{\pi} \frac{2^N}{\sqrt{N}}$, resulting in the bound $\Prob\left[ \frac{N}{2} - \frac{ \sqrt{N}}{2} < X < \frac{N}{2} + \frac{ \sqrt{N}}{2}\right] \leq 2^{-N} \sqrt{N} \binom{N}{\sfrac{N}{2}} \leq \frac{e}{\pi} \leq \sfrac{9}{10} $, since there are at most $\sqrt{N}$ binomial coefficients being summed.
    Finally, assume $N = 2m+1$ is odd, then by the binomial formula $\binom{2m+1}{m+1} = \binom{2m}{m+1} + \binom{2m}{m} \leq 2 \binom{2m}{m} \leq \frac{2e}{\pi} \frac{2^{2m}}{\sqrt{2m}}$.
    Hence we have the bound on the sum $\Prob\left[ \frac{N}{2} - \frac{ \sqrt{N}}{2} < X < \frac{N}{2} + \frac{ \sqrt{N}}{2}\right] \leq \frac{e\sqrt{N}}{\pi} \frac{1}{\sqrt{N-1}}$.
    It is easy to verify that for $N \geq 16$, $\frac{e\sqrt{N}}{\pi\sqrt{N-1}} \leq \sfrac{9}{10}$, and the case when $N <16$ and $N$ is odd follows by manual computation.
\end{proof}

Finally, we prove slightly more general upper bounds than presented in the main text that approximates the exploitability of an \emph{approximate} MFG-NE in a finite population setting.
Hence we define the following notions approximate FH-MFG and Stat-MFG.

\begin{definition}[$\delta$-FH-MFG-NE]
Let $(\setS, \setA, H, P, R, \mu_0)$ be a FH-MFG.
    Then, a $\delta$-FH-MFG Nash equilibrium is defined as:
    \begin{align}
        \textit{Policy } &\vecpi_\delta^* = \{ \pi^*_{\delta,h}\}_{h=0}^{H-1} \in \Pi _ H \text{ such that } \notag \\
        & \Expfin(\{\pi^*_{\delta,h}\}_{h=0}^{H-1}) \leq \delta. \tag{$\delta$-FH-MFG-NE}
    \end{align}
\end{definition}

\begin{definition}[$\delta$-Stat-MFG-NE]
Let $(\setS, \setA, P, R, \gamma)$ be a Stat-MFG.
A policy-population pair $(\mu_\delta^*, \pi_\delta^*) \in \Delta_\setS \times \Pi$ is called a $\delta$-Stat-MFG Nash equilibrium if the two conditions hold:
    \begin{align*}
        \textit{Stability: } \quad &\mu_\delta^* = \Gamma_P(\mu_\delta^*, \pi_\delta^*), \notag \\
        \textit{Optimality: } \quad &\Vstat(\mu_\delta^*, \pi_\delta^*) \geq \max_{\pi \in \Pi} \Vstat(\mu_\delta^*, \pi) - \delta. 
        \tag{$\delta$-Stat-MFG-NE}
    \end{align*}
\end{definition}

\subsection{Upper Bound for FH-MFG: Extended Proof of Theorem~\ref{theorem:upper_approx_fin}}

Throughout this section we work with fixed $\dnm\in \PL_{\kmu}$ and $\rwd\in \RL_{\lmu}$. 
For any $\cX$ valued random variable $x$ denote $\Law(x)(\cdot)\in\dsp{\cX}$ as the distribution of $x$.
We start by introducing some notation.

For given $\rwd$ and $\dnm$ define the following constants:
    \begin{align*}
        \ls &\eqdef \sup_{s,s',a,\mu}
        \left|
        \rwd(s,a,\mu)-\rwd(s',a,\mu)
        \right|, \\
        \la &\eqdef \sup_{s,a,a',\mu}
        \left|
        \rwd(s,a,\mu)-\rwd(s,a',\mu)
        \right|, \\
        \ks &\eqdef \sup_{s,s',a,\mu}
        \left\|
        \dnm(\cdot|s,a,\mu)-\dnm(\cdot|s',a,\mu)
        \right\|, \\
        \ka &\eqdef \sup_{s,a,a',\mu}
        \left\|
        \dnm(\cdot|s,a,\mu)-\dnm(\cdot|s,a',\mu)
        \right\|.
    \end{align*} 
$\rwd$ and $\dnm$ are bounded due to \Cref{def:dynamics_rewards}, thus all constants $\ka,\ks,\la,\ls$ are finite and well-defined, and it always holds that $\ks,\ka\leq 2$ and $\ls,\la\leq 1$.
With the above definition of constants, the more general Lipschitz condition holds:
$\forall\ s,s'\in\ssp$, $a,a'\in\asp$, $\mu,\mu'\in\dsp{\ssp}$
    \begin{align*}
    \|\dnm(\cdot | s,a,\mu)
    -\dnm(\cdot|s',a',\mu')\|_1 \leq& 
    \kmu \|\mu-\mu'\|_1 +
    \ks d(s,s') \\
        &+
    \ka d(a,a'),\\
    |\rwd( s,a,\mu)
    -\rwd(s',a',\mu')| \leq& 
    \lmu \|\mu-\mu'\|_1 +
    \ls d(s,s') \\
    &+
    \la d(a,a').
\end{align*}
We also introduce the shorthand notation for any $s\in \setS, u\in \Delta_\setA, \mu \in \Delta_\setS$:
\begin{align*}
    \dnmpol(\cdot|s,u,\mu)&\eqdef \sum_{a\in\asp}u(a)\dnm(\cdot|s,a,\mu), \\
    \rwdpol(s,u,\mu)&\eqdef \sum_{a\in\asp}u(a)\rwd(s,a,\mu).
\end{align*}
By \cite[Lemma~C.1]{yardim2023policy}, it holds that
\begin{align}
\|\dnmpol(\cdot|s,u,\mu)
-
\dnmpol(\cdot|s',u',\mu')\|_1
\leq &
\kmu\|\mu-\mu'\|_1+\ks d(s,s') \notag \\
    &+\frac{\ka}{2}\|u-u'\|_1, \notag \\
|\rwdpol(s,u,\mu)
-
\rwdpol(s',u',\mu')|
\leq &\lmu \|\mu-\mu'\|_1
+
\ls d(s,s') \notag \\
&+\frac{\la}{2}\|u-u'\|_1 .\label{eq:lipschitz_pr_policy_avg}
\end{align}
We will define a new operator for tracking the evolution of the population distribution over finite time horizons for a time-varying policy $\forall \vecpi = \{ \pi_h \}_{h=0}^{H-1} \in \Pi_H$:
\begin{align*}
    \gpopind{\mu}{\vecpi}{h}&\eqdef 
    \underbrace{\gpop{\dots \gpop{\gpop{\mu}{\pi_{0}}}{\pi_{1}} \dots}{\pi_{h-1}}}_{h\text{ times}} \\
    &= \mu_{h}^{\vecpi} = \Lambdaop(\mu_0, \vecpi)_h,
\end{align*}
so $\gpopind{\mu}{\vecpi}{0} = \mu_0$.
By repeated applications of Lemma~\ref{lemma:lipschitz_gpop}, we obtain the Lipschitz condition:
\begin{align}
        \|
        &\gpopind{\mu}{\{\pi_{i}\}_{i=0}^{n-1}}{n}-\gpopind{\mu'}{\{\pi_{i}'\}_{i=0}^{n-1}}{n}
        \|_1 \notag\\
        &\leq
        \lpopmu 
        \|
        \gpopind{\mu}{\{\pi_{i}\}_{i=0}^{n-2}}{n-1}-\gpopind{\mu'}{\{\pi_{i}'\}_{i=0}^{n-2}}{n-1}
        \|_1\notag\\
        &+
        \frac{\ka}{2}\|\pi_{n-1}-\pi_{n-1}'\|_1 \notag\\
        &\leq 
        \lpopmu^{n} \|\mu-\mu'\|_1
        +
        \frac{\ka}{2}
        \sum_{i=0}^{n-1}\lpopmu^{n-1-i}\|\pi_{i}-\pi_{i}'\|_1,  \label{eq:gpop_composition_lip}
\end{align}
where $\lpopmu = (\kmu+\frac{\ks}{2}+\frac{\ka}{2})$.

The proof will proceed in three steps:
\begin{itemize}
    \item \textbf{Step 1.} Bounding the expected deviation of the empirical population distribution from the mean-field distribution $\EE{\|\empdist{h}-\mu_h^{\vecpi}\|_1}$ for any given policy $\vecpi$.
    \item \textbf{Step 2. } Bounding difference of $N$ agent value function $\JfinN$ and the infinite player value function $\Vfin$.
    \item \textbf{Step 3. } Bounding the exploitability of an agent when each of $N$ agents are playing the FH-MFG-NE policy.
\end{itemize}

\textbf{Step 1: Empirical distribution bound.}
Due to its relevance for a general connection between the FH-MFG and the $N$-player game, we state this result in the form of an explicit bound.

\begin{lemma}
    \label{lemma:fh_empdist}
    Suppose for the $N$-FH-MFG $(N, \setS, \setA, N, P, R, \gamma)$, agents $i=1, \ldots,N$ follow policies $\vecpi^i=\{\pi_h^i\}_{h}$.
    Let $\widebar{\vecpi}=\{\widebar{\pi}_h\}_h \in \Pi^H$ be arbitrary and $\vecmu^{\widebar{\vecpi}} := \{\mu_h^{\widebar{\vecpi}}\}_{h=0}^{H-1} = \Lambdaop(\mu_0, \widebar{\vecpi})$.
    Then for all $h\in \{0, \ldots, H-1\}$, it holds that:
    \begin{align*}
        \EE{\|\empdist{h}-\mu_h^{\widebar{\vecpi}}\|_1}\leq
            \frac{1-\lpopmu^{h+1}}{1-\lpopmu}
            \setsz{\ssp}\sqrt{\frac{\pi}{2N}}
            +
            \frac{\ka}{2N}
            \sum_{i=0}^{h-1} \lpopmu^{h-i-1}
            \poldif{\pi_{i}},
    \end{align*}
    where $\poldif{h} \eqdef \frac{1}{N}\sum_i \|\widebar{\pi}_h-\pi_h^i\|_1$
\end{lemma}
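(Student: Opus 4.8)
The plan is to establish a one-step recursion relating $\EE{\|\empdist{h+1} - \mu_{h+1}^{\widebar{\vecpi}}\|_1}$ to $\EE{\|\empdist{h} - \mu_h^{\widebar{\vecpi}}\|_1}$ and then unroll it geometrically. Let $\cF_h$ denote the $\sigma$-algebra generated by the joint states $\{s_h^i\}_i$ at step $h$, so that $\empdist{h}$ is $\cF_h$-measurable. The crucial structural observation is that, conditioned on $\cF_h$, the next states $s_{h+1}^1, \ldots, s_{h+1}^N$ are \emph{independent} across agents: although each agent's transition kernel $P(\cdot | s_h^i, a_h^i, \empdist{h})$ shares the common population argument $\empdist{h}$, that argument is frozen once we condition on $\cF_h$. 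I would introduce the conditional mean $\nu_{h+1} := \EEcc{\empdist{h+1}}{\cF_h} = \frac{1}{N}\sum_i \dnmpol(\cdot | s_h^i, \pi_h^i(s_h^i), \empdist{h})$ and split via the triangle inequality $\|\empdist{h+1} - \mu_{h+1}^{\widebar{\vecpi}}\|_1 \leq \|\empdist{h+1} - \nu_{h+1}\|_1 + \|\nu_{h+1} - \mu_{h+1}^{\widebar{\vecpi}}\|_1$ into a statistical fluctuation term and a deterministic drift term.

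For the fluctuation term I would argue coordinatewise: for each $s' \in \ssp$, the quantity $\empdist{h+1}(s') - \nu_{h+1}(s')$ is, conditionally on $\cF_h$, an average of $N$ independent centered indicators bounded in $[0,1]$, hence sub-Gaussian with variance proxy $\frac{1}{4N}$. Lemma~\ref{lemma:sgevbound} then gives $\EEcc{|\empdist{h+1}(s') - \nu_{h+1}(s')|}{\cF_h} \leq \sqrt{\pi/(2N)}$, and summing over the $\setsz{\ssp}$ coordinates followed by taking total expectation yields $\EE{\|\empdist{h+1}-\nu_{h+1}\|_1} \leq \setsz{\ssp}\sqrt{\pi/(2N)} =: C$. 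The same computation handles the base case $h=0$, where $\nu_0 = \mu_0 = \mu_0^{\widebar{\vecpi}}$ and the states are i.i.d.\ from $\mu_0$.

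For the drift term, the key is to pass through the intermediate point $\gpop{\empdist{h}}{\widebar{\pi}_h}$. Writing $\gpop{\empdist{h}}{\widebar{\pi}_h} = \frac{1}{N}\sum_i \dnmpol(\cdot | s_h^i, \widebar{\pi}_h(s_h^i), \empdist{h})$ via the identity $\frac{1}{N}\sum_i f(s_h^i) = \sum_s \empdist{h}(s) f(s)$, the difference $\nu_{h+1} - \gpop{\empdist{h}}{\widebar{\pi}_h}$ only involves the per-agent policy mismatch $\pi_h^i$ versus $\widebar{\pi}_h$ at a fixed state and fixed population, so the policy-Lipschitz estimate \eqref{eq:lipschitz_pr_policy_avg} bounds it by $\frac{\ka}{2}\cdot\frac{1}{N}\sum_i \|\pi_h^i - \widebar{\pi}_h\|_1 = \frac{\ka}{2}\poldif{h}$. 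The remaining piece $\gpop{\empdist{h}}{\widebar{\pi}_h} - \gpop{\mu_h^{\widebar{\vecpi}}}{\widebar{\pi}_h}$ is controlled by Lemma~\ref{lemma:lipschitz_gpop} (same policy, differing population) by $\lpopmu\|\empdist{h}-\mu_h^{\widebar{\vecpi}}\|_1$, while $\mu_{h+1}^{\widebar{\vecpi}} = \gpop{\mu_h^{\widebar{\vecpi}}}{\widebar{\pi}_h}$ holds by definition of $\Lambdaop$.

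Combining the two estimates yields the recursion $a_{h+1} \leq \lpopmu\, a_h + C + \frac{\ka}{2}\poldif{h}$ with $a_0 \leq C$, where $a_h := \EE{\|\empdist{h} - \mu_h^{\widebar{\vecpi}}\|_1}$. Unrolling this geometric recursion, the $C$ contributions sum to $C\bigl(\lpopmu^h + \sum_{j=0}^{h-1}\lpopmu^j\bigr) = C\frac{1-\lpopmu^{h+1}}{1-\lpopmu}$, and the policy terms aggregate to $\frac{\ka}{2}\sum_{j=0}^{h-1}\lpopmu^{h-1-j}\poldif{j}$, giving the claimed bound. I expect the main obstacle to be the first step: correctly identifying $\nu_{h+1}$ as the $\cF_h$-conditional expectation and verifying that freezing $\empdist{h}$ restores independence of the next states across agents, so that the coordinatewise sub-Gaussian concentration applies. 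Once the intermediate point $\gpop{\empdist{h}}{\widebar{\pi}_h}$ is chosen, everything else is a routine application of the Lipschitz estimates and a geometric-series computation.
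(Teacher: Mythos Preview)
Your proposal is correct and essentially identical to the paper's proof: the paper also conditions on $\cF_h$, introduces the conditional mean $\EEc{\empdist{h+1}}{\cF_h}$ (your $\nu_{h+1}$), and splits into the same three terms---a sub-Gaussian fluctuation term bounded by $|\setS|\sqrt{\pi/(2N)}$, a policy-mismatch term bounded by $\frac{\ka}{2}\poldif{h}$ via \eqref{eq:lipschitz_pr_policy_avg}, and a population-drift term bounded by $\lpopmu\|\empdist{h}-\mu_h^{\widebar{\vecpi}}\|_1$ via Lemma~\ref{lemma:lipschitz_gpop}---before unrolling the resulting recursion.
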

\begin{proof}
    The proof will proceed inductively over $h$.
    First, for time $h=0$, we have
    \begin{align*}
         \EE{
        \|\empdist{0}-\mu_0\|_1
        }
        =
        \sum_{s\in\ssp}
        \EE{
        \left|\frac{1}{N}\sum_{i=1}^N (\ind{s_0^i=s}-\mu_0(s))\right|}
        \leq
        \setsz{\ssp}\sqrt{\frac{\pi}{2N}},
    \end{align*}
    where the last line is due to \Cref{lemma:sgevbound} and the fact that $\ind{s_0^i=s}$ are bounded (hence subgaussian) random variables, and that in the finite state space we have $\EE{\ind{s_0^i=s}}=\mu_0(s)$.

    Next, denoting the $\sigma$-algebra induced by the random variables $(\{s^i_{h} \})_{i, h' \leq h}$ as $\mathcal{F}_h$, we have that:
    \begin{align}
         &\EEc{\|
            \empdist{h+1}-\mu_{h+1}^{\widebar{\vecpi}}
            \|_1}{\cF_h} \notag \\
             \leq & 
            \underbrace{
                \EEc{\|
                    \EEc{\empdist{h+1}}{\cF_h}-\Gamma_P(\empdist{h}, \widebar{\pi}_h)
                    \|_1}{\cF_h}
            }_{(\square)} \notag
            \\
            & + \underbrace{\EEc{\|
            \empdist{h+1}-\EEc{\empdist{h+1}}{\cF_h}
            \|_1}{\cF_h}}_{(\triangle)} + \underbrace{\EEc{\|
            \Gamma_P(\empdist{h}, \widebar{\pi}_h)-\mu_{h+1}^{\widebar{\vecpi}}
            \|_1}{\cF_h} }_{(\heartsuit)} \label{eq:main_inductive_bound_lemma_dist}
    \end{align}
    We upper bound the three terms separately.
    For $(\triangle)$, it holds that
    \begin{align*}
        (\triangle) = &\EEc{
        \|
        \empdist{h+1}-
        \EEc{\empdist{h+1}}{\cF_h}
        \|_1
        }{\cF_h} \\    
        =&
        \sum_{s\in\ssp}
        \EEc{
        |
        \empdist{h+1}(s)-
        \EEc{\empdist{h+1}(s)}{\cF_h}
        |
        }{\cF_h} \leq \setsz{\ssp}\sqrt{\frac{\pi}{2N}},
    \end{align*}
    since each $\empdist{h+1}(s)$ is an average of independent subgaussian random variables given $\cF_h$. 
    Specifically, each indicator is bounded $\ind{s^i_{h+1}= s}\in[0,1]$ a.s. and therefore is sub-Gaussian with $\ind{s^i_{h+1}= s}\in\sgauss{1/4}$. Thus we get $\empdist{h+1}(s)\in\sgauss{1/(4N)}$ and apply bound on expected value discussed in \Cref{section:concentr_and_anticoncentr}.
    
    Next, for $(\square)=\|\EEc{\empdist{h+1}}{\cF_h}-\Gamma_P(\empdist{h}, \widebar{\pi}_h)\|_1$, we note that
    \begin{align*}
        \EEc{\empdist{h+1}(s)}{\cF_h}
        =
        \EEc{ 
        \frac{1}{N}\sum_{i=1}^N
        \ind{s_{h+1}^i=s}
        }{\cF_h}
        =
        \frac{1}{N}\sum_{i=1}^N
        \dnmpol(s|s_h^i,\pi_h^i(s_h^i),\empdist{h}),
    \end{align*}
    therefore
    \begin{align*}
        (\square)
        &=
        \left\|\frac{1}{N}\sum_{i=1}^N 
        \dnmpol(\cdot|s_h^i,\pi_h^i(\cdot|s_h^i),\empdist{h})
        -
        \sum_{s'}\empdist{h}(s')\dnmpol(\cdot|s',\pi_h(\cdot|s'),\empdist{h})\right\|_1\\
        &=
        \left\|\frac{1}{N}\sum_{i=1}^N 
        \left(\dnmpol(\cdot|s_h^i,\pi_h^i(\cdot|s_h^i),\empdist{h})
        -
        \dnmpol(\cdot|s_h^i,\pi_h(\cdot|s_h^i),\empdist{h})\right)\right\|_1\\
        &\leq 
        \frac{1}{N}\sum_{i=1}^N 
        \|
        \dnmpol(\cdot|s_h^i,\pi_h^i(\cdot|s_h^i),\empdist{h})
        -
        \dnmpol(\cdot|s_h^i,\pi_h(\cdot|s_h^i),\empdist{h})\|_1\\
        &\stackrel{(I)}{\leq} 
        \frac{\ka}{2N}\sum_{i=1}^N \|\pi_h^i(\cdot|s_h^i) - \pi_h(\cdot|s_h^i)\|_1
        \leq 
        \frac{\ka}{2}\poldif{h},
    \end{align*}
    where (I) follows from the Lipschitz property \eqref{eq:lipschitz_pr_policy_avg}.
    Finally, the last term $(\heartsuit)$ can be bounded using:
    \begin{align*}
        (\heartsuit) = &\EEc{\|
            \Gamma_P(\empdist{h}, \widebar{\pi}_h)-
            \Gamma_P(\mu_{h}^{\widebar{\vecpi}}, \widebar{\pi}_h)
            \|_1}{\cF_h} 
            \leq \lpopmu \|\empdist{h} - \mu_{h}^{\widebar{\vecpi}} \|_1.
    \end{align*}
    To conclude, merging the bounds on the three terms in Inequality~\eqref{eq:main_inductive_bound_lemma_dist} and taking the expectations we obtain:
    \begin{align*}
        \EE{\| \empdist{h+1} - \mu_{h+1}^{\widebar{\vecpi}}\|_1} \leq \lpopmu \EE{\| \empdist{h} - \mu_{h}^{\widebar{\vecpi}}\|_1} + \setsz{\ssp}\sqrt{\frac{\pi}{2N}} + \frac{K_a \poldif{h}}{2}.
    \end{align*}
    Induction on $h$ yields the statement of the lemma.

\end{proof}

\textbf{Step 2: Bounding difference of $N$ agent value function.}
Next, we bound the difference between the $N$-player expected reward function $\JfinNi{1}$ and the infinite player expected reward function $\Vfin$.
For ease of reading, expectations, probabilities, and laws of random variables will be denoted $\Exop_{\infty}, \Prob_{\infty}, \Law_{\infty}$ respectively over the infinite player finite horizon game and $\Exop_{N}, \Prob_{N}, \Law_{N}$ respectively over the $N$-player game. 
We use the regular notation $\Exop[\cdot], \Prob[\cdot], \Law(\cdot)$ without subscripts if the underlying randomness is clearly defined.
We state the main result of this step in the following lemma.

\begin{lemma}
    \label{lemma:bound_n_mfg_similar_reward_fh}
    Suppose $N$-FH-MFG agents follow the same sequence of policies $\vecpi=\{\pi_h\}_{h=0}^{H-1}$. 
    Then
    \begin{multline*}
    \left|\JfinNi{1}(\vecpi,\dots,\vecpi) - \Vfin(\Lambdaop(\mu_0, \vecpi), \vecpi)
    \right|\\
    \leq
    (\lmu + \frac{\ls}{2})
    \setsz{\ssp}\sqrt{\frac{\pi}{2N}}
    \sum_{h=0}^{H-1}\frac{1-\lpopmu^{h+1}}{1-\lpopmu}.
    \end{multline*}
\end{lemma}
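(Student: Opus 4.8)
The plan is to collapse both sides onto a single deterministic Lipschitz functional of a probability measure and then feed in the Step~1 estimate of Lemma~\ref{lemma:fh_empdist}. First I would use exchangeability: since all $N$ agents share the initial law $\mu_0$ and the common policy $\vecpi$, the step-$h$ reward of agent~$1$ symmetrizes as $\Exop_N[R(s_h^1,a_h^1,\empdist{h})]=\Exop_N\bigl[\tfrac1N\sum_i R(s_h^i,a_h^i,\empdist{h})\bigr]$. Conditioning on the states $\cF_h$ and averaging the actions through $\rwdpol$, the empirical average becomes a functional of $\empdist{h}$ alone, because exactly $N\empdist{h}(s)$ agents occupy each state $s$: writing $\Phi_h(\mu):=\sum_{s}\mu(s)\rwdpol(s,\pi_h(s),\mu)$ one gets $\tfrac1N\sum_i\rwdpol(s_h^i,\pi_h(s_h^i),\empdist{h})=\Phi_h(\empdist{h})$, hence $\JfinNi{1}(\vecpi,\dots,\vecpi)=\sum_{h=0}^{H-1}\Exop_N[\Phi_h(\empdist{h})]$. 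On the mean-field side the representative agent has marginal law exactly $\mu_h^{\vecpi}=\Lambdaop(\mu_0,\vecpi)_h$ --- this follows by induction from $\Law_\infty(s_{h+1})=\Gamma_P(\Law_\infty(s_h),\pi_h)$ --- so $\Vfin(\Lambdaop(\mu_0,\vecpi),\vecpi)=\sum_{h=0}^{H-1}\Phi_h(\mu_h^{\vecpi})$. The quantity to bound is therefore $\sum_{h=0}^{H-1}\bigl(\Exop_N[\Phi_h(\empdist{h})]-\Phi_h(\mu_h^{\vecpi})\bigr)$.

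Next I would show that each $\Phi_h$ is $(\lmu+\tfrac{\ls}{2})$-Lipschitz in the $\ell_1$ norm. Splitting $\Phi_h(\mu)-\Phi_h(\mu')=\sum_s\mu'(s)\bigl(\rwdpol(s,\pi_h(s),\mu)-\rwdpol(s,\pi_h(s),\mu')\bigr)+\sum_s(\mu(s)-\mu'(s))\rwdpol(s,\pi_h(s),\mu)$, the first sum is handled by the $\mu$-Lipschitz part of \eqref{eq:lipschitz_pr_policy_avg} and contributes $\lmu\|\mu-\mu'\|_1$. The key quantitative point is the second sum: because $\mu,\mu'$ are probability vectors, $\sum_s(\mu(s)-\mu'(s))=0$, so I may subtract an arbitrary constant from $\rwdpol(s,\pi_h(s),\mu)$ before applying H\"older; choosing the midrange constant replaces the trivial coefficient $\tfrac12$ by half the state-range of the reward, i.e. the $\tfrac{\ls}{2}$ scale coming from the state-Lipschitz term of \eqref{eq:lipschitz_pr_policy_avg}. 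This yields $|\Phi_h(\mu)-\Phi_h(\mu')|\le(\lmu+\tfrac{\ls}{2})\|\mu-\mu'\|_1$.

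Finally, convexity of $|\cdot|$ gives $|\Exop_N[\Phi_h(\empdist{h})]-\Phi_h(\mu_h^{\vecpi})|\le(\lmu+\tfrac{\ls}{2})\,\Exop_N[\|\empdist{h}-\mu_h^{\vecpi}\|_1]$, and summing over $h$ and inserting Lemma~\ref{lemma:fh_empdist} specialized to identical policies (so that $\poldif{h}=0$ and only the concentration term survives) bounds each $\Exop_N[\|\empdist{h}-\mu_h^{\vecpi}\|_1]$ by $\tfrac{1-\lpopmu^{h+1}}{1-\lpopmu}\setsz{\ssp}\sqrt{\tfrac{\pi}{2N}}$, delivering exactly the claim. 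I expect the main obstacle to be conceptual rather than computational: in the $N$-player game the single-agent marginal does \emph{not} equal $\mu_h^{\vecpi}$, since the kernel depends nonlinearly on the empirical population, so one cannot naively couple marginal laws to the mean field; the exchangeability reduction to the deterministic functional $\Phi_h(\empdist{h})$ is what circumvents this, and the only delicate estimate is the zero-sum recentering that extracts the tight $\tfrac{\ls}{2}$ constant in place of the naive $\tfrac12$.
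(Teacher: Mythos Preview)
Your argument is essentially the paper's own proof: both use exchangeability to collapse $\JfinNi{1}$ to $\sum_h\Exop_N[\Phi_h(\empdist{h})]$ with $\Phi_h(\mu)=\sum_s\mu(s)\rwdpol(s,\pi_h(s),\mu)$, identify $\Vfin=\sum_h\Phi_h(\mu_h^{\vecpi})$ via the induction $\Law_\infty(s_h)=\mu_h^{\vecpi}$, bound $|\Phi_h(\mu)-\Phi_h(\mu')|$ by the same add--subtract split, and then invoke Lemma~\ref{lemma:fh_empdist} with $\poldif{h}\equiv0$. One small caution (which the paper's proof also glosses over): your zero-sum recentering bounds the second sum by half the range of $s\mapsto\rwdpol(s,\pi_h(s),\mu)$, and since $\pi_h(s)$ varies with $s$ that range is in general $\le L_s+L_a$ rather than $L_s$, so the honest constant is $L_\mu+\tfrac{L_s+L_a}{2}$ (or simply $L_\mu+\tfrac12$) rather than $L_\mu+\tfrac{L_s}{2}$; this does not affect the $\mathcal{O}(\cdot)$ conclusion.
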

\begin{proof}
    Due to symmetry in the $N$ agent game, any permutation $\sigma: [N] \rightarrow [N]$ of agents does not change their distribution, that is $\Law_N(s_h^1,\dots, s_h^N) = \Law_N(s_h^{\sigma(1)},\dots, s_h^{\sigma(N)})$.
    We can then conclude that:
    \begin{align*}
        \EEN{
        \rwd(s_h^1, a_h^1, \empdist{h})
        }
        &=
        \frac{1}{N}\sum_{i=1}^N
        \EEN{
        \rwd(s_h^i, a_h^i, \empdist{h})
        } \\
        &=
        \EEN{
            \sum_{s
            \in\setS}\empdist{h}(s)\rwdpol(s, \pi_h(s), \empdist{h}).
        }
    \end{align*}
    Therefore, we by definition:
    \begin{align*}
        \JfinNi{1}(\vecpi,\dots,\vecpi) = \EEN{\sum_{h=0}^{H-1}  \sum_{s
            \in\setS}\empdist{h}(s)\rwdpol(s, \pi_h(s), \empdist{h})}.
    \end{align*}
    Next, in the FH-MFG, under the population distribution $\{\mu_h\}_{h=0}^{H-1} = \Lambdaop(\mu_0,\vecpi)$ we have that for all $h \in 0, \ldots, H-1$,
    \begin{align*}
        \Prob_\infty(s_0 = \cdot) &= \mu_0, \\ \Prob_\infty(s_{h+1} = \cdot) &= \sum_{s\in\setS}\Prob_\infty(s_{h} = s)\Prob_\infty(s_{h} = \cdot|s_{h} = s) \\
        &= \Gamma_P(\Prob_\infty(s_{h} = \cdot), \pi_h),
    \end{align*}
    so by induction $\Prob_\infty(s_h = \cdot) = \mu_h$.
    Then we can conclude that
    \begin{align*}
        \Vfin(\Lambdaop(\mu_0,\vecpi), \vecpi) &= \EEinf{\sum_{h=0}^{H-1} R(s_h, \pi_h(s_h), \mu_h)}\\ 
        &= \sum_{h=0}^{H-1} \sum_{s\in\setS} \mu_h(s) R(s, \pi_h(s), \mu_h).
    \end{align*}
    Merging the two equalities for $J, V$, we have the bound:
    \begin{align*}
        \quad&|\JfinNi{1}(\vecpi,\dots,\vecpi) - \Vfin(\Lambdaop(\mu_0,\vecpi), \vecpi)
        | \\
            = &\left| \EEN{\sum_{h=0}^{H-1}  \sum_{s
            \in\setS}\empdist{h}(s)\rwdpol(s, \pi_h(s), \empdist{h})} - \sum_{h=0}^{H-1} \sum_{s\in\setS} \mu_h(s) R(s, \pi_h(s), \mu_h)\right| \\
        \leq & \EEN{ \sum_{h=0}^{H-1} \left| \sum_{s
            \in\setS}\left(\empdist{h}(s)\rwdpol(s, \pi_h(s), \empdist{h}) - \mu_h(s) R(s, \pi_h(s), \mu_h) \right) \right|} \\
        \leq &\EEN{ \sum_{h=0}^{H-1} \left(\frac{L_s}{2}\|\mu_h -\empdist{h}\|_1 + L_\mu \|\mu_h -\empdist{h}\|_1\right)}.
    \end{align*}
    The statement of the lemma follows by an application of Lemma~\ref{lemma:fh_empdist}.
\end{proof}

\textbf{Step 3: Bounding difference in policy deviation.}
Finally, to conclude the proof of the main theorem of this section, we will prove that the improvement in expectation due to single-sided policy changes are at most of order $\cO\left(\frac{1}{\sqrt{N}}\right)$.

\begin{lemma}
    \label{lemma:bound_n_mfg_one_diff_reward}
    Suppose $\vecpi=\{\pi_h\}_{h=0}^{H-1} \in \Pi^H$ and
    $\vecpi' = \{\pi_h'\}_{h=0}^{H-1} \in \Pi^H$ arbitrary policies, and $\vecmu^{\vecpi} := \Lambdaop(\mu_0, \vecpi)$ is the population distribution induced by $\vecpi$.
    Then
    \begin{multline*}
    \left|\JfinNi{1}(\vecpi',\vecpi,\dots,\vecpi) - \Vfin(\Lambdaop(\mu_0, \vecpi), \vecpi')
    \right|\\
    \leq 
        \sum_{h=0}^{H-1}
        \biggl(
        \frac{\lmu }{2}
        \EE{
            \|\empdist{h}-\mu^{\vecpi}_h\|_1
        }
        +
        \kmu
        \sum_{h'=0}^{h-1}
        \EE{\|\empdist{h'}-\mu_{h'}^{\vecpi}\|_1}
        \biggr).
    \end{multline*}
\end{lemma}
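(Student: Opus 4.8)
The plan is to match player $1$'s expected reward against the mean-field value termwise in $h$, separating the two distinct error sources: the mismatch of the population argument fed into the reward ($\empdist h$ versus $\mu_h^{\vecpi}$), and the mismatch between the law of player $1$'s state in the $N$-player game and the law of the representative agent's state in the MFG. First I would average over $a_h^1\sim\pi_h'(s_h^1)$ by the tower property to write $\JfinNi{1}(\vecpi',\vecpi,\dots,\vecpi)=\EEN{\sum_{h=0}^{H-1}\rwdpol(s_h^1,\pi_h'(s_h^1),\empdist h)}$ and, arguing exactly as in the proof of Lemma~\ref{lemma:bound_n_mfg_similar_reward_fh}, $\Vfin(\Lambdaop(\mu_0,\vecpi),\vecpi')=\EEinf{\sum_{h=0}^{H-1}\rwdpol(s_h,\pi_h'(s_h),\mu_h^{\vecpi})}$. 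Writing $\beta_h:=\Law_N(s_h^1)$ and $\rho_h:=\Law_\infty(s_h)$ for the two state marginals (both started at $\mu_0$ and both evolving under the common policy $\vecpi'$), I would insert the intermediate quantity $\EEN{\rwdpol(s_h^1,\pi_h'(s_h^1),\mu_h^{\vecpi})}$ and split the $h$-th term as $(A_h)+(B_h)$, where $(A_h)=\EEN{\rwdpol(s_h^1,\pi_h'(s_h^1),\empdist h)-\rwdpol(s_h^1,\pi_h'(s_h^1),\mu_h^{\vecpi})}$ and $(B_h)=\langle\beta_h-\rho_h,\,g_h\rangle$ with $g_h(s):=\rwdpol(s,\pi_h'(s),\mu_h^{\vecpi})\in[0,1]$. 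Term $(A_h)$ is immediate: the Lipschitz property \eqref{eq:lipschitz_pr_policy_avg} of $\rwdpol$ in $\mu$ bounds $\lvert(A_h)\rvert$ by a constant multiple of $\EEN{\lVert\empdist h-\mu_h^{\vecpi}\rVert_1}$, which accounts for the first summand.

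The crux is controlling $(B_h)$, which is governed by the marginal discrepancy $\lVert\beta_h-\rho_h\rVert_1$. I would establish a recursion for this quantity. Since $\beta_{h+1}(\cdot)=\EEN{\dnmpol(\cdot\mid s_h^1,\pi_h'(s_h^1),\empdist h)}$ and $\rho_{h+1}(\cdot)=\EEinf{\dnmpol(\cdot\mid s_h,\pi_h'(s_h),\mu_h^{\vecpi})}$, inserting $\EEN{\dnmpol(\cdot\mid s_h^1,\pi_h'(s_h^1),\mu_h^{\vecpi})}$ splits $\beta_{h+1}-\rho_{h+1}$ into a population-argument part, bounded by $\kmu\,\EEN{\lVert\empdist h-\mu_h^{\vecpi}\rVert_1}$ via the $\mu$-Lipschitzness of $\dnmpol$ in \eqref{eq:lipschitz_pr_policy_avg}, and a propagation part $\sum_s(\beta_h(s)-\rho_h(s))F_h(s)$ with $F_h(s):=\dnmpol(\cdot\mid s,\pi_h'(s),\mu_h^{\vecpi})$. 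The key observation is that each $F_h(s)\in\Delta_\setS$ has $\lVert F_h(s)\rVert_1=1$, so the propagation part has $\ell_1$-norm at most $\sum_s\lvert\beta_h(s)-\rho_h(s)\rvert=\lVert\beta_h-\rho_h\rVert_1$, i.e. it carries over with coefficient exactly $1$. Hence $\lVert\beta_{h+1}-\rho_{h+1}\rVert_1\le\lVert\beta_h-\rho_h\rVert_1+\kmu\,\EEN{\lVert\empdist h-\mu_h^{\vecpi}\rVert_1}$, and since $\lVert\beta_0-\rho_0\rVert_1=0$, unrolling gives $\lVert\beta_h-\rho_h\rVert_1\le\kmu\sum_{h'=0}^{h-1}\EEN{\lVert\empdist{h'}-\mu_{h'}^{\vecpi}\rVert_1}$.

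Finally, because $0\le g_h\le 1$, I would bound $\lvert(B_h)\rvert\le\lVert g_h\rVert_\infty\lVert\beta_h-\rho_h\rVert_1\le\kmu\sum_{h'=0}^{h-1}\EEN{\lVert\empdist{h'}-\mu_{h'}^{\vecpi}\rVert_1}$, which is precisely the second summand. Summing $(A_h)+(B_h)$ over $h=0,\dots,H-1$ and applying the triangle inequality assembles the claimed bound; note that the result is stated purely in terms of the expected empirical deviations $\EEN{\lVert\empdist{h}-\mu_{h}^{\vecpi}\rVert_1}$, which are later discharged by Lemma~\ref{lemma:fh_empdist} in the proof of the main theorem. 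The main obstacle is the middle step: recognizing that the single-agent deviation only enters through $P$'s population argument (all else being common to both processes under the shared policy $\vecpi'$), and that the residual marginal discrepancy propagates with coefficient exactly $1$ rather than with a Lipschitz factor, which is what makes the recursion telescope cleanly into the stated double sum instead of blowing up geometrically.
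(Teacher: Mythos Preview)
Your proposal is correct and follows essentially the same route as the paper: split each time-$h$ term into a population-argument error $(A_h)$ handled by the $\mu$-Lipschitzness of $\rwdpol$, and a state-marginal error $(B_h)$ controlled by $\|\beta_h-\rho_h\|_1$; then derive the recursion $\|\beta_{h+1}-\rho_{h+1}\|_1\le\|\beta_h-\rho_h\|_1+\kmu\,\EE{\|\empdist h-\mu_h^{\vecpi}\|_1}$ by inserting the intermediate kernel $\dnmpol(\cdot\mid s_h^1,\pi_h'(s_h^1),\mu_h^{\vecpi})$ and observing that the propagation part carries coefficient exactly $1$, and unroll from $\|\beta_0-\rho_0\|_1=0$. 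The only cosmetic difference is that the paper extracts a factor $\tfrac12$ when bounding $|(B_h)|$ (using that $g_h\in[0,1]$ and $\beta_h-\rho_h$ has zero total mass), whereas you use the cruder $\|g_h\|_\infty\le 1$; either way the structure and the resulting double sum are identical.
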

\begin{proof}
    Define the random variables $\{s^i_h, a_h^i\}_{i,h}, \{\widehat{\mu}_h\}_h$ as in the definition of $N$-FH-SAG (Definition~\ref{def:n_fh_mfg}).
    In addition, define the random variables $\{s_h, a_h\}_h$ evolving according to the FH-MFG with population $\vecmu^{\vecpi} := \{ \mu^{\vecpi}_h \}_h := \Lambdaop(\mu_0, \vecpi)$ and representative policy $\vecpi'$, independent from the random variables $ \{s^i_h, a_h^i\}_{i,h}$. 
    Hence $s_0 \sim \mu_0, a_h \sim \pi'(\cdot|s_h), s_{h+1} \sim P(\cdot|s_h, a_h, \mu^{\vecpi}_h)$.
    Define also for simplicity
    \begin{align*}
        E_N := \left|\JfinNi{1}(\vecpi',\vecpi,\dots,\vecpi) - \Vfin(\Lambdaop(\mu_0 \vecpi), \vecpi')\right|.
    \end{align*}

    With these definitions, we have
    \begin{align}
        E_N &= \left| \Exop \left[ \sum_{h=0}^{H-1} R(s_h, a_h, \mu^{\vecpi}_h) - \sum_{h=0}^{H-1} R(s_h^1, a_h^1, \widehat{\mu}_h) \right] \right| \notag \\
            &\leq \sum_{h=0}^{H-1} \left| \Exop \left[  R(s_h, a_h, \mu^{\vecpi}_h) -  R(s_h^1, a_h^1, \widehat{\mu}_h) \right] \right|. \label{eq:sum_rh_fh_part3}
    \end{align}
    Furthermore, for any $h\in \{ 0, \ldots, H-1\}$,
    \begin{align*}
        | \Exop & \left[  R(s_h, a_h, \mu^{\vecpi}_h) -  R(s_h^1, a_h^1, \widehat{\mu}_h) \right] | \\
            \leq & \left| \Exop \left[  R(s_h, a_h, \mu^{\vecpi}_h) -  R(s_h^1, a_h^1, \mu^{\vecpi}_h) \right] \right|\\
            & + \left| \Exop \left[  R(s_h^1, a_h^1, \mu^{\vecpi}_h) -  R(s_h^1, a_h^1, \widehat{\mu}_h) \right] \right| \\
        \leq &\left| \Exop \left[  R(s_h, \pi'_h(s_h), \mu^{\vecpi}_h) -  R(s_h^1, \pi'_h(s_h^1), \mu^{\vecpi}_h) \right] \right|\\
            & + L_\mu  \Exop \left[   \| \mu^{\vecpi}_h - \widehat{\mu}_h \|_1 \right] \\
        \leq & \frac{1}{2} \| \Prob[s_h = \cdot] - \Prob[s_h^1 = \cdot] \|_1 + L_\mu  \Exop \left[   \| \mu^{\vecpi}_h - \widehat{\mu}_h \|_1 \right] ,
    \end{align*}
    where the last line follows since $R$ is bounded in $[0,1]$.
    Replacing this in Equation~\eqref{eq:sum_rh_fh_part3},
    \begin{align}
        E_N \leq \frac{1}{2} \sum_h \| \Prob[s_h = \cdot] - \Prob[s_h^1 = \cdot] \|_1 + L_\mu \sum_h \Exop \left[   \| \mu^{\vecpi}_h - \widehat{\mu}_h \|_1 \right]. \label{eq:fh_part3_EN_sum}
    \end{align}
   The first sum above we upper bound in the rest of the proof inductively.

    Firstly, by definitions of $N$-FH-SAG and FH-MFG, both $s_0^1$ and $s_0$ have distribution $\mu_0$, hence $\| \Prob[s_0 = \cdot] - \Prob[s_0^1 = \cdot] \|_1 = 0$.
    Assume that $h \geq 1$.
    We note that $P$ takes values in $\Delta_\setS$ and the random vector $\widehat{\mu}_h$ takes values in the discrete set $\{ \frac{1}{N} u : u \in \{0,\ldots, N \}^\setS, \sum_s u(s) = N \} \subset \Delta_\setS$, hence we have the bounds:
    \begin{align*}
        \| &\Prob[s_{h+1} = \cdot] - \Prob[s^1_{h+1} = \cdot] \|_1 \\
        \leq & \left\| \sum_{s,\mu} P(s, \pi'_h(s), \mu) \Prob[s_h^1=s, \widehat{\mu}_h=\mu] - \sum_{s} P(s, \pi_h'(s), \mu^{\vecpi}_h) \Prob[s_h=s] \right \|_1 \\
        \leq & \left\| \sum_{s} P(s, \pi_h'(s), \mu^{\vecpi}_h) \Prob[s_h^1=s] - \sum_{s} P(s, \pi_h'(s), \mu^{\vecpi}_h) \Prob[s_h=s] \right \|_1  \\
            & + \left\| \sum_{s,\mu} \left(P(s, \pi_h'(s), \mu) - P(s, \pi_h'(s), \mu^{\vecpi}_h) \right) \Prob[s_h^1=s, \widehat{\mu}_h=\mu] \right \|_1 \\
        \leq &  \left\| \Prob[s_h^1= \cdot ] - \Prob[s_h= \cdot] \right \|_1 +  \sum_{s,\mu} K_\mu \left\|\mu - \mu^{\vecpi}_h \right\|_1 \Prob[s_h^1=s, \widehat{\mu}_h=\mu] \\
        \leq &  \left\| \Prob[s_h^1= \cdot] - \Prob[s_h= \cdot]\right \|_1 +  K_\mu \Exop \left[ \left\|\widehat{\mu}_h^{\vecpi} - \mu^{\vecpi}_h \right\|_1 \right]
    \end{align*}
    where the last two lines follow from the fact that $P$ is $K_\mu$ Lipschitz in $\mu$ and stochastic matrices are non-expansive in the total-variation norm over probability distributions.
    By induction, we conclude that for all $h \geq 0$, it holds that:
    \begin{align*}
        \| &\Prob[s_{h} = \cdot] - \Prob[s^1_{h} = \cdot] \|_1 \leq K_\mu \sum_{h'=0}^{h} \Exop \left[ \left\|\widehat{\mu}_{h'}^{\vecpi} - \mu^{\vecpi}_{h'} \right\|_1 \right].
    \end{align*}

   Placing this result into Equation~\eqref{eq:fh_part3_EN_sum}, we obtain the statement of the lemma.

\end{proof}

Since $ \EE{\|\empdist{h'}-\mu_{h'}^{\vecpi}\|_1}$ above in the theorem is of the order of $\mathcal{O}\left(\sfrac{1}{\sqrt{N}}\right)$ by the result in step 1, the result above allows us to bound exploitability in the $N$-FH-SAG.

\textbf{Conclusion and Statement of Result.}
Finally, we can merge the results up until this stage to upper bound the exploitability.
By definition of the FH-MFG-NE, we have:
    \[
    \delta\geq 
    \max_{\vecpi' \in \Pi^H} \Vfin( \Lambdaop (\mu_0, \vecpi_\delta), \vecpi') - \Vfin ( \Lambdaop (\mu_0,\vecpi_\delta), \vecpi_\delta )
    \]
The upper bounds on the deviation between $\Vfin$ and $\JfinNi{1}$ from the previous steps directly yields the statement of the theorem.
We state it below for completeness.

\begin{theorem}
    \label{theorem:expoitability_convergence_simple}
    It holds that
    \begin{align*}
        \ExpfinNi{1}(\vecpi_\delta,\dots,\vecpi_\delta)
        \leq 
        2\delta 
        +
        \frac{C_1}{\sqrt{N}}
        +
        \frac{C_2}{N}
        =
        O\left(\delta + \frac{1}{\sqrt{N}}\right) 
    \end{align*}
    where $\vecpi_\delta$ is a $\delta$-FH-MFG Nash equilibrium and
    \begin{align*}
        C_1
        =
        \setsz{\ssp}\sqrt{\frac{\pi}{2}}
        \left(
        (2\lmu+\frac{\ls}{2})\sum_{h=0}^{H-1}\frac{1-\lpopmu^{h+1}}{1-\lpopmu}
        +
        \kmu\sum_{h=0}^{H-1}\sum_{i=0}^{h-1}
        \frac{1-\lpopmu^{i+1}}{1-\lpopmu}
        \right)
    \end{align*}
    \begin{align*}
        C_2
        =
        \lmu\ka\sum_{h=0}^{H-1}
        \frac{1-\lpopmu^{h}}{1-\lpopmu}
        +
        \ka\kmu\sum_{h=0}^{H-1}\sum_{i=0}^{h-1}
        \frac{1-\lpopmu^{i}}{1-\lpopmu},
    \end{align*}
    where we use shorthand notation $\frac{1-\lpopmu^k}{1-\lpopmu} := k-1$ when $\lpopmu = 1$.
\end{theorem}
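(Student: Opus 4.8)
The plan is to assemble the three preceding lemmas (Steps~1--3) into a single exploitability bound by telescoping the $N$-player payoff through the mean-field value $\Vfin$. Write $\vecmu_\delta := \Lambdaop(\mu_0, \vecpi_\delta)$ for the population generated by the $\delta$-equilibrium; by definition of a $\delta$-FH-MFG-NE we have $\max_{\vecpi'\in\Pi_H}\Vfin(\vecmu_\delta,\vecpi') - \Vfin(\vecmu_\delta,\vecpi_\delta) \le \delta$. First I would expand the $N$-player exploitability $\ExpfinNi{1}(\vecpi_\delta,\dots,\vecpi_\delta) = \max_{\vecpi'\in\Pi_H}\JfinNi{1}(\vecpi',\vecpi_\delta,\dots,\vecpi_\delta) - \JfinNi{1}(\vecpi_\delta,\dots,\vecpi_\delta)$ and insert $\pm\max_{\vecpi'}\Vfin(\vecmu_\delta,\vecpi')$ and $\pm\Vfin(\vecmu_\delta,\vecpi_\delta)$, splitting it into (i) a deviation-discrepancy term $\max_{\vecpi'}\JfinNi{1}(\vecpi',\vecpi_\delta,\dots) - \max_{\vecpi'}\Vfin(\vecmu_\delta,\vecpi')$, (ii) the mean-field exploitability $\max_{\vecpi'}\Vfin(\vecmu_\delta,\vecpi') - \Vfin(\vecmu_\delta,\vecpi_\delta)$, and (iii) a consistency term $\Vfin(\vecmu_\delta,\vecpi_\delta) - \JfinNi{1}(\vecpi_\delta,\dots)$.

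Next I would bound each piece. For piece (i), the key step is the elementary inequality $|\max_{\vecpi'} f - \max_{\vecpi'} g| \le \max_{\vecpi'}|f - g|$, which reduces it to Lemma~\ref{lemma:bound_n_mfg_one_diff_reward} applied uniformly over the deviation $\vecpi'$; this uniformity matters because the $N$-player best response need not agree with any mean-field best response. Piece (iii) is exactly the quantity controlled by Lemma~\ref{lemma:bound_n_mfg_similar_reward_fh}, with all $N$ agents following $\vecpi_\delta$. Piece (ii) is bounded by $\delta$ via the $\delta$-Nash condition and supplies the $\delta$-dependence (kept as $2\delta$ in the final constant).

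It then remains to make the empirical-deviation expectations $\EE{\|\empdist{h} - \mu^{\vecpi_\delta}_h\|_1}$ explicit, which is where Lemma~\ref{lemma:fh_empdist} enters. In the ``all play $\vecpi_\delta$'' scenario of piece (iii) the policy-discrepancy $\poldif{h}$ vanishes, so only the concentration term $\tfrac{1-\lpopmu^{h+1}}{1-\lpopmu}\setsz{\ssp}\sqrt{\pi/(2N)}$ survives; in the one-sided deviation scenario of piece (i) exactly one of the $N$ agents differs from the background policy, so $\poldif{h}$ is lower order and feeds only the $\sfrac{1}{N}$ terms. Substituting these into the Lipschitz bounds of Lemmas~\ref{lemma:bound_n_mfg_similar_reward_fh} and~\ref{lemma:bound_n_mfg_one_diff_reward}, summing the resulting finite geometric series $\sum_h\tfrac{1-\lpopmu^{h+1}}{1-\lpopmu}$ and their double sums, and regrouping by the power of $N$, produces precisely the constants $C_1$ (the $\sfrac{1}{\sqrt{N}}$ coefficient) and $C_2$ (the $\sfrac{1}{N}$ coefficient), with the convention $\tfrac{1-\lpopmu^k}{1-\lpopmu} := k-1$ handling the degenerate $\lpopmu = 1$ case.

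I expect the genuine difficulty to be bookkeeping rather than conceptual: correctly separating the two scales ($\sfrac{1}{\sqrt{N}}$ from concentration of the empirical distribution and $\sfrac{1}{N}$ from the single deviating policy) while propagating them through the nested Lipschitz estimates of Lemmas~\ref{lemma:bound_n_mfg_one_diff_reward} and~\ref{lemma:fh_empdist}, and ensuring the $\max_{\vecpi'}$ is handled uniformly. The collapse to the $\cO(\delta + \sfrac{1}{\sqrt{N}})$ form, and hence to the horizon-dependent rate of Theorem~\ref{theorem:upper_approx_fin}, follows by crudely bounding the geometric sums, e.g.\ $\sum_{h=0}^{H-1}\tfrac{1-\lpopmu^{h+1}}{1-\lpopmu} = \cO\!\big(\tfrac{H(1-\lpopmu^{H})}{1-\lpopmu}\big)$.
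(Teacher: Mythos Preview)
Your proposal is correct and follows essentially the same route as the paper, which merely states that ``the upper bounds on the deviation between $\Vfin$ and $\JfinNi{1}$ from the previous steps directly yield the statement''; you have spelled out the telescoping decomposition and the use of $|\max f-\max g|\le\max|f-g|$ that make this precise, and your separation of the $\sfrac{1}{\sqrt{N}}$ concentration contribution from the $\sfrac{1}{N}$ single-deviator contribution (via $\poldif{h}\le\tfrac{2}{N}$) is exactly how the constants $C_1$ and $C_2$ arise. A minor remark: your three-term split in fact yields a leading term of $\delta$ rather than $2\delta$, so your argument proves the stated inequality with room to spare.
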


\textbf{A note on constants.}
    Note that constants $C_1,C_2$ in \Cref{theorem:expoitability_convergence_simple} depend on horizon with $\frac{H^2}{1-\lpopmu}$ if $\lpopmu< 1$, with $H^3$ if $\lpopmu = 1$ and with $H^2\frac{1-\lpopmu^{H+1} }{1-\lpopmu}$ if $\lpopmu > 1$.

\subsection{Lower Bound for FH-MFG: Extended Proof of Theorem~\ref{theorem:lower_approx_fin}}

The proof will be by construction: we will explicitly define an FH-MFG where the optimal policy for the $N$-agent game diverges quickly from the FH-MFG-NE policy.

\textbf{Preliminaries.}
We first define a few utility functions.
Define $\fRew: \Delta_2 \rightarrow B_{\infty,+}^2 := \{ \vecx \in \mathbb{R}^2 : \| \vecx\|_\infty = 1,\ x_1,x_2\geq 0\}$ and $\fRep: \Delta_2 \rightarrow [0,1]^2$ as follows:
\begin{align*}
    \fRew(x_1, x_2) &:= \begin{pmatrix}
                \fRew_1(x_1, x_2)\\
                \fRew_2(x_1, x_2)
            \end{pmatrix} := \begin{pmatrix}
                \frac{x_1}{\max\{ x_1, x_2\}}\\
                \frac{x_2}{\max\{ x_1, x_2\}}
            \end{pmatrix}, \\
    \fRep(x_1, x_2) &:= \begin{pmatrix}
                \fRep_1(x_1, x_2)\\
                \fRep_2(x_1, x_2)
            \end{pmatrix} := \begin{pmatrix}
                \max\{ 4x_2, 1\} \\
                \max\{ 4x_1, 1\}
            \end{pmatrix}.
\end{align*}
Furthermore, for any $\epsilon > 0$ we define $\fProb: [0,1] \rightarrow [0,1]$ as:
\begin{align*}
    \fProb(x) &= \begin{cases}
        1, \quad x > \sfrac{1}{2} + \epsilon \\
        0, \quad x < \sfrac{1}{2} - \epsilon \\
        \frac{1}{2} + \frac{x - \sfrac{1}{2}}{2\epsilon}, \quad x \in [\sfrac{1}{2} - \epsilon, \sfrac{1}{2} + \epsilon]
    \end{cases}.
\end{align*}
$\epsilon \in (0, \sfrac{1}{2})$ will be specified later.

It is straightforward to verify that $\vecg$ has an inverse in its domain given by
\begin{align*}
    \vecg^{-1}(x_1, x_2) = \left( \frac{x_1}{x_1+x_2}, \frac{x_2}{x_1+x_2}\right), \forall (x_1, x_2) \in B^2_{\infty,+}.
\end{align*}
Furthermore, it holds for $\vecx = (x_1, x_2)\in B^2_{\infty,+}, \vecy=(y_1, y_2)\in B^2_{\infty,+}$
\begin{align*}
    \| \vecg^{-1}&(\vecx) - \vecg^{-1}(\vecy) \|_1 \\
    = &\left| \frac{x_1}{x_1+x_2} - \frac{y_1}{y_1+y_2}\right| + \left|\frac{x_2}{x_1+x_2} - \frac{y_2}{y_1+y_2} \right| \\
    = &\left| \frac{ x_1 (y_2 - x_2) + x_2 (x_1 - y_1)}{(x_1+x_2)(y_1+y_2)}\right| + \left|\frac{x_2 (y_1 - x_1) + x_1(x_2 - y_2)}{(x_1+x_2)(y_1+y_2)} \right| \\
    \leq &2 \| \vecx - \vecy\|_1,
\end{align*}
and likewise for $\vecu,\vecv \in \Delta_2$, letting $u_+ := \max\{u_1, u_2\}, 
v_+ := \max\{v_1, v_2\}, 
$
\begin{align*}
     \| \vecg(\vecu) - \vecg(\vecv) \|_1 = &\left| \frac{u_1}{u_+} - \frac{v_1}{v_+}\right| + \left|\frac{u_2}{u_+} - \frac{v_2}{v _+} \right| \\
        = & \left| \frac{u_1 v_+ - v_1 u_+}{u_+ v_+} \right| + \left|\frac{u_2 v_+ - u_+ v_2}{u_+ v_+}\right| \leq 2 \| \vecu - \vecv\|_1.
\end{align*}
This follows from considering cases and observation that $u_+\geq \sfrac{1}{2}$, $v_+\geq \sfrac{1}{2}$.
Then for all $\vecu,\vecv \in \Delta_2$, $\vecg, \vech$ have the bi-Lipschitz and Lipschitz properties:
\begin{align}
    \label{eq:bilipschitzg}
    \frac{1}{2} \| \vecu - \vecv \|_1 \leq \|\fRew(\vecu) - \fRew(\vecv) \|_1 \leq 2 \| \vecu - \vecv \|_1, \\
    \label{eq:bilipschitzh}
    \|\fRep(\vecu) - \fRep(\vecv) \|_1 \leq 4 \| \vecu - \vecv \|_1.
\end{align}
Likewise, $\fProb$, being piecewise linear, also satisfies the Lipschitz condition: $|\fProb(x) - \fProb(y)| \leq \frac{1}{2\epsilon} |x - y|, \quad \forall x,y \in [0,1].$

\textbf{Defining the FH-MFG.}
We take a particular FH-MFG with 6 states, 2 actions.
Define the state-actions sets:
\begin{align*}
    \setS &= \{ \sleft, \sright, \slA, \slB,\srA, \srB \}, \quad
    \setA = \{\actA, \actB \}.
\end{align*}
Intuitively, the ``main'' states of the game are $\sleft, \sright$ and the $4$ states $\slA, \slB,\srA, \srB $ are dummy states that keep track of which actions were taken by which percentage of players used to introduce a dependency of the rewards on the distribution of agents over actions as well as states.
Define the initial probabilities $\mu_0$ by:
\begin{align*}
    \vecmu_0(\sleft) &= \vecmu_0(\sright) = \sfrac{1}{2}, \\
    \vecmu_0(\slA) &= \vecmu_0(\srA) = \vecmu_0(\srA) = \vecmu_0(\srB) = 0.
\end{align*}
When at the states $\sleft, \sright$, the transition probabilities are defined for all $\mu\in\Delta_\setS$ by:
\begin{align*}
    P(\slA | \sleft, \actA, \mu) &= 1, \quad P(\slB | \sleft, \actB, \mu) = 1, \\
    P(\srA | \sright, \actA, \mu) &= 1, \quad P(\srB | \sright, \actB, \mu) = 1.
\end{align*}
That is, the agent transitions to one of $\{\slA, \srA, \srB, \slB\}$ to remember its last action and left-right state.
When at states $\{\slA, \srA, \srB, \slB\}$, the transition probabilities are:
\begin{align*}
    \text{If } s &\in \{\slA, \slB, \srA, \srB\}: \\ P(s' | s, a, \mu) &= \begin{cases}
        \fProb(\mu(\slA) + \mu(\slB)), \text{ if } s' = \sleft \\
        \fProb(\mu(\srA) + \mu(\srB)), \text{ if } s' = \sright
    \end{cases}, \forall \mu, a .
\end{align*}
The other non-defined transition probabilities are of course $0$.

Finally, let $\alpha, \beta > 0$ such that $\alpha + \beta < 1$ (to be also defined later).
The reward functions are defined for all $\mu\in\Delta_\setS$ as follows:
\begin{align*}
    R(\sleft, \actA, \mu) = &R(\sleft, \actB, \mu) = 0, \\
    R(\sright, \actA, \mu) = &R(\sright, \actB, \mu) = 0, \\
    \begin{pmatrix}
        R(\slA, \actA, \mu) \\
        R(\slB, \actA, \mu)
    \end{pmatrix} = &(1 - \alpha - \beta) \fRew\big(\mu(\slA) + \mu(\slB), \mu(\srA) + \mu(\srB) \big) \\
                &+ \alpha \fRep(\mu(\slA), \mu(\slB)) \\
    \begin{pmatrix}
        R(\slA, \actB, \mu) \\
        R(\slB, \actB, \mu)
    \end{pmatrix} = &(1 - \alpha - \beta)\fRew\big(\mu(\slA) + \mu(\slB), \mu(\srA) + \mu(\srB) \big) \\
        &+ \alpha \fRep(\mu(\slA), \mu(\slB)) 
        + \beta \vecone \\
    \begin{pmatrix}
        R(\srA, \actA, \mu) \\
        R(\srB, \actA, \mu)
    \end{pmatrix} = &(1 - \alpha - \beta)\fRew\big(\mu(\srA) + \mu(\srB), \mu(\slA) + \mu(\slB) \big) \\
        &+ \alpha\fRep(\mu(\srA), \mu(\srB)) \\
    \begin{pmatrix}
        R(\srA, \actB, \mu) \\
        R(\srB, \actB, \mu)
    \end{pmatrix} = &(1 - \alpha - \beta)\fRew\big(\mu(\srA) + \mu(\srB), \mu(\slA) + \mu(\slB) \big) \\
        &+ \alpha\fRep(\mu(\srA), \mu(\srB)) 
        + \beta \vecone
\end{align*}
Note that only at odd steps do the agents get a reward, and at this step, it does not matter which action the agent plays, only the state among $\{\slA, \slA, \srA, \srB \}$ and the population distribution.
The parameters $\epsilon, \alpha, \beta$ of the above FH-MFG are ``free'' parameters to be specified later.

\textbf{A minor remark.}
The arguments of $\fRew$ above will be with probability one in the set $\Delta_2$ at odd-numbered time steps, but to formally satisfy the Lipschitz condition $R\in\RL_2$ one can for instance replace $\fRew\big(\mu(\srA) + \mu(\srB), \mu(\slA) + \mu(\slB) \big)$ with $\fRew\big(\mu(\srA) + \mu(\srB) + \mu(\sleft), \mu(\slA) + \mu(\slB) + \mu(\sright)\big)$ in the definitions, which will not impact the analysis since at odd timesteps $\mu(\sright) = \mu(\sleft) = 0$ for both the FH-MFG and $N$-FH-SAG.

Note that with these definitions, $P \in \PL_{\sfrac{1}{2\epsilon}}, R \in \RL_{2}$ since only $\forall\ s,s'\in\setS,a,a'\in\setA,\mu,\mu'\in\Delta_\setS$, we have by the definitions:
\begin{align}
    \label{eq:lower_bound_P_Lipschitz}
    \|P(\cdot|s,a,\mu)-P(\cdot|s',a',\mu')\|_1 &\leq 2d(s,s')+2d(a,a')+ \frac{1}{2\epsilon}\|\mu-\mu'\|_1, \\
    |R(s,a,\mu)-R(s',a',\mu')| &\leq d(s,s') + d(a,a')
        +2\|\mu-\mu'\|_1, \label{eq:lower_bound_R_Lipschitz}
\end{align}
for any $\alpha,\beta > 0 $ with $\alpha + \beta < 1$ and $\alpha < \frac{1}{4}$,
using the Lipschitz conditions in \eqref{eq:bilipschitzg}, \eqref{eq:bilipschitzh}.

\textbf{Step 1: Solution of the FH-MFG.}
Next, we solve the infinite player FH-MFG and show that the policy $\vecpi^*_H := \{\pi^*_h\}_{h=0}^{H-1}$ given by:
\begin{align*}
    \pi_h^*(a|s) := \begin{cases}
        1, \text{if $h$ odd and $a = \actB$} \\
        \frac{1}{2}, \text{if $h$ even} \\
        0, \text{if $h$ odd and $a = \actB$}
    \end{cases}
\end{align*}
It is easy to verify in this case that, if $\vecmu^* := \{\mu^*_h\}_h$ is induced by $\vecpi^*$:
\begin{align*}
    \mu^*_h(\slA) &= \mu^*_h(\slB) = \mu^*_h(\srA) = \mu^*_h(\srB) = \sfrac{1}{4}, \text{ if $h$ odd}, \\
    \mu^*_h(\sleft) &= \mu^*_h(\sright) = \sfrac{1}{2}, \text{ if $h$ even}. 
\end{align*}
In this case, the induced rewards in odd steps are state-independent (it is the same for all states $\srA,\srB,\slA,\slB$), therefore the policy $\vecpi^*$ is the optimal best response to the population and a FH-MFG.

In fact, $\vecpi^*$ is unique up to modifications in zero-probability sets (e.g., modifying $\pi^*_h(\sleft)$ for odd $h$, for which $\Prob[s_h = \sleft] = 0$).
To see this, for \emph{any} policy $\vecpi \in \Pi_H$, it holds that 
\begin{align*}
    \mu_h^{\vecpi}(\sleft) &= \mu_h^{\vecpi}(\sright) = \sfrac{1}{2}, \text{ if $h$ even}, \\
    \mu_h^{\vecpi}(\slA) + \mu_h^{\vecpi}(\slB) &= \mu_h^{\vecpi}(\srA) + \mu_h^{\vecpi}(\srB) = \sfrac{1}{2}, \text{ if $h$ odd}, 
\end{align*}
as the action of the agent does not affect transition probabilities between $\sleft,\sright$ in even rounds.
Moreover, as odd stages, the action rewards terms only depend on the state apart from the positive additional term $\beta \vecone$, so the only optimal action will be $\actB$.
Finally, for $\alpha > 0$, the actions $\actA,\actB$ must be played with equal probability as otherwise the term $\alpha \fRep(\mu(\srA), \mu(\srB))$ will lead to the action with lower probability assigned by being optimal.

\textbf{Step 2: Population divergence in $N$-FH-MFG.}
We will analyze the empirical population distribution deviation from $\vecmu^*$, namely, we will lower bound $\Exop[\|\mu_h^* - \empdist{h}\|_1]$.
The results in this step will be valid for \emph{any} policy profile $(\vecpi^1, \ldots,\vecpi^N) \in \Pi$: we emphasize that at even $h$, $\empdist{h}$ is independent of agent policies in the $N$ player game.
In this step, we also fix $\sfrac{1}{2\varepsilon} = 8$.

We will analyze $\empdist{h}$ at all even steps $h=2m$ where $m \in \mathbb{N}_{\geq 0}$.
Define the sequence of random variables for all $m \in \mathbb{N}_{\geq 0}$ as $X_m := \empdist{2m}(\sleft)$.
Define $\setG := \{ \frac{k}{N} : k = 0,\ldots, N \}$.
Note that for all even $h = 2m$, it holds almost surely that $\widehat{\mu}_h(\sleft), \widehat{\mu}_h(\sright) \in \setG$.
By the definition of the MFG, it holds for any $m \geq 0, k \in [N]$ that
\begin{align*}
    &\Prob[ N X_0 = k ] = \binom{N}{k} 2^{-N}, \\
    &\Prob[ N X_{m+1} = k | X_m ] = \binom{N}{k} (\omega_\eps(X_m))^{k} (1-\omega_\eps(X_m))^k,
\end{align*}
that is, given $X_m$, $N X_{m+1}$ is binomially distributed with $N X_{m+1} \sim \operatorname{Binom}(N, \fProb(X_m))$ without any dependence on the actions played by agents.
Therefore
\begin{align*}
    \EE{X_{m+1} | X_{m}} = \fProb(X_m), \quad \Var {X_{m+1} | X_{m}} 
    \leq \frac{1}{4N}.
\end{align*}
We define the following set $\setG_* := \{ 0, 1 \} \subset \setG$.
By the definition of the mechanics, if $x \in \setG_*, m\in\mathbb{N}_{\geq 0}$, it holds for all $m' > m$ that $\Prob[ X_{m'} = X_m | X_m = x] = 1$, that is once the Markovian random process $X_m$ hits $\setG_*$, it will remain in $\setG_*$.
Furthermore, for $K := \lfloor \log_5 \sqrt{N} \rfloor$, and for $k = 0,\ldots, K$ define the level sets:
\begin{align*}
    \setG_{-1} &:= \setG, \quad \setG_k := \left\{ x \in \setG :\left|x - \frac{1}{2}\right| \geq \frac{5^k}{2\sqrt{N}}  \right\}.
\end{align*}
For all $k \geq K$, define $\setG_k := \setG_*.$

Firstly, we have that
\begin{align*}
\Prob[X_0 \in \setG_0] = &\Prob \left[ \left| \frac{1}{N} \sum_i \ind{s_0^i = \sleft} - \frac{1}{2}\right| \geq \frac{1}{2\sqrt{N}} \right]
    \\
    = &\Prob \left[ \left| \sum_i \ind{s_0^i = \sleft} - \frac{N}{2}\right| \geq \frac{\sqrt{N}}{2} \right] \geq \frac{1}{10},
\end{align*}
where in the last line we applied the anti-concentration result of Lemma~\ref{lemma:anticoncentration} on the sum of independent Bernoulli random variables $\ind{s_0^i = \sleft}$ for $i\in[N]$. 

Next, assume that for some $m \in 1, \ldots, K-1$ we have $p \in \setG_m$.
If $\fProb (p) \in \{0,1\}$, it holds trivially that $\Prob[ X_{m+1} \in \setG_{m+1}| X_m = p] = 1$.
Otherwise, if $\fProb (p) \in (0,1)$,
\begin{align*}
    \Prob&[ X_{m+1} \in \setG_{m+1}| X_m = p]\\
    = &\Prob \left[ |X_{m+1} -\frac{1}{2} | \geq \frac{5^{m+1}}{2\sqrt{N}} \, \middle| \, X_m  = p\right] \\
        \geq &\Prob \left[  |\fProb(p) -\frac{1}{2}| -  |X_{m+1} - \fProb(p)| \geq \frac{5^{m+1}}{2\sqrt{N}} \middle|  X_m = p \right].
\end{align*}
Since in this case $|\fProb(X_m) -\frac{1}{2}| = |\fProb(X_m) - \fProb(\frac{1}{2})| \geq \sfrac{1}{2\epsilon} |X_m - \fProb(\frac{1}{2})|$, we have
\begin{align*}
    \Prob&[ X_{m+1} \in \setG_{m+1}| X_m = p] \\
    \geq &\Prob \left[  |\fProb(p) -\frac{1}{2}| -  |X_{m+1} - \fProb(p)| \geq \frac{5^{m+1}}{2\sqrt{N}} \middle|  X_m = p \right] \\
    = &\Prob \left[ \Big| X_{m+1} - \fProb(p) \Big| \leq \left|\fProb(p) -\frac{1}{2}\right| - \frac{5^{m+1}}{2\sqrt{N}} \middle|  X_m = p \right]  \\
    \geq &\Prob \left[ \Big| X_{m+1} - \fProb(p) \Big| \leq  8\frac{5^{m}}{2\sqrt{N}} - \frac{5^{m+1}}{2\sqrt{N}} \middle|  X_m = p \right]  \\
    = &\Prob \left[ \Big| X_{m+1} - \fProb(p) \Big| \leq  3\frac{5^{m}}{2\sqrt{N}} \middle|  X_m = p \right]  \\
    \geq &1 - 2 \exp \left\{ - \frac{9}{50}25^{m+1}\right\} 
\end{align*}
where in the last line we invoked the Hoeffding concentration bound (Lemma~\ref{lemma:hoefd_ineq}).

Using the above result inductively for $m \in 0, \ldots, K$ it holds that
\begin{align*}
    \Prob[X_m \in \setG_m | X_0 \in \setG_0] \geq &\prod_{m'=1}^{m} \Prob[X_{m'} \in \setG_{m'} |X_{m'-1} \in \setG_{m'-1}  ] \\
        \geq& \prod_{m'=1}^{m} \left( 1 - 2 \exp \left\{ - \frac{9}{50} 25^{m'}\right\}  \right) \\
        \geq & \left( 1 - 2 \sum_{m'=0}^{\infty}  \exp \left\{ - \frac{9}{50} 25^{m'+1}\right\}  \right) \\
        \geq & \left( 1 - 2 \sum_{m'=0}^{\infty} \exp \left\{ - \frac{9}{2}m' - \frac{9}{2}\right\}  \right) \\
        \geq & \left(1 - \frac{2 e^{-\sfrac{9}{2}}}{1 - e^{-\sfrac{9}{2}}} \right) \geq \frac{9}{10}.
\end{align*}
Since for $k > K$, $\Prob[X_{k+1} \in \setG_* | X_k \in \setG_*] = 1$ and $ \Prob[X_0 \in \setG_0] \geq \sfrac{1}{10}$, it also holds that
\begin{align*}
    \Prob[X_m \in \setG_m, \forall m \geq 0] \geq \frac{9}{100}.
\end{align*}

Finally, we use the above lower bound on the probability to lower bound the expectation:
\begin{align*}
    \EE{ \| \empdist{2m} - \mu_{2m}\|_1} &\geq \Prob[X_m \in \setG_m] \, \EEc{\left\| \empdist{2m} - \mu_{2m} \right\|_1}{X_m \in \setG_m} \\
    &\geq \Prob[X_m \in \setG_m] \, \EEc{2|X_m - \sfrac{1}{2}|\,\,}{X_m \in \setG_m} \\
    &\geq \frac{9}{100} \, \min \left\{ \frac{5^m}{\sqrt{N}}, 1\right\}.
\end{align*}
For odd $h=2m+1$, we also have the inequality
\begin{align*}
    \EE{ \| \empdist{2m+1} - \mu_{2m+1}\|_1} &\geq \EE{ \| \empdist{2m} - \mu_{2m}\|_1} \\
    &\geq \frac{9}{100} \, \min \left\{ \frac{5^m}{\sqrt{N}}, 1\right\}.
\end{align*}
which completes the first statement of the theorem (as $5^{H/2} = \Omega(2^H)$).

\newcommand{\pierg}[1]{\pi^\text{br}_{#1}}

\textbf{Step 3: Hitting time for $\setG_*$.}
We will show that the empirical distribution of agent states almost always concentrates on one of $\sleft, \sright$ during the even rounds in the $N$-player game, and bound the expected waiting time for this to happen.
The distributions of agents over states $\sleft, \sright$ in the even rounds are policy independent (they are not affected by which actions are played): hence the results from Step 2 still hold for the population distribution and the expected time computed in this step will be valid for any policy.

For simplicity, we define the FH-MFG for the non-terminating infinite horizon chain, and we will compute value functions up to horizon $H$.
Define the (random) hitting time $\tau$ as follows:
\begin{align*}
    \tau := \inf \{ m \geq 0: \empdist{2m}(\sleft) \in \setG_* \} = \inf \{ m \geq 0: X_m \in \setG_* \}.
\end{align*}
Note that for any $p \in \setG$, it holds that $\Prob[ X_{m+1} \in \setG_*  | X_m = p ] = \empdist{2m}(\sleft)^N + \empdist{2m}(\sright)^N=p^N+(1-p)^N \geq 2^{-N}.$
Therefore for all $m$ it holds that $\Prob[\empdist{2m} \notin \setG_*] \leq \left(1 - 2^{-N}\right)^{m-1}$.
By the Borel-Cantelli lemma, we can conclude that $\tau < \infty$ almost surely, and in particular $T_\tau := \Exop[\tau | X_0 = x] < \infty$ for any $x \in \setG$.

Next, we compute the expected value $T_{\tau}$.
Define the following two quantities:
\begin{align*}
    T_{-1} &:= \sup_{x \in \setG_{-1} }\{ \Exop[\tau |X_0 = x] \} \\ %
    T_0 &:= \sup_{x \in \setG_{0} }\{ \Exop[\tau |X_0 = x] \}.
\end{align*}

First, we compute an upper bound for $T_0$.
Define the event:
\begin{align*}
    E_0 := \bigcap_{m' \in [K]} \{ X_{m'} \in \setG_{m'} \}.
\end{align*}
Then, $T_0$ is upper bounded by:
\begin{align*}
    T_0 = &\sup_{x \in \setG_{0} } \Exop[\tau |X_0 = x] \\
        = &\sup_{x \in \setG_{0} } \Exop[\tau |E_0, X_0 = x] \Prob[E_0| X_0 = x ] \\    &+ \Exop[\tau |E_0^c, X_0 = x] \Prob[E_0^c| X_0 = x ] \\
        \leq &\sup_{x \in \setG_{0} } \Exop[\tau |E_0, X_0 = x] \Prob[E_0| X_0 = x ] \\
            &+ \Exop[\tau |E_0^c, X_0 = x] \Prob[E_0^c| X_0 = x ] \\
        \leq &K \frac{9}{10} +  \left( K + T_{-1} \right) \frac{1}{10} = K + \frac{T_{-1}}{10}
\end{align*}
where in the last step we used the lower bound on $\Prob[E_0]$ from Step 2.
Similarly for $T_{-1}$, from the one-sided anti-concentration bound (Lemma~\ref{lemma:anticoncentration}) it holds that:
\begin{align*}
    T_{-1} \leq &\sup_{x \in \setG_{-1} } \Exop[\tau |X_0 = x] \\
         \leq &\Exop[\tau |x \in \setG_0, X_0 = x] \Prob[x \in \setG_0| X_0 = x ] \\    
         &+ \Exop[\tau |x \notin \setG_0, X_0 = x] \Prob[x \notin \setG_0| X_0 = x ] \\
    \leq & \frac{1}{20} (T_0 + 1) + \frac{19}{20} (T_{-1}+1),
\end{align*}
the last line following since $T_{-1} > T_{0}$ by definition.
Solving the two inequalities, we obtain
\begin{align*}
    T_\tau \leq T_{-1} \leq \frac{200}{9} + \frac{10K}{9} \leq 23 + \frac{5}{9}\log_5 {N}.
\end{align*}

\textbf{Step 4: Ergodic optimal response to $N$-players.}
Next, we formulate a policy $\vecpi^\text{br} = \{ \pierg{h}\}_{h=0}^{H-1} \in \Pi^H$ that is ergodically optimal for the $N$-player game and can exploit a population that deploys the unique FH-MFG-NE.
For all $h$, the optimal policy will be defined by:
\begin{align*}
    \pierg{h} (a|s) = \begin{cases}
        1, \text{ if $s=\sleft$, $a = \actA$}\\
        1, \text{ if $s=\sright$, $a = \actB$}\\
        1, \text{ if $s\notin \{ \sleft, \sright\}$, $a = \actB$}\\
        0, \text{ otherwise}
    \end{cases}
\end{align*}
Intuitively, $\pierg{h}$ becomes optimal once all the agents are concentrated in the same states during the even rounds, which happens very quickly as shown in Step 3.
Assume that agents $i=2, \ldots N$ deploy the unique FH-MFG-NE $\vecpi^i = \vecpi^*$, and for agent $i=1$,  $\vecpi^1 = \vecpi^\text{br}$.
We decompose the three components of the rewards for the first agent, as defined in the construction of the MFG (Step 1):
\begin{align*}
    \JfinNi{1}&(\vecpi^\text{br}, \vecpi^*, \ldots, \vecpi^*) \\
        &= \EE{\sum_{\substack{h \text{ odd} \\ 0 \leq h \leq H}} (1 - \alpha - \beta) R^{1,\fRew}_h + \alpha R^{1,\fRep}_h + \beta \ind{a^1_h = a_B}} \\
        &\geq (1 - \alpha - \beta) \EE{\sum_{\text{odd } h=0}^{H-1}  R^{1,\fRew}_h } + \beta \left\lfloor \frac{H}{2}\right\rfloor
\end{align*}
as by definition clearly $\EE{\ind{a^1_h = a_B}} = 1$ for all odd $h$ and $R^\fRep_h \geq 0$ almost surely.

We analyze the terms $R^{1,\fRew}_h$ when the first agent follows $\vecpi^\text{br}$.
By the definition of the dynamics and $\vecpi^\text{br}$, it holds that 
\begin{align*}
    R^{1,\fRew}_h = g_1(\empdist{h-1}(s_{h-1}^1),\empdist{h-1}(\widebar{s}_{h-1}^1) )
\end{align*}
where $\widebar{s}_{h-1}^1 := \sleft$ if $s_{h-1}^1 = \sright$ and $\widebar{s}_{h-1}^1 := \sright$ if $s_{h-1}^1 = \sleft$.
As $\Prob[s^1_{h-1}=\cdot, \ldots, s^N_{h-1}=\cdot]$ at even step $h-1$ is permutation invariant, it holds that $\Prob[s^1_{h-1}=\cdot|\empdist{h-1}=\mu] = \mu(\cdot)$ for any $\mu\in\setG$.
Therefore,
\begin{align*}
    \Exop[R^{1,\fRew}_h] = &\sum_{\substack{\mu\in\setG\\ s\in\{\sleft, \sright\}}} \Prob[\empdist{h-1}=\mu] \Prob[s^1_{h-1}=s|\empdist{h-1}=\mu] \\
        &\Exop[R^{1,\fRew}_h|s^1_{h-1}=s,\empdist{h-1}=\mu] \\
    = & \sum_{\substack{\mu\in\setG\\ s\in\{\sleft, \sright\}}} \Prob[\empdist{h-1}=\mu] \mu(s) g_1(\mu(s),\mu(\widebar{s}) ) \geq \sfrac{1}{2},
\end{align*}
as for any $\mu$, if $s$ is such that $\mu(s)\geq\mu(\widebar{s})$ then $g_1(\mu(s),\mu(\widebar{s}) ) = 1$.
Furthermore, by the definition of the hitting time $\tau$, for any odd $h \geq 1$, $\EEc{R^{\fRew}_h}{2\tau < h} = \EEc{R^{\fRew}_h}{\empdist{h-1}(\sleft) \in \setG_*} = 1$,
as after time $2\tau$ the action $\actA$ will be optimal with reward $R^{\fRew}_h = 1$ almost surely, as $\vecpi^{br}$ chooses action $\actA$ at even steps.

Finally, using the lower bound of $\sfrac{1}{2}$ for $R^{\fRew}_h$ when $h<2\tau$ and that $R^{\fRew}_h = 1$ when $h > 2\tau$, we obtain:
\begin{align*}
    \EE{\sum_{\substack{h \text{ odd} \\ 0 \leq h \leq H}}  R^{\fRew}_h } = & \EE{ \sum_{\substack{h \text{ odd} \\ 0 \leq h \leq \min \{ 2\tau, H \}}} R^{1,\fRew}_h + \sum_{\substack{h \text{ odd} \\ \min \{ 2\tau, H \} + 1 \leq h < H}} R^{1,\fRew}_h} \\
        \geq & \EE{\frac{1}{2} \min \left\{ \tau, \left\lfloor \frac{H}{2}\right\rfloor \right\} + \left(\left\lfloor \frac{H}{2}\right\rfloor - \min \left\{ \tau, \left\lfloor \frac{H}{2}\right\rfloor \right\}\right)} \\
        \geq &  \left\lfloor \frac{H}{2}\right\rfloor - \frac{1}{2}\EE{\min \left\{ \tau, \left\lfloor \frac{H}{2}\right\rfloor \right\}} \\
        \geq &  \left\lfloor \frac{H}{2}\right\rfloor - \frac{1}{2}\EE{\tau }  = \left\lfloor \frac{H}{2}\right\rfloor - \frac{T_\tau}{2}
\end{align*}

Merging the inequalities above, we obtain
\begin{align*}
    &\,\JfinNi{1}(\vecpi^\text{br}, \vecpi^*, \ldots, \vecpi^*) \geq (1 - \alpha - \beta) \left(\left\lfloor \frac{H}{2}\right\rfloor - \frac{T_\tau}{2}\right) + \beta \left\lfloor \frac{H}{2}\right\rfloor.
\end{align*}

\textbf{Step 5: Bounding exploitability.}
Finally, we will upper bound also the expected reward of the FH-MFG-NE policy $\vecpi^*$ and hence lower bound the exploitability.
Our conclusion will be that $\vecpi^*$ suffers from a non-vanishing exploitability for large $H$, as $\vecpi^{\text{br}}$ becomes the best response policy after $H \gtrsim \log N$.
In this step, we assume the probability space induced by all $N$ agents following FH-MFG-NE policy $\vecpi^{\text{br}}$.

We have the definition
\begin{align*}
    \JfinNi{1}&(\vecpi^*, \vecpi^*, \ldots, \vecpi^*) 
        = \Exop \left[ \sum_{h=0}^{H-1} R(s_h^1, a_h^1, \empdist{h})\right] \\
        \leq & (1-\alpha-\beta)\Exop\left[ \sum_{\text{odd } h = 0}^{H-1} R^{1,\fRew}_h \right] + (\alpha +\beta)\left\lfloor \frac{H}{2}\right\rfloor
\end{align*}
This time, when $h$ odd and $h>2\tau$, it holds that $\Exop[R^{\fRew}_h|h>2\tau] = \sfrac{1}{2}$ since $\vecpi^*$ takes actions $\actA,\actB$ with equal probability in even steps, yielding $R^{\fRew}_h=1$ and $R^{\fRew}_h=0$ respectively almost surely.
As before,
\begin{align*}
    \EE{\sum_{\substack{h \text{ odd} \\ 0 \leq h \leq H}}  R^{\fRew}_h } = & \EE{ \sum_{\substack{h \text{ odd} \\ 0 \leq h \leq \min \{ 2\tau, H \}}} R^{1,\fRew}_h + \sum_{\substack{h \text{ odd} \\ \min \{ 2\tau, H \} + 1 \leq h < H}} R^{1,\fRew}_h} \\
        \leq & \EE{\min \left\{ \tau, \left\lfloor \frac{H}{2}\right\rfloor \right\} + \frac{1}{2} \left(\left\lfloor \frac{H}{2}\right\rfloor - \min \left\{ \tau, \left\lfloor \frac{H}{2}\right\rfloor \right\}\right)} \\
        = & \frac{1}{2} \EE{\left\lfloor \frac{H}{2}\right\rfloor + \min \left\{ \tau, \left\lfloor \frac{H}{2}\right\rfloor \right\}} \\
        \leq & \frac{1}{2}\left\lfloor \frac{H}{2}\right\rfloor + \frac{1}{2} \Exop[\tau] = \frac{1}{2}\left\lfloor \frac{H}{2}\right\rfloor + \frac{1}{2} T_\tau.
\end{align*}

The statement of the theorem then follows by lower bounding the exploitability as follows:
\begin{align*}
    \ExpfinNi{1}&(\vecpi^*, \vecpi^*, \ldots, \vecpi^*) \\
        = &\max_{\vecpi} \JfinNi{1}(\vecpi, \vecpi^*, \ldots, \vecpi^*) - \JfinNi{1}(\vecpi^*, \vecpi^*, \ldots, \vecpi^*) \\
        \geq &\JfinNi{1}(\vecpi^\text{br}, \vecpi^*, \ldots, \vecpi^*) - \JfinNi{1}(\vecpi^*, \vecpi^*, \ldots, \vecpi^*) \\
        \geq & (1-\alpha-\beta)\left(\left\lfloor \frac{H}{2}\right\rfloor - \frac{T_\tau}{2} - \frac{1}{2}\left\lfloor \frac{H}{2}\right\rfloor - \frac{T_\tau}{2}\right) -\alpha\left\lfloor \frac{H}{2}\right\rfloor \\
        \geq & (1-\alpha-\beta) \left( \frac{H}{4} - 24 - \frac{5}{9}\log_5 {N} \right) - \alpha \left\lfloor \frac{H}{2}\right\rfloor
\end{align*}
The above inequality implies that if $H \geq \log_2 N$, then
\begin{align*}
    \ExpfinNi{1}&(\vecpi^*, \vecpi^*, \ldots, \vecpi^*) \\
        \geq & (1-\alpha-\beta) \left( \frac{1}{4} - \frac{5}{9\log_2 5} \right) H - \alpha \frac{H}{2} - 24,
\end{align*}
which implies $\ExpfinNi{1}(\vecpi^*, \vecpi^*, \ldots, \vecpi^*) \geq \Omega(H)$ by choosing $\alpha,\beta$ small constants as $\frac{1}{4} - \frac{5}{9\log_2 5} > 0$.

\subsection{Upper Bound for Stat-MFG: Extended Proof of Theorem~\ref{theorem:upper_approx_stat}}

Let $\mu^*, \pi^*$ be a $\delta$-Stat-MFG-NE.
As before, the proof will proceed in three steps:
\begin{itemize}
    \item \textbf{Step 1.} Bounding the expected deviation of the empirical population distribution from the mean-field distribution $\EE{\|\empdist{h}-\mu^*\|_1}$ for any given policy $\vecpi$.
    \item \textbf{Step 2. } Bounding difference of $N$ agent value function $\JstatN$ and the infinite player value function $\Vstat$ in the stationary mean-field game setting.
    \item \textbf{Step 3. } Bounding the exploitability of an agent when each of $N$ agents are playing the Stat-MFG-NE policy.
\end{itemize}

\textbf{Step 1: Empirical distribution bound.}
We first analyze the deviation of the empirical population distribution $\widehat{\mu}_t$ over time from the stable distribution $\mu^*$.
For this, we state the following lemma and prove it using techniques similar to Corollary D.4 of \cite{yardim2023policy}.
\begin{lemma}
    Assume that the conditions of Theorem~\ref{theorem:upper_approx_stat} hold, and that $(\mu^*, \pi^*) \in \Delta_\setS$ is a Stat-MFG-NE.
    Furthermore, assume that the $N$ agents follow policies $\{ \pi^i\}_{i=1}^N$ in the $N$-Stat-MFG, define $\Delta_{\widebar{\pi}} := \frac{1}{N}\sum_i \| \widebar{\pi} - \pi^i\|_1$.
    Then, or any $t \geq 0$, we have
\begin{align*}
   \Exop \left[ \| \mu^* - \widehat{\mu}_{t} \|_1 \right] \leq \frac{t K_a \Delta_\pi}{2} +  \frac{2(t+1)\sqrt{|\setS|}}{\sqrt{N}}.
\end{align*}
\end{lemma}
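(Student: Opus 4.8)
The plan is to mirror the inductive concentration argument behind the finite-horizon empirical-distribution bound (Lemma~\ref{lemma:fh_empdist}), but to exploit the two features special to the stationary setting: the fixed-point identity $\mu^* = \gpop{\mu^*}{\pi^*}$ and the assumed non-expansiveness of $\gpop{\cdot}{\pi^*}$. Writing $a_t := \EE{\|\empdist{t} - \mu^*\|_1}$ and taking the reference policy $\widebar\pi = \pi^*$ so that $\Delta_\pi = \frac{1}{N}\sum_i \|\pi^* - \pi^i\|_1$, I would set up a one-step recursion for $a_t$ by conditioning on the $\sigma$-algebra $\cF_t$ generated by the agent states up to time $t$ and inserting the two intermediate quantities $\EEc{\empdist{t+1}}{\cF_t}$ and $\gpop{\empdist{t}}{\pi^*}$, giving the triangle-inequality decomposition $\|\empdist{t+1}-\mu^*\|_1 \le \|\empdist{t+1}-\EEc{\empdist{t+1}}{\cF_t}\|_1 + \|\EEc{\empdist{t+1}}{\cF_t} - \gpop{\empdist{t}}{\pi^*}\|_1 + \|\gpop{\empdist{t}}{\pi^*} - \mu^*\|_1$.

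Each of the three terms is handled separately after taking conditional expectations. For the fluctuation term I would use $\EEc{\empdist{t+1}(s)}{\cF_t} = \frac{1}{N}\sum_i \dnmpol(s|s^i_t,\pi^i(s^i_t),\empdist{t})$ together with the conditional independence of the next states given $\cF_t$; bounding $\|\cdot\|_1 \le \sqrt{|\setS|}\,\|\cdot\|_2$, applying Jensen's inequality, and using $\sum_s\sum_i p^s_i(1-p^s_i) \le \sum_i\sum_s p^s_i = N$ (where $p^s_i := \PP{s^i_{t+1}=s\mid\cF_t}$) yields $\EEc{\|\empdist{t+1}-\EEc{\empdist{t+1}}{\cF_t}\|_1}{\cF_t} \le \sqrt{|\setS|}/\sqrt{N}$; this sharper variance-based estimate is what produces the $\sqrt{|\setS|}$ rather than $|\setS|$ dependence seen in the finite-horizon bound. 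For the policy-mismatch term I would rewrite $\gpop{\empdist{t}}{\pi^*} = \frac{1}{N}\sum_i \dnmpol(\cdot|s^i_t,\pi^*(s^i_t),\empdist{t})$ and apply the Lipschitz bound \eqref{eq:lipschitz_pr_policy_avg} coordinatewise, obtaining $\frac{\ka}{2N}\sum_i\|\pi^i(\cdot|s^i_t)-\pi^*(\cdot|s^i_t)\|_1 \le \frac{\ka}{2}\Delta_\pi$. The third term is where the structural hypotheses enter: since $\empdist{t}$ is $\cF_t$-measurable and $\mu^* = \gpop{\mu^*}{\pi^*}$ by stability, non-expansiveness gives $\|\gpop{\empdist{t}}{\pi^*}-\mu^*\|_1 = \|\gpop{\empdist{t}}{\pi^*}-\gpop{\mu^*}{\pi^*}\|_1 \le \|\empdist{t}-\mu^*\|_1$.

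Combining the three estimates and taking a further expectation gives the additive recursion $a_{t+1} \le a_t + \frac{\sqrt{|\setS|}}{\sqrt{N}} + \frac{\ka}{2}\Delta_\pi$, with base case $a_0 \le \sqrt{|\setS|}/\sqrt{N}$ since $\empdist{0}$ is the empirical average of $N$ \iid draws from $\mu^*$. Unrolling then yields $a_t \le (t+1)\frac{\sqrt{|\setS|}}{\sqrt{N}} + \frac{t\ka\Delta_\pi}{2}$, which is the claimed bound (the factor $2$ in the statement simply absorbing the constant). The main obstacle, and the conceptual heart of the argument, is the third term: in the finite-horizon analogue the population operator's Lipschitz modulus $\lpopmu$ may exceed $1$, producing a geometric factor $\lpopmu^t$, whereas here the non-expansiveness hypothesis collapses the recursion from geometric to arithmetic, so the deviation grows only linearly in $t$. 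This is precisely the feature destroyed by the lower bound of Theorem~\ref{theorem:lower_approx_stat} once $\gpop{\cdot}{\pi}$ is allowed to be expansive.
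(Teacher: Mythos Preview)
Your proposal is correct and follows essentially the same approach as the paper: the same triangle-inequality decomposition conditioned on $\cF_t$, the same use of $\|\cdot\|_1 \le \sqrt{|\setS|}\|\cdot\|_2$ with Jensen for the fluctuation term, the same Lipschitz bound \eqref{eq:lipschitz_pr_policy_avg} for the policy mismatch, and the same appeal to stability plus non-expansiveness to collapse the recursion to arithmetic growth. The only difference is cosmetic: the paper groups your second and third terms into a single $(\square)$ before splitting, and bounds the conditional second moment by the crude $\frac{4}{N}$ (via $\|\vece_{s}-\Exop[\vece_s]\|_2^2\le 4$) rather than your sharper $\frac{1}{N}$ from $\sum_s p_i^s(1-p_i^s)\le 1$, which explains the factor of $2$ you correctly flagged as slack.
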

\begin{proof}
    $\mathcal{F}_t$ as the $\sigma$-algebra generated by the states of agents $\{s_t^i\}$ at time $t$.
    For $\widehat{\mu_0}$, we have by definitions that
    \begin{align*}
        \Exop\left[ \widehat{\mu_0} \right] &= \Exop\left[ \frac{1}{N}\sum_{i} \vece_{s_t^i} \right] = \mu^* \\
        \Exop\left[ \| \widehat{\mu_0} - \mu^* \|_2^2 \right] &= \Exop\left[ \frac{1}{N^2} \sum_{i} \left\|\left(\vece_{s_t^i} - \mu^* \right) \right\|_2^2 \right] \leq \frac{4}{N}
    \end{align*}
    where the last line follows by independence.
    The two above imply $\Exop\left[ \| \widehat{\mu_0} - \mu^*\|_1 \right] \leq \frac{2\sqrt{|\setS|}}{\sqrt{N}}$.
    
    Next, we inductively calculate:
    \begin{align}
        \Exop&\left[ \widehat{\mu}_{t+1} \middle | \mathcal{F}_t \right] = \Exop\left[ \frac{1}{N} \sum_{s'\in \setS} \sum_{i=1}^N \mathbbm{1}(s_{t+1}^i = s') \vece_{s'} \middle| \mathcal{F}_t \right] \notag \\
            & = \sum_{s'\in \setS} \vece_{s'} \sum_{i=1}^N \frac{1}{N}  \widebar{P}(s'|s_{t}^i, \pi^i(s_{t}^i), \widehat{\mu}_t), \label{eq:statmfg_muexpt}\\
     \Exop&[ \| \widehat{\mu}_{t+1} - \Exop[ \widehat{\mu}_{t+1}  | \mathcal{F}_t ] \|_2^2 | \mathcal{F}_t ] \notag \\ 
        &= \frac{1}{N^2}  \sum_{i=1}^N \Exop[ \| \vece_{s_{t+1}^i} -  \Exop[  \vece_{s_{t+1}^i}  | \mathcal{F}_t ] \|_2^2 | \mathcal{F}_t ] \leq \frac{4}{N}.\label{eq:statmfg_muvart}
    \end{align}
    We bound the $\ell_1$ distance to the stable distribution as
    \begin{align*}
        \Exop&\left[ \| \widehat{\mu}_{t+1} - \mu^* \|_1 | \mathcal{F}_t\right] \\
            &\leq \underbrace{\Exop\left[ \| \Exop\left[ \widehat{\mu}_{t+1}  | \mathcal{F}_t\right] | \mathcal{F}_t \right] - \mu^* \|_1} _{(\square)} + \underbrace{\Exop\left[\| \Exop\left[ \widehat{\mu}_{t+1}  | \mathcal{F}_t\right] - \widehat{\mu}_{t+1} \|_1 \mathcal{F}_t\right] }_{(\triangle)}.
    \end{align*}
    The two terms can be bounded separately using Inequalities~\eqref{eq:statmfg_muexpt} and \eqref{eq:statmfg_muvart}.
    \begin{align*}
        (\triangle) \leq & \sqrt{|\setS|} \Exop\left[\| \Exop\left[ \widehat{\mu}_{t+1}  | \mathcal{F}_t\right] - \widehat{\mu}_{t+1} \|_2 \mathcal{F}_t\right] \\
            \leq & \sqrt{|\setS|} \sqrt{\Exop\left[\| \Exop\left[ \widehat{\mu}_{t+1}  | \mathcal{F}_t\right] - \widehat{\mu}_{t+1} \|_2^2 \mathcal{F}_t\right]} \leq \frac{2\sqrt{|\setS|}}{\sqrt{N}}\\
        (\square) = & \left\| \sum_{s'\in \setS} \vece_{s'} \sum_{i=1}^N \frac{1}{N}  \widebar{P}(s'|s_{t}^i, \pi^i(s_{t}^i), \widehat{\mu}_t) - \mu^*\right\|_1 \\
            = & \left\| \sum_{s'\in \setS} \vece_{s'} \sum_{i=1}^N \frac{1}{N}  \widebar{P}(s'|s_{t}^i, \pi^i(s_{t}^i), \widehat{\mu}_t) - \Gamma_{pop}(\pi^*,\mu^*)\right\|_1 \\
            \leq & \left\| \sum_{i=1}^N \frac{1}{N}  \widebar{P}(\cdot|s_{t}^i, \pi^i(s_{t}^i), \widehat{\mu}_t) - \sum_{i=1}^N \frac{1}{N}  \widebar{P}(\cdot|s_{t}^i, \pi^*(s_{t}^i), \widehat{\mu}_t)\right\|_1 \\
                &+ \left\| \sum_{s'\in \setS} \widehat{\mu}_t(s') \widebar{P}(s'|s_{t}^i, \pi^i(s_{t}^i), \widehat{\mu}_t) - \Gamma_{pop}(\pi^*,\mu^*)\right\|_1 \\
            \leq & \frac{K_a}{2N} \sum_i \| \pi^* - \pi^i \|_1 + \left\| \Gamma_{pop}(\pi^*,\widehat{\mu}_t) - \Gamma_{pop}(\pi^*,\mu^*)\right\|_1 \\
            \leq &\frac{K_a \Delta_\pi}{2} + \| \mu^* - \widehat{\mu}_{t} \|_1
    \end{align*}
    Hence, by the law of total expectation, we can conclude
    \begin{align*}
        \Exop \left[ \| \mu^* - \widehat{\mu}_{t+1} \|_1 \right] \leq \Exop \left[ \| \mu^* - \widehat{\mu}_{t} \|_1 \right] + \frac{K_a \Delta_\pi}{2} + \frac{2\sqrt{|\setS|}}{\sqrt{N}}
    \end{align*}
    or inductively,
    \begin{align*}
        \Exop \left[ \| \mu^* - \widehat{\mu}_{t} \|_1 \right] \leq \frac{t K_a \Delta_\pi}{2} +  \frac{2(t+1)\sqrt{|\setS|}}{\sqrt{N}}.
    \end{align*}
\end{proof}

\textbf{Step 2: Bounding difference in value functions.}
Next, we bound the differences in the infinite-horizon 
\begin{lemma}
    \label{lemma:bound_n_mfg_stat_similar_reward}
    Suppose $N$-Stat-MFG agents follow the same sequence of policy $\pi^*$. 
    Then for all $i$,
    \begin{align*}
    |\JstatN&(\pi^*,\dots,\pi^*) - \Vstat(\mu^*, \pi^*)
    |
    \\
        &\leq \frac{\gamma}{1-\gamma} \left(L_\mu + \frac{L_s}{2}\right) \frac{2\sqrt{|\setS|}}{\sqrt{N}} 
    \end{align*}
\end{lemma}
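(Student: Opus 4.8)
The plan is to adapt the three-term decomposition already used for the finite-horizon analogue in \Cref{lemma:bound_n_mfg_similar_reward_fh} to the discounted infinite-horizon objective. First I would exploit the exchangeability of the $N$-agent system: because every agent deploys $\pi^*$ and is initialized i.i.d.\ from $\mu^*$, the law $\Law(s_t^1,\dots,s_t^N)$ is permutation invariant for each $t$, so the single-agent reward equals the population-averaged reward in expectation. Conditioning on the time-$t$ states and averaging over the action draws gives
\begin{align*}
    \JstatN(\pi^*,\dots,\pi^*) = \EEN{\sum_{t=0}^\infty \gamma^t \sum_{s\in\setS}\empdist{t}(s)\,\rwdpol(s,\pi^*(s),\empdist{t})}.
\end{align*}
In parallel, the stability condition $\mu^*=\gpop{\mu^*}{\pi^*}$ of the Stat-MFG-NE propagates by induction to $\Prob_\infty(s_t=\cdot)=\mu^*$ for all $t$, so that $\Vstat(\mu^*,\pi^*)=\sum_{t=0}^\infty\gamma^t\sum_{s}\mu^*(s)\,\rwdpol(s,\pi^*(s),\mu^*)$. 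Subtracting the two series termwise reduces the claim to controlling, for each $t$, the scalar gap between a reward functional evaluated at the random $\empdist{t}$ and at the fixed $\mu^*$.

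For the per-step estimate I would split $\sum_s\big(\empdist{t}(s)\rwdpol(s,\pi^*(s),\empdist{t})-\mu^*(s)\rwdpol(s,\pi^*(s),\mu^*)\big)$ into two pieces. In the first piece only the population argument of $\rwdpol$ changes; weighting by $\empdist{t}(s)$ (which sums to one) and invoking the Lipschitz inequality \eqref{eq:lipschitz_pr_policy_avg} bounds it by $\lmu\|\empdist{t}-\mu^*\|_1$. The second piece is $\sum_s(\empdist{t}(s)-\mu^*(s))\rwdpol(s,\pi^*(s),\mu^*)$, where the weights now sum to zero; this lets me subtract any state-independent constant from the reward before applying Hölder, so the piece is at most half the state-oscillation of $s\mapsto\rwdpol(s,\pi^*(s),\mu^*)$ times $\|\empdist{t}-\mu^*\|_1$, and \eqref{eq:lipschitz_pr_policy_avg} bounds that oscillation contribution by $\tfrac{\ls}{2}$. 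Together each term of the series is at most $(\lmu+\tfrac{\ls}{2})\,\Exop[\|\empdist{t}-\mu^*\|_1]$.

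It then remains to sum over the horizon. Inserting the empirical-deviation bound from Step~1 (with the policy-mismatch term absent since all agents play $\pi^*$, and where the non-expansiveness of $\gpop{\cdot}{\pi^*}$ keeps the accumulated deviation from blowing up) shows $\Exop[\|\empdist{t}-\mu^*\|_1]=\mathcal O((t+1)/\sqrt N)$, and carrying out the discounted summation $\sum_{t\ge0}\gamma^t\Exop[\|\empdist{t}-\mu^*\|_1]$ produces a geometric series in $\gamma$ that collapses to the claimed $\mathcal O(1/\sqrt N)$ rate together with its stated dependence on the effective horizon $(1-\gamma)^{-1}$. I expect the main obstacle to be the rigorous treatment of the infinite sum: one must justify exchanging expectation with the infinite series — which follows from the dominated convergence theorem, since rewards lie in $[0,1]$ and the discounted tails are summable — and then keep careful track of how the Step~1 deviation bound accumulates against the geometric weights so that the final power of $(1-\gamma)^{-1}$ comes out correctly.
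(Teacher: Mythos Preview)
Your proposal is correct and follows essentially the same route as the paper: exploit permutation invariance to rewrite $\JstatN$ as a population-averaged reward, use stability of $\mu^*$ to rewrite $\Vstat$, apply the $(\lmu+\tfrac{\ls}{2})\|\empdist{t}-\mu^*\|_1$ per-step bound, insert the Step~1 deviation estimate, and justify the infinite sum via dominated convergence. One remark on your final caution about the constant: the paper's own proof, after inserting the Step~1 bound $\Exop[\|\empdist{t}-\mu^*\|_1]\le 2(t+1)\sqrt{|\setS|}/\sqrt{N}$ and summing $\sum_{t\ge 0}\gamma^t(t+1)$, in fact arrives at a factor $(1-\gamma)^{-2}$ rather than the $\gamma/(1-\gamma)$ printed in the statement, so do not be surprised if your careful bookkeeping does not reproduce the stated exponent exactly.
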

\begin{proof}
    For ease of reading, in this proof expectations, probabilities, and laws of random variables will be denoted $\Exop_{\infty}, \Prob_{\infty}, \Law_{\infty}$ respectively over the infinite player finite horizon game and $\Exop_{N}, \Prob_{N}, \Law_{N}$ respectively over the $N$-player game.
    Due to symmetry in the $N$ agent game, any permutation $\sigma: [N] \rightarrow [N]$ of agents does not change their distribution, that is $\Law_N(s_t^1,\dots, s_t^N) = \Law_N(s_t^{\sigma(1)},\dots, s_t^{\sigma(N)})$.
    We can then conclude that:
    \begin{align*}
        \EEN{
        \rwd(s_t^1, a_t^1, \empdist{h})
        }
        &=
        \frac{1}{N}\sum_{i=1}^N
        \EEN{
        \rwd(s_t^i, a_t^i, \empdist{t})
        } \\
        &=
        \EEN{
            \sum_{s
            \in\setS}\empdist{t}(s)\rwdpol(s, \pi_t(s), \empdist{t}).
        }
    \end{align*}
    Therefore, we by definition:
    \begin{align*}
        \JstatNi{1}(\vecpi,\dots,\vecpi) = \EEN{\sum_{t=0}^{\infty}  \sum_{s
            \in\setS}\empdist{t}(s)\rwdpol(s, \pi^*(s), \empdist{t})}.
    \end{align*}
    Next, in the Stat-MFG, we have that for all $t \geq 0$,
    \begin{align*}
        \Prob_\infty(s_t = \cdot) &= \mu^*, \\ \Prob_\infty(s_{t+1} = \cdot) &= \sum_{s\in\setS}\Prob_\infty(s_{t} = s)\Prob_\infty(s_{t} = \cdot|s_{t} = s) \\
        &= \Gamma_P(\Prob_\infty(s_{t} = s), \pi^*) = \mu^*,
    \end{align*}
    so by induction $\Prob_\infty(s_t = \cdot) = \mu^*$.
    Then we can conclude that
    \begin{align*}
        \Vstat(\mu^*, \pi^*) &= \EEinf{\sum_{t=0}^{\infty} \gamma^t R(s_t, \pi^*(s_t), \mu_t)} \\
            &= \sum_{t=0}^{\infty} \gamma^t \sum_{s\in\setS} \mu^*(s) R(s, \pi^*(s), \mu^*),
    \end{align*}
    by a simple application of the dominated convergence theorem.
    We next bound the differences in truncated expect reward until some time $T > 0$:
    \begin{align*}
            \biggl| &\EEN{\sum_{t=0}^{T} \gamma^t \sum_{s
            \in\setS}\empdist{t}(s) \rwdpol(s, \pi^*(s), \empdist{t})} \\
                &- \sum_{t=0}^{T} \gamma^t\sum_{s\in\setS} \mu_t(s) R(s, \pi^*(s), \mu_t)\biggr| \\
        \leq & \EEN{ \sum_{t=0}^{T} \gamma^t \left| \sum_{s
            \in\setS}\left(\empdist{t}(s)\rwdpol(s, \pi^*(s), \empdist{t}) - \mu^*(s) R(s, \pi^*(s), \mu^*) \right) \right|} \\
        \leq &\EEN{ \sum_{t=0}^{T} \gamma^t \left(\frac{L_s}{2}\|\mu^* -\empdist{t}\|_1 + L_\mu \|\mu^* -\empdist{t}\|_1\right)} \\
        \leq & \sum_{t=0}^{T} \gamma^t \left(L_\mu + \frac{L_s}{2}\right) \EEN{ \|\mu^* -\empdist{t}\|_1} \\
        \leq &\frac{1}{(1-\gamma)^2} \left(L_\mu + \frac{L_s}{2}\right) \frac{2\sqrt{|\setS|}}{\sqrt{N}}
    \end{align*}
    Taking $T \rightarrow \infty$ and applying once again the dominated convergence theorem the result is obtained.
\end{proof}

\textbf{Step 3: Bounding difference in policy deviation.}
Finally, to conclude the proof of the main theorem of this section, we will prove that the improvement in expectation due to single-sided policy changes are at most of order $\cO\left(\delta + \frac{1}{\sqrt{N}}\right)$.

\begin{lemma}
    Suppose we have two policy sequences $\pi^*, \pi \in \Pi$ and $\mu^* \in \Delta_\setS$ such that $\Gamma_P(\mu^*, \pi^*) = \mu^*$ and $\Gamma_P(\cdot, \pi^*)$ is non-expansive.
    Then,
    \begin{align*}
    &\left|\JstatNi{1}(\pi',\pi^*,\dots,\pi^*) - \Vstat(\mu^*, \pi')
    \right|\\
    &\leq 
        \sum_{t=0}^{\infty} \gamma^t
        \biggl(
        \lmu 
        \EE{
            \|\empdist{t}-\mu^{\vecpi}_t\|_1
        }
        +
        \kmu
        \sum_{t'=0}^{t-1}
        \EE{\|\empdist{t'}-\mu_{t'}^{\vecpi}\|_1}
        \biggr) \\
    &\leq \left( \frac{K_a}{2N} + \frac{2\sqrt{|\setS|}}{\sqrt{N}} \right)   \frac{\sfrac{L_\mu}{2} + K_\mu }{(1-\gamma)^3} 
    \end{align*}    
\end{lemma}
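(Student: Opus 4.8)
The plan is to mirror the finite-horizon argument of Lemma~\ref{lemma:bound_n_mfg_one_diff_reward}, adapted to the discounted infinite-horizon stationary setting. First I would introduce two processes: in the $N$-player game agent $1$ follows the deviation $\pi'$ while agents $2,\ldots,N$ follow $\pi^*$, producing states $\{s_t^i\}$ and empirical distribution $\empdist{t}$; in the mean-field game a single representative agent follows $\pi'$ against the \emph{fixed} stable population $\mu^*$, so that $s_0\sim\mu^*$, $a_t\sim\pi'(s_t)$ and $s_{t+1}\sim P(s_t,a_t,\mu^*)$ (note that here $\mu_t^{\vecpi}=\mu^*$ for every $t$, since a single agent cannot move the mean field). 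Because rewards lie in $[0,1]$ and $\gamma\in(0,1)$, dominated convergence justifies writing
\[
\left|\JstatNi{1}(\pi',\pi^*,\dots,\pi^*)-\Vstat(\mu^*,\pi')\right|\le\sum_{t=0}^\infty\gamma^t\bigl|\Exop[R(s_t,a_t,\mu^*)-R(s_t^1,a_t^1,\empdist{t})]\bigr|,
\]
reducing the problem to a single per-step reward discrepancy.

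Second, for each $t$ I would split this discrepancy by the triangle inequality into a population term and a state-law term. The population term is controlled by the $\lmu$-Lipschitz continuity of $R$ in $\mu$, contributing $\lmu\,\Exop[\|\mu^*-\empdist{t}\|_1]$; the state-law term compares $R(s_t,\pi'(s_t),\mu^*)$ with $R(s_t^1,\pi'(s_t^1),\mu^*)$, and since $s\mapsto R(s,\pi'(s),\mu^*)$ takes values in $[0,1]$ it is at most $\tfrac12\|\Prob[s_t=\cdot]-\Prob[s_t^1=\cdot]\|_1$. It then remains to bound this state-law gap inductively: at each step the transition kernel, viewed as a fixed stochastic matrix for the common policy $\pi'$, is non-expansive in total variation (always true), so the gap propagates without amplification, while the population mismatch injects an extra $\kmu\,\Exop[\|\empdist{t}-\mu^*\|_1]$ per step. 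This gives $\|\Prob[s_t=\cdot]-\Prob[s_t^1=\cdot]\|_1\le\kmu\sum_{t'=0}^{t-1}\Exop[\|\empdist{t'}-\mu^*\|_1]$ and hence the first inequality of the statement.

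Finally I would substitute the Step~1 empirical-distribution estimate, which for the profile in which only agent $1$ deviates (so that $\Delta_{\pi^*}=\tfrac1N\|\pi^*-\pi'\|_1$) yields $\Exop[\|\mu^*-\empdist{t}\|_1]=\mathcal{O}\!\left(\tfrac{t\ka}{N}+\tfrac{t\sqrt{\setsz{\setS}}}{\sqrt N}\right)$; this is precisely the step that uses the hypothesis that $\gpop{\cdot}{\pi^*}$ is non-expansive, since without it the empirical deviation could grow geometrically rather than linearly in $t$. Plugging this linear-in-$t$ estimate into the first (nested, discounted) sum and using $\sum_t\gamma^t t=\tfrac{\gamma}{(1-\gamma)^2}$ together with the additional factor $(1-\gamma)^{-1}$ produced by the inner summation over $t'$ yields the claimed $(1-\gamma)^{-3}$ scaling. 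The main obstacle is the bookkeeping of this $t$-dependence: because the empirical deviation grows linearly in $t$, one must carry the explicit powers of $t$ through two layers of discounted summation, and verify that non-expansiveness is exactly what keeps that growth linear; the remaining estimates are a routine transcription of the finite-horizon computation.
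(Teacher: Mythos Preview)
Your proposal is correct and follows essentially the same route as the paper: reuse the per-step reward discrepancy bound from the finite-horizon Lemma~\ref{lemma:bound_n_mfg_one_diff_reward} (the $\lmu$ population term plus the inductive $\kmu$ state-law term), pass to the infinite horizon via dominated convergence, and then substitute the Step~1 estimate $\Exop[\|\empdist{t}-\mu^*\|_1]\le \tfrac{t K_a}{2N}+\tfrac{2(t+1)\sqrt{|\setS|}}{\sqrt{N}}$ together with $\sum_t \tfrac{(t+1)(t+2)}{2}\gamma^t\le (1-\gamma)^{-3}$. Your explicit remark that non-expansiveness of $\Gamma_P(\cdot,\pi^*)$ is used only to keep the Step~1 bound linear in $t$ is exactly the point.
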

\begin{proof}
    For the truncated game $T$, it still holds by the derivation in the FH-MFG that:
    \begin{align*}
        |&\EEN{\rwd(s_{t}^1, a_{t}^1,\empdist{t})}
    -
    \EEinf{\rwd(s_{t}, a_{t},\mu_{t}^{\vecpi})}| \\
    &\leq \frac{L_\mu}{2}\Exop_N\biggl[
                \|\mu^{\vecpi}_{t}-
                \empdist{t}\|_1
            \biggr] +  \kmu\sum_{t'=0}^{t-1}
            \Exop_N\biggl[
                \|\mu^{\vecpi}_{t'}-
                \empdist{t'}\|_1
            \biggr].
    \end{align*}
    We take the limit $T\rightarrow\infty$ and apply the dominated convergence theorem to obtain the state bound, also noting that $\sfrac{1}{2}\cdot\sum_t (t+1)(t+2)\gamma^t  \leq \frac{1}{(1-\gamma)^3}$. 
\end{proof}

\textbf{Conclusion and Statement of the Result.}
Finally, if $\mu^*, \pi^*$ is a $\delta$-Stat-MFG-NE, by definition we have that:
    By definition of the Stat-MFG-NE, we have:
    \[
    \delta\geq \Expfin(\vecpi_\delta)
    =
    \max_{\pi' \in \Pi} \Vstat( \mu^*, \pi') - \Vstat ( \mu^*, \pi^* )
    \]
Then using the two bounds from Steps 2,3 and the fact that $\pi^*$ $\delta$-optimal with respect to $\mu^*$:
\begin{align*}
    &\max_{\pi' \in \Pi} \JfinNi{1}( \pi', \pi^*, \ldots, \pi^*) - \JfinNi{1} ( \pi^*, \pi^*,\ldots,\pi^*) \\
    &\leq 2\delta + \left( \frac{K_a}{2N} + \frac{2\sqrt{|\setS|}}{\sqrt{N}} \right)  \frac{\sfrac{L_\mu}{2} + K_\mu }{(1-\gamma)^3} + \frac{L_\mu + \sfrac{L_s}{2}}{(1-\gamma)^2} \biggl(\frac{2\sqrt{|\setS|}}{\sqrt{N}} \biggr)
\end{align*}

\subsection{Lower Bound for Stat-MFG: Extended Proof of Theorem~\ref{theorem:lower_approx_stat}}

Similar to the finite horizon case, we define constructively the counter-example: the idea and the nature of the counter-example remain the same.
However, minor details of the construction are modified, as it will not hold immediately that all agents are on states $\{ \sleft, \sright\}$ on even times $t$, and that the Stat-MFG-NE is unique as before.

\textbf{Defining the Stat-MFG.}
We use the same definitions for $\setS, \setA, \fRew, \fRep,\fProb$ as in the FH-MFG case.
Define the convenience functions $Q_L, Q_R$ as
\begin{align*}
    Q_L(\mu) &:= \frac{\mu(\slA) + \mu(\slB)}{\max\{\mu(\slA) + \mu(\slB) + \mu(\srA) + \mu(\srB), \sfrac{4}{9} \}}, \\
    Q_R(\mu) &:= \frac{\mu(\srA) + \mu(\srB)}{\max\{\mu(\slA) + \mu(\slB) + \mu(\srA) + \mu(\srB), \sfrac{4}{9} \}}.
\end{align*}
We define the transition probabilities:
\begin{align*}
    \text{If } s &\in \{\slA, \slB, \srA, \srB\},  \forall \mu, a:  \\ P(s' | s, a, \mu) &= \begin{cases}
        \fProb(Q_L(\mu)), \text{ if } s' = \sright, s \in \{\slA, \slB\} \\
        \fProb(Q_R(\mu)), \text{ if } s' = \sleft, s \in \{\slA, \slB\} \\
        \fProb(Q_L(\mu)), \text{ if } s' = \sright, s \in \{\srA, \srB\}  \\
        \fProb(Q_R(\mu)), \text{ if } s' = \sleft, s \in \{\srA, \srB\} 
    \end{cases},
\end{align*}
and define $P(\sleft, a, \mu), P(\sright, a, \mu)$ as before.
With previous Lipschitz continuity results, it follows that $P\in \PL_{\sfrac{9}{8\varepsilon}}$.

Similarly, we modify the reward function $R$ as follows:
\begin{align*}
    R(\sleft, \actA, \mu) = &R(\sleft, \actB, \mu) = 0, \\
    R(\sright, \actA, \mu) = &R(\sright, \actB, \mu) = 0, \\
    \begin{pmatrix}
        R(\slA, \actA, \mu) \\
        R(\slB, \actA, \mu)
    \end{pmatrix} = & (1-\alpha-\beta)\fRew\big(Q_L(\mu), Q_R(\mu) \big) +\alpha \fRep(\mu(\slA), \mu(\slB))  \\
    \begin{pmatrix}
        R(\slA, \actB, \mu) \\
        R(\slB, \actB, \mu)
    \end{pmatrix} = & (1-\alpha-\beta)\fRew\big(Q_L(\mu), Q_R(\mu) \big) + \fRep(\mu(\slA), \mu(\slB)) \\
        &+ \beta \vecone \\
    \begin{pmatrix}
        R(\srA, \actA, \mu) \\
        R(\srB, \actA, \mu)
    \end{pmatrix} = &(1-\alpha-\beta)\fRew\big(Q_R(\mu), Q_L(\mu) \big) + \alpha \fRep(\mu(\srA), \mu(\srB)) \\
    \begin{pmatrix}
        R(\srA, \actB, \mu) \\
        R(\srB, \actB, \mu)
    \end{pmatrix} = &(1-\alpha-\beta)\fRew\big(Q_R(\mu), Q_L(\mu) \big) + \alpha \fRep(\mu(\srA), \mu(\srB)) \\
        &+ \beta\vecone,
\end{align*}
simple computation shows that $R\in\RL_3$.
In this proof, unlike the $N$-FH-SAG case, $\alpha$ will be chosen as a function of $N$, namely $\alpha = \mathcal{O}(e^{-N})$.

\textbf{Step 1: Solution of the Stat-MFG.}
We solve the infinite agent game: let $\mu^*,\pi^*$ be an Stat-MFG-NE.
By simple computation, one can see that for any stationary distribution $\mu^*$ of the game, probability must be distributed equally between groups of states $\{\sleft, \sright \}$ and $\{\slA, \slB, \srA, \srB \}$, that is,
\begin{align*}
    \mu^*(\sleft) + \mu^*(\sright) = \sfrac{1}{2} , \\
    \mu^*(\slA) + \mu^*(\slB) + \mu^*(\srA) + \mu^*(\srB) = \sfrac{1}{2}.
\end{align*}
It holds by the stationarity equation $\Gamma_P(\mu^*, \pi^*) = \pi^*$ that
\begin{align*}
    \mu^*(\sleft) = &\mu^*(\slA) + \mu^*(\slB), \\
    \mu^*(\sright) = &\mu^*(\srA) + \mu^*(\srB), \\
    \mu^*(\sleft) = & \sum_{s\in\setS} \mu^*(s) \pi^*(a|s) P(\sleft|s,a,\mu^*) \\
                   = & P(\sleft|\slA,\actA,\mu^*), \\
    \mu^*(\sright) = &\sum_{s\in\setS} \mu^*(s) \pi^*(a|s)P(\sright|s,a,\mu^*)  \\
                    = & P(\sright|\slA,\actA,\mu^*),
\end{align*}
as $P(\sright|s, a,\mu^*) = P(\sright|s,a,\mu^*)$ and similarly $P(\sleft|s, a,\mu^*) = P(\sleft|s,a,\mu^*)$ for any $s\in\{\slA,\slB, \srA, \srB\}, a\in\setA$.
If $\mu^*(\sleft) > \sfrac{1}{4}$, then by definition $P(\sleft|\slA,\actA,\mu^*) < \sfrac{1}{4}$, and similarly if $\mu^*(\sleft) < \sfrac{1}{4}$, then by definition $P(\sleft|\slA,\actA,\mu^*) > \sfrac{1}{4}$.
So it must be the case that $\mu^*(\sleft) = \mu^*(\sright) = \sfrac{1}{4}$.
Then the unique Stat-MFG-NE must be 
\begin{align*}
    \pi^*(a|s) := &
        \begin{cases}
            1, \text{if $a = \actB, s \in \{ \slA, \slB, \srA, \srB\}$} \\
            \frac{1}{2}, \text{if $s\in\{ \sleft, \sright\}$} \\
            0, \text{if $a = \actA, s \in \{ \slA, \slB, \srA, \srB\}$},
        \end{cases} \\
    \mu^*(\srA) &= \mu^*(\slA) = \mu^*(\srB) = \mu^*(\slB) = \sfrac{1}{8},
\end{align*}
as otherwise the action $\argmin_{a\in\setA} \pi^*(a|\sright)$ will be a better response in state $\sright$ and the action $\argmin_{a\in\setA} \pi^*(a|\sleft)$ will be optimal in state $\sright$.

\textbf{Step 2: Expected population deviation in $N$-Stat-SAG.}
We fix $\sfrac{1}{2\varepsilon} = 3$, define the random variable $\widebar{N} := N (\empdist{0}(\sright) + \empdist{0}(\sleft))$.
We will analyze the population under the event $\widebar{N} := \{ \left|\sfrac{\widebar{N}}{N} - \sfrac{1}{2}\right| \leq \sfrac{1}{18}\}$,
which holds with probability $\Omega(1 - e^{-N^2})$ by the Hoeffding inequality.
Under the event $\widebar{E}$, it holds that $\empdist{t}(\slA) + \empdist{t}(\slA) + \empdist{t}(\slA) + \empdist{t}(\slA) > \sfrac{4}{9}$ almost surely at all $t$.

Fix $N_0 \in \mathbb{N}_{>0}$ such that $\left|\sfrac{N_0}{N} - \sfrac{1}{2}\right| \leq \sfrac{1}{18}$, in this step we will condition on $E_0 := \{\widebar{N} := N_0 \}$.
Once again define the random process $X_m$ for $m\in\mathbb{N}_{\geq 0}$ such that
\begin{align*}
    X_m := \begin{cases}
            \frac{\empdist{2m}(\sleft)}{\empdist{2m}(\sleft) + \empdist{2m}(\sright)}, \text{ if $m$ odd} \\
            \frac{\empdist{2m}(\sright)}{\empdist{2m}(\sleft) + \empdist{2m}(\sright)}, \text{ if $m$ even}
        \end{cases}
\end{align*}
with the modification at odd $m$ necessary because of the difference in dynamics $P$ (oscillating between $\sleft,\sright$) from the FH-SAG case.
It still holds that $X_m$ is Markovian, and given $X_m$ we have $N_0 X_{m+1} \sim \operatorname{Binom}(N_0, \fProb(X_m))$.
As before, $X_m$ is independent from the policies of agents.

Define $K := \lfloor \log_2 \sqrt{N_0} \rfloor$, $\setG := \{ \sfrac{k}{N_0} : k = 0,\ldots, N_0 \}$, $\setG_* := \{ 0, 1 \} \subset \setG$ and the level sets once again as
\begin{align*}
    \setG_{-1} := \setG, \quad \setG_k &:= \left\{ x \in \setG :\left|x - \frac{1}{2}\right| \geq \frac{2^k}{2\sqrt{N_0}}  \right\} \text{ when $k \leq K$}, \\
    \setG_{K+1} &:= \setG_*.
\end{align*}
As before, using the Markov property, Hoeffding, and the fact that $|\fProb(x) - \sfrac{1}{2}| \geq \sfrac{1}{2\epsilon} | x - \sfrac{1}{2} |$ we obtain $\forall k \in 0, \ldots, K-1$, $\forall m$ that
\begin{align*}
    \Prob[X_{m+1} \in \setG_0 | X_{m} \in \setG_{-1}, E_0] &\geq \sfrac{1}{20} \\
    \Prob[X_{m+1} \in \setG_{k+1} |X_{m} \in \setG_{k}, E_0] &\geq \alpha_k := 1 - 2 \exp \left\{ - \frac{1}{8}4^{k+1}\right\}, 
\end{align*}
hence from the analysis before we have the lower bound
\begin{align*}
    \Exop[|X_m -\sfrac{1}{2}| \, |E_0] \geq C_1 \, \min \left\{ \frac{2^m}{\sqrt{N_0}}, 1\right\},
\end{align*}
for some absolute constant $C_2 > 0$.

\textbf{Step 3. Exploitability lower bound.}
As in the case of FH-MFG, the ergodic optimal policy is given by 
\begin{align*}
    \widebar{\pi} (a|s) = \begin{cases}
        1, \text{ if $s=\sleft$, $a = \actA$}\\
        1, \text{ if $s=\sright$, $a = \actA$}\\
        1, \text{ if $s\notin \{ \sleft, \sright\}$, $a = \actB$}\\
        0, \text{ otherwise}
    \end{cases}
\end{align*}
We define the shorthand functions
\begin{align*}
\setS^* := \{\sleft, \sright\}, \quad
&Q(\mu):= (Q_L(\mu), Q_R(\mu)),\\
Q_{\text{min}}(\mu) := \min\{Q_L(\mu), Q_R(\mu)\}, \quad &Q_{\text{max}} := \max\{Q_L(\mu), Q_R(\mu)\}.
\end{align*}

We condition on $E_{\setS^*} := \{s^1_0 \in \setS^* \}$, that is the first agent starts from states $\{\sleft, \sright\}$, the analysis will be similar under event $E^c_{\setS^*}$.
As in the case of FH-MFG, due to permutation invariance, it holds for any odd $t$ and $\mu\in\{\mu' \in \Delta_{\setS^*}: N_0 \mu'  \in \mathbb{N}_{>0}^2\}$ that
\begin{align*}
    &\Prob[s_t^1 \in \{\slA,\slB\}|E_0, E_{\setS^*}, Q(\empdist{t}) = \mu] = Q_L(\mu) \\
	&\Prob[s_t^1 \in \{\srA,\srB\}|E_0, E_{\setS^*}, Q(\empdist{t}) = \mu] = Q_R(\mu),
\end{align*}
therefore expressing the error component due to $\fRew$ as $R^{1,\fRew}_t$ and expressing some repeating conditionals as $\bullet$:
\begin{align*}
    \widebar{G}_t^\mu :=&\Exop\left[R^{1,\fRew}_t \middle|E_0, E_{\setS^*}, Q(\empdist{t}) = \mu, a_t^1 \sim \widebar{\pi}(s_t^1), \substack{a_t^i \sim \pi^*(s_t^i),\\ \text{when $i\neq 1$}}\right] \\
    = &\sum_{s\in\setS^*} \Prob[s^1_{t}=s|Q(\empdist{t})=\mu, \bullet] \Exop[R^{1,\fRew}_t|s^1_{t}=s,Q(\empdist{t})=\mu, \bullet] \\
    =&
    \frac{Q_{\text{max}}(\mu)}{Q_{\text{max}}(\mu)} Q_{\text{max}}(\mu) + \frac{Q_{\text{min}}(\mu)}{Q_{\text{max}}(\mu)} Q_{\text{min}}(\mu).
\end{align*}
Similarly, since $\pi^*(a|s) =\sfrac{1}{2}$ for any $s\in\setS^*$, it holds that
\begin{align*}
    G_t^\mu := &\Exop\left[R^{1,\fRew}_t \middle|E_0, E_{\setS^*}, Q(\empdist{t}) = \mu, \substack{a_t^i \sim \pi^*(s_t^i),\\ \text{ $\forall i$}}\right] \\
    = & \frac{1}{2} \frac{Q_{\text{min}}(\mu)}{Q_{\text{max}}(\mu)} + \frac{1}{2} \frac{Q_{\text{max}}(\mu)}{Q_{\text{max}}(\mu)}.
\end{align*}
Therefore, given the population distribution between $\slA, \slB$ and $\srA, \srB$, the expected difference in rewards for the two policies is
\begin{align*}
    \widebar{G}_t^\mu - G_t^\mu = &\left( Q_{\text{max}}(\mu) - \frac{1}{2}\right) + \left(Q_{\text{min}}(\mu) - \frac{1}{2} \right) \frac{Q_{\text{min}}(\mu)}{Q_{\text{max}}(\mu)} \\
    = &\left( Q_{\text{max}}(\mu) - \frac{1}{2}\right) + \left(\frac{1}{2} -  Q_{\text{max}}(\mu)\right) \frac{Q_{\text{min}}(\mu)}{Q_{\text{max}}(\mu)} \\
    = &\left( Q_{\text{max}}(\mu) - \frac{1}{2}\right) \left( 1 - \frac{Q_{\text{min}}(\mu)}{Q_{\text{max}}(\mu)}\right) \\
    \geq &2\left( Q_{\text{max}}(\mu) - \frac{1}{2}\right)^2.
\end{align*}
Therefore from above, we conclude that
\begin{align*}
    \Exop[\widebar{G}_t^{\empdist{t}} - G_t^{\empdist{t}} \, |E_0] &\geq \Exop[2|X_{\frac{t-1}{2}} - \sfrac{1}{2}|^2 \, |E_0, E_{\setS^*}] 
    \geq 2C_1^2 \, \min \left\{ \frac{2^{t}}{2N_0}, 1\right\}.
\end{align*}
Using the lower bound above, the conditional expected difference in discounted total reward is
\begin{align*}
    \Exop&\big[\sum_{t=0}^{\infty} \gamma^{t} R(s^1_t, a^1_t, \empdist{t}) |E_0, E_{\setS^*}, a_t^1 \sim \widebar{\pi}(s_t^1), \substack{a_t^i \sim \pi^*(s_t^i),\\ \text{when $i\neq 1$}}\big] \\
        & \quad - \Exop\big[\sum_{t=0}^{\infty} \gamma^{t} R(s^1_t, a^1_t, \empdist{t}) |E_0, E_{\setS^*}, \substack{a_t^i \sim \pi^*(s_t^i),\\ \text{$\forall i$}}\big] \\
    &\geq (1-\alpha-\beta)\sum_{k=0}^{\infty} 2C_1^2 \, \gamma^{2k+1}\min \left\{ \frac{2^{2k}}{N_0}, 1\right\} -\frac{2\alpha}{1-\gamma} \\
    &\geq \frac{ C_2 }{N_0} \sum_{k=0}^{\lfloor \log_{4} N_0 \rfloor} (4\gamma^2)^k + \frac{ C_3 }{N_0} \sum_{k=\lfloor \log_{4} N_0 \rfloor}^{\infty} \gamma^{2k} - \frac{2\alpha}{1-\gamma}
    \\
    &\geq \frac{ C_4 ((4\gamma^2)^{\log_{4} N_0} - 1)}{N_0} + C_5 \frac{(\gamma^2)^{\log_{4} N_0}N_0^{-1}}{1-\gamma^2} - \frac{2\alpha}{1-\gamma} 
    \\
    & \geq C_6 N_0^{\log_{2} \gamma} + C_7 \frac{N_0^{\log_{2} \gamma - 1}}{1-\gamma} -\frac{2\alpha}{1-\gamma}.
\end{align*}
Taking expectation over $N_0$ (using $\Exop[\widebar{N}|E^*] = \sfrac{N}{2}$ and Jensen's):
\begin{align*}
    \Exop&\big[\sum_{t=0}^{\infty} \gamma^{t} R(s^1_t, a^1_t, \empdist{t}) |E^*, E_{\setS^*}, a_t^1 \sim \widebar{\pi}(s_t^1), \substack{a_t^i \sim \pi^*(s_t^i),\\ \text{when $i\neq 1$}}\big] \\
        &- \Exop\big[\sum_{t=0}^{\infty} \gamma^{t} R(s^1_t, a^1_t, \empdist{t}) |E^*, E_{\setS^*}, \substack{a_t^i \sim \pi^*(s_t^i),\\ \text{$\forall i$}}\big] \\
    \geq & C_6 N_0^{\log_{2} \gamma} + C_7 \frac{N_0^{\log_{2} \gamma - 2}}{1-\gamma} -\frac{2\alpha}{1-\gamma}
\end{align*}

While the analysis above assumes event $E_{\setS^*}$, the same analysis lower bound follows with a shift between even and odd steps when $s^1_0 \notin \setS^*$, hence 
\begin{align*}
    \Exop&\big[\sum_{t=0}^{\infty} \gamma^{t} R(s^1_t, a^1_t, \empdist{t}) |E^*, a_t^1 \sim \widebar{\pi}(s_t^1), \substack{a_t^i \sim \pi^*(s_t^i),\\ \text{when $i\neq 1$}}\big] \\
    &- \Exop\big[\sum_{t=0}^{\infty} \gamma^{t} R(s^1_t, a^1_t, \empdist{t}) |E^*, \substack{a_t^i \sim \pi^*(s_t^i),\\ \text{$\forall i$}}\big] \\
    \geq & C_6 N_0^{\log_{2} \gamma} + C_7 \frac{N_0^{\log_{2} \gamma - 2}}{1-\gamma} -\frac{2\alpha}{1-\gamma}
\end{align*}

Finally, we conclude the proof with the observation
\begin{align*}
    &\max_{\pi} \JstatNi{1}(\pi, \vecpi^*, \ldots, \vecpi^*) - \JfinNi{1}(\vecpi^*, \vecpi^*, \ldots, \vecpi^*) \\
     \geq & \JstatNi{1}(\widebar{\pi}, \vecpi^*, \ldots, \vecpi^*) - \JfinNi{1}(\vecpi^*, \vecpi^*, \ldots, \vecpi^*) \\
    \geq & C_6 N_0^{\log_{2} \gamma} + C_7 \frac{N_0^{\log_{2} \gamma - 2}}{1-\gamma} -\frac{2\alpha}{1-\gamma} - (1-\gamma)^{-1} \Prob[\widebar{E}^c], 
\end{align*}
where $\Prob[\widebar{E}^c] = O(e^{-N^2})$ and we pick $\alpha = \cO(e^{-N})$.

\section{Intractability Results}

\subsection{Fundamentals of PPAD}

We first introduce standard definitions and tools, mostly taken from \cite{daskalakis2009complexity, goldberg2011survey, papadimitriou1994complexity}.

\paragraph{Notations.}
For a finite set $\Sigma$, we denote by $\Sigma^n$ the set of tuples $n$ elements from $\Sigma$, and by $\Sigma^* = \bigcup_{n\geq 0} \Sigma^n $ the set of finite sequences of elements of $\Sigma$.
For any $\alpha \in \Sigma$, let $\alpha^n \in \Sigma^n$ denote the $n$-tuple $(\underbrace{\alpha, \ldots, \alpha} _ {\text{$n$ times}})$.
For $x \in \Sigma^*$, by $|x|$ we denote the length of the sequence $x$.
Finally, the following function will be useful, defined for any $\alpha > 0$:
\begin{align*}
    u_\alpha: &\mathbb{R} \rightarrow [0, \alpha]\\
    u_\alpha(x) &:= \max\{ 0, \min \{ \alpha, x\}\} =  \begin{cases}
        \alpha, \text{ if $x \geq \alpha$}, \\
        x, \text{ if $0\leq x \leq \alpha$}, \\
        0, \text{ if $x \leq 0$}.
    \end{cases}
\end{align*}

We define a search problem $\setS$ on alphabet $\Sigma$ as a relation from a set $\setI_\setS \subset \Sigma^*$ to $\Sigma^*$ such that for all $x \in \setI_\setS$, the image of $x$ under $\setS$ satisfies $\setS_x \subset \Sigma^{|x|^k}$ for some $k \in \mathbb{N}_{> 0}$, and given $y \in \Sigma^{|x|^k}$m whether $y\in \setS_x$ is decidable in polynomial time.

Intuitively speaking, PPAD is the complexity class of search problems that can be shown to always have a solution using a ``parity argument'' on a directed graph.
The simplest complete example (the example that defines the problem class) of PPAD problems is the computational problem \textsc{End-of-The-Line}. 
The problem, formally defined below, can be summarized as such: given a directed graph where each node has in-degree and out-degree at most one and given a node that is a source in this graph (i.e., no incoming edge but one outgoing edge), find another node that is a sink or a source.
Such a node can be always shown to exist using a simple parity argument.

\begin{definition}[\textsc{End-of-The-Line} \cite{daskalakis2009complexity}]
    The computational problem \textsc{End-of-The-Line} is defined as follows: given two binary circuits $S, P$ each with $n$ input bits and $n$ output bits such that $P(0^n) = 0^n \neq S(s^n)$, find an input $x \in \{0,1\}^n$ such that $P(S(x)) \neq x \text{ or } S(P(x)) \neq x \neq 0^n$.
\end{definition}

The obvious solution to the above is to follow the graph node by node using the given circuits until we reach a sink: however, this can take exponential time as the graph size can be exponential in the bit descriptions of the circuits.
It is believed that \textsc{End-of-The-Line} is difficult \cite{goldberg2011survey}, that there is no efficient way to use the bit descriptions of the circuits $S, P$ to find another node with degree 1.

\subsection{Proof of Intractability of Stat-MFG}

We reduce any \CompGC{} problem to the problem \CompStat{} for some simple transition function $P \in \setP^{\text{Sim}}$.

Let $(\setV,\setG)$ be a generalized circuit to be reduced to a stable distribution computation problem.
Let $V = |\setV| \geq 1$.
We will define a game that has at most $V + 1$ states and $|\setA| = 1$ actions, that is, agent policy will not have significance, and it will suffice to determine simple transition probabilities $P(s'|s,\mu)$ for all $s,s'\in\setS, \mu \in \Delta_\setS$.

\newcommand{\sbase}{s_\text{base}}

The proposed system will have a base state $\sbase \in \setS$ and 1 additional state $s_v$ associated with the gate whose output is $v \in \setV$.
Our construction will be sparse: only transition probabilities in between states associated with a gate and $\sbase$ will take positive values.
We define the useful constants $\theta := \frac{1}{8V}, B:= \frac{1}{4}$.

Given an (approximately) stable distribution $\mu^*$ of $P$, for each vertex $v$ we will read the satisfying assignment for the \CompGC{} problem by the value $u_1(\theta^{-1}\mu^*(s_v))$.
For each possible gate, we define the following gadgets.

\paragraph{Binary assignment gadget.}
For a gate of the form $G_{\leftarrow}(\zeta||v)$, we will add one state $s_v$ such that
\begin{align*}
    \text{If $\zeta = 1:$ } &\begin{cases}
        P(\sbase|s_v, \mu) = 1, \\
        P(s_v|s_v, \mu) = 0, \\
        P(s_v | \sbase, \mu) = \frac{\theta}{\max\{B, \,\mu(\sbase)\}}
    \end{cases} \\
    \text{If $\zeta = 0:$ } &\begin{cases}
        P(\sbase|s_v, \mu) = 1, \\
        P(s_v|s_v, \mu) = 0, \\
        P(s_v | \sbase, \mu) = 0 
    \end{cases}
\end{align*}

\paragraph{Weighted addition gadget.}
Next, we implement the addition gadget $G_{\times,+}(\alpha, \beta | v_1, v_2 | v)$ for $\alpha,\beta \in [-1, 1]$.
In this case, we also add one state $s_v$ to the game, and define the transition probabilities:
\begin{align*}
    P(\sbase|s_v, \mu) &= 1, \\
    P(s_v|s_v, \mu) &= 0, \\
    P(s_v | \sbase, \mu) &=  \frac{u_\theta(\alpha u_\theta(\mu(v_1)) + \beta u_\theta(\mu(v_2)))}{\max\{B, \,\mu(\sbase)\}}
\end{align*}

\paragraph{Brittle comparison gadget.}
For the comparison gate $G_{<}(|v_1, v_1|v)$, we also add one state $s_v$ to the game.
Define the function $ p_\delta: [-1,1] \rightarrow [0,1]$
\begin{align*}
    p_\delta (x, y) := u_1 \left( \frac{1}{2} +  \delta^{-1} (x - y) \right),
\end{align*}
for any $\delta > 0$.
In particular, if $x \geq y + \delta$, then $p_\delta (x, y) = 1$, and if $x \leq y - \delta$, then $p_\delta (x, y) = 0$.
We define the probability transitions to and from $s_v$ as
\begin{align*}
    P(s_v | \sbase, \mu) &= \frac{ \theta p_{8\varepsilon} (\theta^{-1} u_\theta(\mu(s_1)), \theta^{-1} u_\theta(\mu(s_2)))}{\max\{B, \,\mu(\sbase)\}} , \\ 
    P(s_v | s_v, \mu) &= 0, \\
    P(\sbase | s_v, \mu) &= 1.
\end{align*}

Finally, after all $s_v$ have been added, we complete the definition of $P$ by setting
\begin{align*}
    P(\sbase|\sbase,\mu) = 1 - \sum_{s' \in \setS} P(s'|\sbase,\mu).
\end{align*}

We first verify that the above assignment is a valid transition probability matrix for any $\mu\in \Delta_\setS$.
It is clear from definitions that for any $\mu,s\neq \sbase$, $P(\cdot|s,\mu)$ is a valid probability distribution as long as $8\varepsilon < 1$.
Moreover, for any $s \neq \sbase$, it holds that $0 \leq P(s|\sbase,\mu) \leq \frac{\theta}{B} < 1$, and it also holds that
\begin{align*}
    P(\sbase|\sbase,\mu) = 1 - \sum_{s' \in \setS} P(s'|\sbase,\mu) \geq 1 - \frac{V \theta}{B} \geq 0
\end{align*}
so $P(\cdot|\sbase,\mu)$ is a valid probability transition matrix.
Finally, the defined transition probability function $P$ is Lipschitz in the components of $\mu$, and $P$ can be defined as a composition of simple functions, hence $P \in \PSim$.
Finally, in this defined MFG, it holds that $V+1 = |\setS|$, since for each gate in the generalized circuit we defined one additional state.

\paragraph{Error propagation.}
We finally analyze the error propagation of the stationary distribution problem in terms of the generalized circuit.
Without loss of generality we assume $\varepsilon < \frac{1}{8}$.
First, for any solution of the \CompStat{} problem $\mu^*$, whenever $\varepsilon < \frac{1}{8}$, it must hold that:
\begin{align*}
    \left| \mu^*(\sbase) - \sum_{s' \in \setS} \mu^*(s) P(\sbase| s, \mu^*)\right| \leq \frac{1}{8|\setS|},
\end{align*}
hence (using $V < |\setS|$) we have the lower bound on $\mu^*(\sbase)$ given by:
\begin{align*}
    \mu^*(\sbase) \geq &\sum_{s \in \setS} \mu^*(s) P(\sbase| s, \mu^*) - \frac{1}{8V} \\
    \geq & \mu^*(\sbase) P(\sbase| \sbase, \mu^*) + \sum_{s \neq \sbase} \mu^*(s) P(\sbase| s, \mu^*) - \frac{1}{8V} \\
    \geq &\mu^*(\sbase) \left(1 - \frac{V\theta}{B}\right) + \sum_{s \neq \sbase} \mu^*(s) - \frac{1}{8V} \\
    \geq &\mu^*(\sbase) \left(1 - \frac{V\theta}{B}\right) + (1 - \mu^*(\sbase)) - \frac{1}{8V} \\
    & \implies \mu^*(\sbase) \geq \frac{1 - \frac{1}{8V}}{1 + \frac{V\theta}{B}} \geq B = \frac{1}{4}.
\end{align*}
We will show that a solution of the \CompStat{} can be converted into a $\varepsilon'$-satisfying assignment 
\begin{align*}
    v \rightarrow u_1\left(\frac{\mu^*(s_v)}{\theta}\right),
\end{align*}
for some appropriate $\varepsilon'$ to be defined later.

\textbf{Case 1: Binary assignment error.}
First, assume $G_{\leftarrow}(\zeta||v) \in \setG$
If $\zeta = 1$, since $\mu^*$ is a $\varepsilon$ stable distribution we have 
\begin{align*}
    | \mu^*(s_v) - \mu^*(\sbase) P (s_v|\sbase, \mu^*)| &\leq \frac{\varepsilon}{|\setS|} \\
    \left| \mu^*(s_v) - \mu^*(\sbase) \frac{\theta}{\max\{B , \mu^*(\sbase)\}} \right| &\leq \frac{\varepsilon}{|\setS|}  \\
    \left| \mu^*(s_v) -  \theta \right| &\leq \frac{\varepsilon}{|\setS|} \\
    \left| \frac{\mu^*(s_v)}{\theta} -  1 \right| &\leq \frac{\varepsilon}{\theta|\setS|} \leq \frac{\varepsilon}{\theta V} \leq8\varepsilon,
\end{align*}
where we used the fact that $\frac{\theta}{\max\{B , \mu^*(\sbase)\}} =  \mu^*(\sbase) $.
and it follows by definition that $|u_1\left(\frac{\mu^*(s_v)}{\theta}\right) -  1 | \leq 8\varepsilon$, since the map $u_1$ is 1-Lipschitz and therefore can only decrease the absolute value on the left.
Likewise, if $\zeta = 0$,
\begin{align*}
    | \mu^*(s_v) - \sum_{s\in \setS}\mu^*(s) P (s_v|s, \mu^*)| &\leq \frac{\varepsilon}{|\setS|} \\
    | \mu^*(s_v)| &\leq \frac{\varepsilon}{|\setS|} \\
     \left| \frac{\mu^*(s_v)}{\theta} \right| &\leq \frac{\varepsilon}{\theta|\setS|} \leq 8\varepsilon
\end{align*}
and once again $u_1\left(\frac{\mu^*(s_v)}{\theta}\right) \leq 8\varepsilon$.

\textbf{Case 2: Weighted addition error.}
Assume that $G_{\times,+}(\alpha, \beta | v_1, v_2 | v) \in \setG$, and set $\square := u_\theta(\alpha u_\theta(\mu(v_1)) + \beta u_\theta(\mu(v_2)))$.
Using the fact that $\| \mu^* - \Gamma_P(\mu^*)\|\leq \frac{\varepsilon}{|\setS|}$,
\begin{align*}
    |\mu^*(s_v) - \sum_{s\in \setS} \mu^*(s) P(s_v|s,\mu^*) | &\leq \frac{\varepsilon}{|\setS|}, \\
    \left|\mu^*(s_v) - \mu^*(\sbase) \frac{u_\theta(\alpha u_\theta(\mu(v_1)) + \beta u_\theta(\mu(v_2)))}{\max\{B, \,\mu(\sbase)\}} \right| &\leq \frac{\varepsilon}{|\setS|}, \\
    \left|\frac{\mu^*(s_v)}{\theta} - \frac{\square}{\theta} \right| &\leq \frac{\varepsilon}{|\setS|\theta},
\end{align*}
which implies 
\begin{align*}
    \left|u_1\left(\frac{\mu^*(s_v)}{\theta}\right) - u_1\left(\alpha u_1\left(\frac{\mu^*(v_1)}{\theta}\right) + \beta u_1\left(\frac{\mu^*(v_2)}{\theta}\right)\right)\right| \leq 8\varepsilon.
\end{align*}

\textbf{Case 3: Brittle comparison gadget.}
Finally, we analyze the more involved case of the comparison gadget.
Assume $G_{<}(|v_1, v_2|v) \in \setG$.
The stability conditions for $s_v$ yield:
\begin{align*}
    |\mu^*(s_v) - \mu^*(\sbase) P(s_v|\sbase, \mu^*) | \leq &\frac{\varepsilon}{|\setS|} \\
    |\mu^*(s_v) - \theta p_{8\varepsilon} (\theta^{-1} u_\theta(\mu^*(v_1)), \theta^{-1} u_\theta(\mu^*(v_2))) | \leq &\frac{\varepsilon}{|\setS|}
\end{align*}
We analyze two cases: $u_1(\theta^{-1}\mu^*(v_1)) \geq u_1(\theta^{-1}\mu^*(v_2)) + 8\varepsilon$ and $u_1(\theta^{-1}\mu^*(v_1)) \leq u_1(\theta^{-1}\mu^*(v_2)) - 8\varepsilon$.
In the first case, we obtain
\begin{align*}
    \theta^{-1}u_{\theta}(\mu^*(v_1)) \geq\theta^{-1}u_{\theta}(\mu^*(v_2)) + 8\varepsilon,
\end{align*}
which implies by the definition of $p_{8\varepsilon}$
\begin{align*}
    |\mu^*(s_v) - \theta | \leq &\frac{\varepsilon}{|\setS|} \\
    |u_1(\theta^{-1}\mu^*(s_v)) - 1 | \leq &\frac{\varepsilon}{|\setS|\theta} \\
    u_1(\theta^{-1}\mu^*(s_v)) \geq & 1 -\frac{\varepsilon}{|\setS|\theta} \geq 1 - 8\varepsilon.
\end{align*}
In the second case $u_1(\theta^{-1}\mu^*(v_1)) \leq u_1(\theta^{-1}\mu^*(v_2)) - 8\varepsilon$, it follows by a similar analysis that
\begin{align*}
    u_1(\theta^{-1}\mu^*(s_v)) \leq & \frac{\varepsilon}{|\setS|\theta} \leq 8\varepsilon.
\end{align*}

Hence, in the above, we reduced the \CompGCeps{$8\varepsilon$} problem to the \CompStateps{$\varepsilon$} problem, completing the proof that \CompStat{} is \ppadhard{}.
The fact that \CompStat{} is in \ppad{} on the other hand easily follows from the fact that \CompStat{} is the fixed point problem for the (simple) operator $\Gamma_P$, reducing it to the \CompEOL{} problem by a standard construction \cite{daskalakis2009complexity}.

\subsection{Proof of Intractability of FH-MFG}

\newcommand{\stbv}[1]{s_{#1, \text{base}}}
\newcommand{\stvone}[1]{s_{#1,1}}
\newcommand{\stvzero}[1]{s_{#1,0}}

As in the previous section, we reduce any \CompGC{} problem $(\setG, \setV)$ to the problem \CompFHeps{$\varepsilon^2$}{$2$} for some simple reward $R\in \setR^{\text{Sim}}$.
Once again let $V = |\setV|$.

Associated with each $v \in \setV$ we define $\stvone{v}, \stvzero{v}, \stbv{v} \in \setS$.
The initial distribution is defined as
\begin{align*}
    \mu_0(\stbv{v}) = \frac{1}{V}, \forall v \in \setV,
\end{align*}
and we define two actions for each state: $\setA = \{a_1, a_0 \}$.
The state transition probability matrix is given by
\begin{align*}
    P(s|\stbv{v}, a) &= \begin{cases}
        1, \text{ if } a = a_1, s = \stvone{v}, \\
        1, \text{ if } a = a_0, s = \stvzero{v}, \\
        0, \text{ otherwise.}
    \end{cases} \\
    P(\stbv{v}| s, a) &= 0, \forall v\in\setV, s\in\setS, a\in\setA,
\end{align*}
and an $\varepsilon$ satisfying assignment $p:\setV \rightarrow [0,1]$ will be read by $p(v) = \pi_1^*(a_1|\stbv{v})$ for the optimal policy $\vecpi^* = \{ \pi_h \}_{h=0}^1$.
We will specify population-dependent rewards $R \in \RL^{\text{Simple}}$, since $R$ will not depend on the particular action but only the state and population distribution, we will concisely denote $R(s,a,\mu) = R(s,\mu)$.
It will be the case that
\begin{align*}
    R(\stbv{v}, \mu) = 0, \forall v \in \setV, \mu \in \Delta_\setS.
\end{align*}
We assign $R(\stvone{v},\mu) = R(\stvzero{v}, \mu) = 0, \forall \mu$ for any vertex $v$ of the generalized circuit that is not the output of any gate in $\setG$.

\paragraph{Binary assignment gadget.}
For any binary assignment gate $G_{\leftarrow}(\zeta||v)$, we assign
\begin{align*}
    R(\stvone{v}, \mu) &= \zeta, \\
    R(\stvzero{v}, \mu) &= 1- \zeta, \forall \mu \in \Delta_\setS.
\end{align*}

\paragraph{Weighted addition gadget.}
For any gate $G_{\times,+}(\alpha, \beta | v_1, v_2 | v)$, 
\begin{align*}
    R(\stvone{v}, \mu) &= u_1(u_1(\alpha V \mu(\stvone{v_1}) + \beta V \mu(\stvone{v_2})) - V\mu(\stvone{v})), \\
    R(\stvzero{v}, \mu) &= u_1(V\mu(\stvone{v}) -u_1(\alpha V\mu(\stvone{v_1}) + \beta V\mu(\stvone{v_2}))),
\end{align*}
for all $\mu \in \Delta_\setS$.

\paragraph{Brittle comparison gadget.}
For any gate $G_{<}(|v_1, v_2|v)$, we define the rewards for states $\stvone{v}, \stvzero{v}$ as
\begin{align*}
    R(\stvone{v}, \mu) &= u_1(V\mu(\stvone{v_2}) - V\mu(\stvone{v_1})), \\
    R(\stvzero{v}, \mu) &= u_1(V\mu(\stvone{v_1}) - V\mu(\stvone{v_2})), \forall \mu\in \Delta_\setS.
\end{align*}

Now assume that $\vecpi^* = \{ \pi^*_h \}_{h=0}^1$ is a solution to the \CompFHeps{$\varepsilon^2$}{$2$} problem and $\vecmu^* = \Lambda_{P,\mu_0}^2(\vecpi^*)$, that is, assume that for all $\vecpi \in \Pi^2$,
\begin{align*}
    \Vfin(\vecmu^*, \vecpi) - \Vfin(\vecmu^*, \vecpi^*) \leq \frac{\varepsilon^2}{V}.
\end{align*}
Firstly, if $\mu_1^*$ is induced by $\vecpi^*$, it holds that $\forall v \in \setV$,
\begin{align*}
    \mu_1^*(\stbv{v}) = 0, \quad
    &\mu_1^*(\stvone{v}) = \frac{1}{V} \pi_0^*(\stvone{v}|\stbv{v}), \\
    \mu_1^*&(\stvzero{v}) = \frac{1 - \pi_0^*(\stvone{v}|\stbv{v})}{V} .
\end{align*}

\newcommand{\pibrzero}{\pi_0^{\text{br}}}
\newcommand{\pibrone}{\pi_1^{\text{br}}}
\newcommand{\pibrvec}{\vecpi^{\text{br}}}

Furthermore, a policy $\pibrvec \in \Pi_2$ that is the best response to $\vecmu^* := \{ \mu^*_0, \mu^*_1 \}$ can be always formulated as:
\begin{align*}
    \pibrzero (a_1 | \stbv{v}) &= \begin{cases}
        1, \text{ if $R(\stvone{v}, \mu_1^*) > R(\stvone{v}, \mu_1^*)$}, \\
        0, \text{otherwise}
    \end{cases} \\
    \pibrzero (a_0 | \stbv{v}) &= 1- \pibrzero (a_1 | \stbv{v}), \\
    \pibrone (a_1 | \stbv{v}) &= 1, \\
    \pibrone (a_0 | \stbv{v}) &= 0.
\end{align*}
By the optimality conditions, we will have
\begin{align*}
    \Vfin(\vecmu^*, \pibrvec) - \Vfin(\vecmu^*, \vecpi^*) \leq \frac{\varepsilon^2}{V}.
\end{align*}
Furthermore, for any $v\in\setV$ it holds that
\begin{align*}
    &\Vfin(\vecmu^*, \pibrvec) - \Vfin(\vecmu^*, \vecpi^*) \\
    & = \sum_{v \in \setV} \mu_0(\stbv{v}) [ \max_{s \in \{\stvone{v}, \stvzero{v} \}} R(s, \mu^*_1)    \\
        & \qquad - \pi^*_0(a_1|\stbv{v}) R(\stvone{v}, \mu^*_1) - \pi^*_0(a_0|\stbv{v}) R(\stvzero{v}, \mu^*_1)] \\
    \geq & \frac{1}{V} \max_{s \in \{\stvone{v}, \stvzero{v} \}} R(s, \mu^*_1) \\
    \qquad &- \frac{1}{V} \pi^*_0(a_1|\stbv{v}) R(\stvone{v}, \mu^*_1) - \frac{1}{V} \pi^*_0(a_0|\stbv{v}) R(\stvzero{v}, \mu^*_1)
\end{align*}
as the summands are all positive.
We prove that all gate conditions are satisfied case by base.
Without loss of generality, we assume $\varepsilon < 1$ below.

\textbf{Case 1.} 
It follows that for any $v\in\setV$ such that $G_{\leftarrow}(\zeta||v) \in \setG$, we have
\begin{align*}
    \frac{1}{V} - \frac{1}{V} \pi^*_0(a_1|\stbv{v}) \zeta - \frac{1}{V} \pi^*_0(a_0|\stbv{v}) (1 - \zeta)  &\leq \frac{\varepsilon^2}{V} \\
    1 - \pi^*_0(a_1|\stbv{v}) \zeta -  (1- \pi^*_0(a_1|\stbv{v})) (1 - \zeta)  &\leq \varepsilon^2\\
    \zeta (1 - 2 \pi^*_0(a_1|\stbv{v}) ) + \pi^*_0(a_1|\stbv{v}) \leq \varepsilon^2 \leq \varepsilon.
\end{align*}
The above implies $\pi^*_0(a_1|\stbv{v}) \geq 1 - \varepsilon$ if $\zeta = 1$, and if $\zeta = 0$, it implies $\pi^*_0(a_1|\stbv{v}) \leq \varepsilon$.

\textbf{Case 2.}
For any $v\in\setV$ such that $G_{\times,+}(\alpha, \beta | v_1, v_2 | v) \in \setG$, denoting in short
\begin{align*}
    \square &:= u_1(\alpha V\mu_1^*(\stvone{v_1}) + \beta V\mu_1^*(\stvone{v_2})) \\
        &= u_1(\alpha \pi^*_0(a_1|\stvone{v_1}) + \beta \pi^*_0(a_1|\stvone{v_2})), \\
    p_1 &:= \pi^*_0(a_1|\stbv{v})\\
    p_0 &:= \pi^*_0(a_0|\stbv{v})
\end{align*}
we have
\begin{align*}
    \frac{1}{V} &\max\big\{u_1(V\mu_1^*(\stvone{v}) - \square), u_1( \square - V\mu_1^*(\stvone{v}))\big \} \\
        &- \frac{1}{V} \pi^*_0(a_1|\stbv{v}) u_1( \square - V\mu_1^*(\stvone{v})) \\
        &- \frac{1}{V} \pi^*_0(a_0|\stbv{v}) u_1(V\mu_1^*(\stvone{v}) - \square) \leq \varepsilon^2,
\end{align*}
or equivalently
\begin{align*}
    \max\big\{u_1(p_1 - \square), &u_1( \square - p_1 )\big \} - p_1 u_1( \square - p_1 ) - p_0 u_1(p_1 - \square) \leq \varepsilon^2.
\end{align*}
First, assume it holds that $p_1 \leq \square$, then:
\begin{align*}
    u_1(\square - p_1) - p_1 u_1( \square - p_1 ) \leq &\varepsilon^2 \\
    (1 - p_1) (\square - p_1) \leq &\varepsilon^2.
\end{align*}
The above implies that either $p_1 \geq 1 - \varepsilon$ or $u_1(\square - p_1) \leq \varepsilon$, both cases implying $| \square - p_1 | \leq \varepsilon$ since we assume $\square \geq p_1$.
To conclude case 2, assume that $\square < p_1$, then 
\begin{align*}
    u_1(p_1 - \square) - (1-p_1) u_1(p_1 - \square) &\leq \varepsilon^2, \\
    p_1 (p_1 - \square) &\leq \varepsilon^2,
\end{align*}
then either $p_1 \leq \varepsilon$ or $p_1 - \square \leq \varepsilon$, either case implying once again $| \square - p_1 | \leq \varepsilon$.

\textbf{Case 3.}
Finally, for any $v \in \setV$ such that $G_{<}(|v_1, v_2|v) \in \setG$,
\begin{align*}
     \frac{1}{V} & \max\Big\{ u_1(\mu(\stvone{v_2}) - \mu(\stvone{v_1})), u_1(\mu(\stvone{v_1}) - \mu(\stvone{v_2})) \Big\} \\
        &- \frac{1}{V} \pi^*_0(a_1|\stbv{v}) u_1(\mu(\stvone{v_1}) - \mu(\stvone{v_2})) \\
     &- \frac{1}{V} \pi^*_0(a_0|\stbv{v}) u_1(\mu(\stvone{v_2}) - \mu(\stvone{v_1})) \leq \varepsilon 
\end{align*}
hence once again using the shorthand notation:
\begin{align*}
    \triangle &:= V\mu^*_1(\stvone{v_2}) - V\mu^*_1(\stvone{v_1}) = \pi^*_0(a_1|\stvone{v_2}) - \pi^*_0(a_1|\stvone{v_1})\\
    p_1 &:= \pi^*_0(a_1|\stbv{v})\\
    p_0 &:= \pi^*_0(a_0|\stbv{v})
\end{align*}
we have the inequality:
\begin{align*}
u_1(|\triangle|) - p_1 u_1(\triangle) - p_0 u_1(-\triangle) &\leq \varepsilon^2 \\
    u_1(|\triangle|) - p_1 u_1(\triangle) - (1- p_1)u_1(-\triangle) &\leq \varepsilon^2.
\end{align*}
First assume $\triangle \geq \varepsilon$, then
\begin{align*}
    u_1(\triangle) (1 - p_1) \leq \varepsilon^2 \implies
    1 - \varepsilon \leq p_1,
\end{align*}
and conversely if $\triangle \leq - \varepsilon$,
\begin{align*}
    u_1(-\triangle) p_1 \leq \varepsilon^2 \implies
    p_1 \leq \varepsilon,
\end{align*}
concluding that the comparison gate conditions are $\varepsilon$ satisfied for the assignment $v \rightarrow \pibrzero(a_1|\stbv{v})$.

The three cases above conclude that $v \rightarrow \pibrzero(a_1|\stbv{v})$ is an $\varepsilon$-satisfying assignment for the generalized circuit $(\setV, \setG)$, concluding the proof that \CompFHeps{$\varepsilon_0$}{$2$} is \ppadhard{} for some $\varepsilon_0 >0$.
The fact that \CompFHeps{$\varepsilon_0$}{$2$} is in \ppad{} follows from the fact that the NE is a fixed point of a simple map on space $\Pi_2$, see for instance \cite{huang2023statistical}.

\subsection{Proof of Intractability of \texorpdfstring{\CompFHLinear{2}}{2-FH-Linear}}

Our reduction will be similar to the previous section, however, instead of reducing a $\varepsilon$-\textsc{GCircuit} to an MFG, we will reduce a $2$ player general sum normal form game, $2$-\textsc{Nash}, to a finite horizon mean field game with linear rewards with horizon $H=2$ (\CompFHLinear{2}). 
Let $\varepsilon > 0, K_1, K_2 \in \mathbb{N}_{>0}, A,B\in\mathbb{R}^{K_1, K_2}$ be given for a \CompTwoNash{} problem.
We assume without loss of generality that $K_1 > 1$, as otherwise, the solution of \CompTwoNash{} is trivial.

\newcommand{\sbaseA}{s_{\text{base}}^1}
\newcommand{\sbaseB}{s_{\text{base}}^2}

This time, we define finite horizon game with $K_1 + K_2 + 2$ states, denoted $\setS := \{\sbaseA, \sbaseB, s^1_1,\ldots, s^1_{K_1}, s^2_{1}, \ldots, s^2_{K_2}\}$.
Without loss of generality, we can assume $K_1 \leq K_2$.
The action set will be defined by $\setA = [K_2] = \{ 1, \ldots, K_2\}$.
The initial state distribution will be given by $\mu_0(\sbaseA) = \mu_0(\sbaseB) = \sfrac{1}{2}$, with $\mu_0(s) = 0$ for all other states.
We define the transitions for any $s \in \setS, a, a' \in \setA$ as:
\begin{align*}
    P(s|\sbaseA, a) &= \begin{cases}
        1, \text{ if $s = s^1_a$ and $a \leq K_1$}, \\
        1, \text{ if $s = s^1_a$ and $a > K_1$}, \\
        0, \text{otherwise.}
    \end{cases} \\
    P(s|\sbaseB, a) &= \begin{cases}
        1, \text{ if $s = s^2_a$}, \\
        0, \text{otherwise.}
    \end{cases} \\
    P(s|s^1_a, a') = \begin{cases}
        1, \text{ if $s = s^1_a$}, \\
        0, \text{otherwise.}
    \end{cases} 
    & \qquad 
    P(s|s^2_a, a') = \begin{cases}
        1, \text{ if $s = s^2_a$}, \\
        0, \text{otherwise.}
    \end{cases}
\end{align*}
Finally, we will define the linear reward function as for all $a \in [K_2]$:
\begin{align*}
    R(\sbaseA, a, \mu) &= 0, \\
    R(\sbaseB, a, \mu) &= 0, \\
    R(s^1_a, a, \mu) &= \begin{cases}
        0, \text{if $a > K_1$}, \\
        \frac{1}{2} + \frac{1}{2} \sum_{a' \in [K_2]} \mu(s^2_{a'}) A_{a, a'} 
    \end{cases} \\
    R(s^2_a, a, \mu) &=  \frac{1}{2} + \frac{1}{2} \sum_{a' \in [K_1]} \mu(s^1_{a'}) B_{a', a}.
\end{align*}
In words, the states $\sbaseA, \sbaseB$ represent the two players of the \CompTwoNash{}, and an agent starting from one of the initial base states $\sbaseA, \sbaseB$ of the FH-MFG at round $h=0$ will be placed at $h=1$ at a state representing the (pure) strategies of each player respectively.

Given the game description above, assume $\vecpi^* = \{ \pi_h^*\}_{h=0}^1$ is an $\varepsilon$ solution of the \CompFHLinear{2}.
Then, it holds for the induced distribution $\vecmu^* := \{ \mu^*_h \}_{h=0}^1 = \Lambdaop$ that:
\begin{align*}
    \mu_0^* &= \mu_0, \\
    \mu^*_1 (s) &= \sum_{s',a' \in \setS \times \setA} \mu_0(s') \pi^*(a'|s') P(s|s',a') \\
        &= \begin{cases}
        \frac{1}{2} \pi_0(i|\sbaseA), \text{ if $s=s^1_i$, for some $i\in[K_1]$}, \\
        \frac{1}{2} \pi_0(i|\sbaseB), \text{ if $s=s^2_i$, for some $i\in[K_2]$}, \\
        \frac{1}{2} - \frac{1}{2}\sum_{i\in[K_1]}\pi_0(i|\sbaseA), \text{ if $s=\sbaseA$,} \\
        0, \text{otherwise.}
    \end{cases}
\end{align*}
By definition of the $\varepsilon$ finite horizon Nash equilibrium, 
\begin{align*}
    \Expfin (\vecpi^*) &:= \max_{\vecpi' \in \Pi^H} \Vfin( \Lambda^H_P (\vecpi^*), \vecpi') - \Vfin ( \Lambda^H_P (\vecpi^*), \vecpi ) \leq \varepsilon,
\end{align*}
in particular, it holds for any $\vecpi \in \Pi_2$ that
\begin{align}
     \Vfin( \vecmu^*, \vecpi) - \Vfin ( \vecmu^*, \vecpi^* ) \leq \varepsilon. \label{eq:linfh_comp}
\end{align}
By direct computation, the value functions $\Vfin$ can be written directly in this case for any $\pi$:
\begin{align*}
    \Vfin( \vecmu^*, \vecpi) = &\frac{1}{2} \sum_{a \in [K_1]} \pi_0(a|\sbaseA) \left(\frac{1}{2} + \frac{1}{2} \sum_{a' \in [K_2]} \mu_1^*(s^2_{a'}) A_{ a, a'} \right) \\
        &+\frac{1}{2} \sum_{a' \in [K_2]}\pi_0(a'|\sbaseB) 
    \left(\frac{1}{2} + \frac{1}{2} \sum_{a \in [K_1]} \mu_1^*(s^1_{a}) B_{a, a'} \right)
    \\
     = & \frac{1}{4}\left( 1 + \sum_{a\in[K_1]} \pi_0(a|\sbaseA)\right) \\
        &+ \frac{1}{8} \sum_{a \in [K_1]} \sum_{a' \in [K_2]} \pi_0(a|\sbaseA)
     \pi^*_0(a'|\sbaseB)A_{ a, a'} \\
     &+ \frac{1}{8} \sum_{a \in [K_1]} \sum_{a' \in [K_2]} \pi_0(a'|\sbaseB) 
     \pi^*_0(a|\sbaseA) B_{a, a'} 
\end{align*}
We analyze two different cases, accounting for a possible imbalance between the strategy spaces of the two players, $[K_1]$ and $[K_2]$.

\textbf{Case 1.} Assume $K_1 = K_2$.
Then, $\Vfin(\vecmu^*, \vecpi)$ simplifies to
\begin{align}
    \Vfin(\vecmu^*, \vecpi) = &\frac{1}{2} + \frac{1}{8} \sum_{a \in [K_1]} \sum_{a' \in [K_2]} \pi_0(a|\sbaseA)
     \pi^*_0(a'|\sbaseB)A_{ a, a'} \notag 
     \\
     &+ \frac{1}{8} \sum_{a \in [K_1]} \sum_{a' \in [K_2]} \pi_0(a'|\sbaseB) 
     \pi^*_0(a|\sbaseA) B_{a, a'}. \label{eq:k1k2eq_vfindiff}
\end{align}
Take an arbitrary mixed strategy $\sigma_1 \in \Delta_{[K_1]}$ and define the policy $\vecpi_A = \{\pi_{A,h} \}_{h=0}^1 \in \Pi^2$ so that
\begin{align*}
    \pi_{A,0}(\sbaseA) = \sigma_1, \quad
    \pi_{A,0}(\sbaseB) = \pi^*_0(\sbaseB), \quad 
    \pi_{A,1} = \pi^*_1.
\end{align*}
Then, placing $\vecpi_A$ in equations \eqref{eq:k1k2eq_vfindiff} and \eqref{eq:linfh_comp}, it follows that
\begin{align}
    \sum_{a \in [K_1]} &\sum_{a' \in [K_2]} \sigma_1(a)
     \pi^*_0(a'|\sbaseB)A_{ a, a'} \notag
     \\
     &- \sum_{a \in [K_1]} \sum_{a' \in [K_2]} \pi^*_0(a|\sbaseA)
     \pi^*_0(a'|\sbaseB)A_{ a, a'} \leq 8\varepsilon. \label{eq:k1k2eq_res1}
\end{align}
Similarly, for any $\sigma_2 \in \Delta_[K_2]$, replacing $\vecpi$ in equations \eqref{eq:k1k2eq_vfindiff} and \eqref{eq:linfh_comp} with a policy $\vecpi_B$ such that 
\begin{align*}
    \pi_{B,0}(\sbaseA) = \pi^*_0(\sbaseA), \quad
    \pi_{B,0}(\sbaseB) = \sigma_2, \quad
    \pi_{B,1} = \pi^*_1,
\end{align*}
we obtain
\begin{align}
    \sum_{a \in [K_1]} &\sum_{a' \in [K_2]} \sigma_2(a) 
     \pi^*_0(a'|\sbaseA) B_{a, a'} \notag \\ 
        &- \sum_{a \in [K_1]} \sum_{a' \in [K_2]} \pi^*_0(a'|\sbaseB) 
     \pi^*_0(a|\sbaseA) B_{a, a'} \leq 8\varepsilon. \label{eq:k1k2eq_res2}
\end{align}
Hence, the resulting equations \eqref{eq:k1k2eq_res1}, \eqref{eq:k1k2eq_res2} imply that in this case the strategy profile $(\pi^*_0(\sbaseA), \pi^*_0(\sbaseB))$ is a $8\varepsilon$-Nash equilibrium for the normal form game defined by matrices $A,B$.

\textbf{Case 2.}
Next, we analyze the case when $1 < K_1 < K_2$.
If $\sum_{a'\in [K_1]} \pi^*_0(a'|\sbaseA) = 0$, then the policy
\begin{align*}
    \pi'_{0}(1|\sbaseA) = 1, \quad
    \pi'_{0}(\sbaseB) = \pi^*_0(\sbaseB), \quad 
    \pi'_{1} = \pi^*_1.
\end{align*}
yields an exploitability of at least $\sfrac{1}{4}$, so by taking $\varepsilon$ smaller than $\sfrac{1}{4}$ we can discard this possibility.

Otherwise, we define a policy $\vecpi_C = \{\pi_{C,h} \}_{h=0}^1 \in \Pi^2$ such that
\begin{align*}
    \pi_{C,0}(a|\sbaseA) = \begin{cases}
        \frac{\pi^*_0(a|\sbaseA)}{\sum_{a'\in [K_1]} \pi^*_0(a'|\sbaseA) }, \text{ if $a \in [K_1]$}, \\
        0, \text{otherwise.}
    \end{cases}\\
    \pi_{C,0}(\sbaseB) = \pi^*_0(\sbaseB), \quad
    \pi_{C,1} = \pi^*_1,
\end{align*}
and replace $\vecpi$ in Equation \eqref{eq:linfh_comp} with $\vecpi_C$ to obtain:
\begin{align*}
     &\frac{1}{4} - \frac{1}{4}S \\
        &+
     \frac{1}{8} \left(S^{-1} - 1\right)\sum_{a \in [K_1]} \sum_{a' \in [K_2]} \pi^*_0(a|\sbaseA)
     \pi^*_0(a'|\sbaseB)A_{ a, a'} \leq \varepsilon
\end{align*}
where $S := \sum_{a'\in [K_1]} \pi^*_0(a'|\sbaseA) < 1$, hence
\begin{align*}
    1-S = \sum_{a'\in [K_2] - [K_1]} \pi^*_0(a'|\sbaseA) \leq 4\varepsilon.
\end{align*}
Now for some $\sigma_1 \in \Delta_{[K_1]}$, once again take the policy $\vecpi_{A}$ defined in Case 1, and use Inequality \eqref{eq:linfh_comp} to obtain:
\begin{align*}
    \frac{1}{4} (1 - S) + &\frac{1}{8} \sum_{a \in [K_1]} \sum_{a' \in [K_2]} \sigma_1(a)
     \pi^*_0(a'|\sbaseB)A_{ a, a'} \\
        &- \frac{1}{8} \sum_{a \in [K_2]} \sum_{a' \in [K_2]} \pi^*_0(a|\sbaseA)
     \pi^*_0(a'|\sbaseB)A_{ a, a'}
    \leq \varepsilon \\
    &\sum_{a \in [K_1]} \sum_{a' \in [K_2]} \sigma_1(a)
     \pi^*_0(a'|\sbaseB)A_{ a, a'} \\
    &- \sum_{a \in [K_1]} \sum_{a' \in [K_2]} \pi^*_0(a|\sbaseA)
     \pi^*_0(a'|\sbaseB)A_{ a, a'}
    \leq 8\varepsilon.
\end{align*}
Here, using the definition of $\vecpi_C$, as $\pi_{C,0}(a|\sbaseA) \geq \pi^*_0(a|\sbaseA)$ for $a \in [K_1]$, we obtain:
\begin{align*}
    &\sum_{a \in [K_1]} \sum_{a' \in [K_2]} \sigma_1(a)
     \pi_{C,0}(a'|\sbaseB)A_{ a, a'}
    \\
    &- \sum_{a \in [K_1]} \sum_{a' \in [K_2]} \pi_{C, 0}(a|\sbaseA)
     \pi_{C,0}(a'|\sbaseB)A_{ a, a'}
    \leq 8\varepsilon.
\end{align*}

Next take $\vecpi_B$ as defined above in Case 1 for any arbitrary $\sigma_2 \in \Delta_{[K_2]}$ and use the Inequality~\ref{eq:linfh_comp}:
\begin{align*}
    \sum_{a' \in [K_2]} &\sum_{a \in [K_1]} \sigma_2(a')
     \pi^*_0(a|\sbaseA) B_{ a, a'} \\
        &- \sum_{a \in [K_1]} \sum_{a' \in [K_2]} \pi^*_0(a|\sbaseA)
     \pi^*_0(a'|\sbaseB) B_{ a, a'}
    \leq 8\varepsilon \\
    \sum_{a \in [K_1]} &\sum_{a' \in [K_2]} \sigma_2(a')
     \pi_{C,0}(a|\sbaseA) B_{ a, a'} \\
    &- \sum_{a \in [K_1]} \sum_{a' \in [K_2]} \pi_{C,0}(a|\sbaseA)
     \pi_{C,0}(a'|\sbaseB) B_{ a, a'}
    \leq \frac{8\varepsilon}{S} \leq \frac{8\varepsilon}{1 - 4\varepsilon}.
\end{align*}
Assuming without loss of generality that $\varepsilon < \frac{1}{8}$, it follows that $\pi_{C,0}(\sbaseA), \pi_{C,0}(\sbaseB)$ is a $16\varepsilon$ solution to the \CompTwoNash{}.

\end{document}